\DeclareMathOperator*{\argmin}{arg\,min}
\title[]{Distributed Learning and Function Fusion in Reproducing Kernel Hilbert Space}
\newtheorem*{assumption}{Assumption}
\author{%
 \Name{Aneesh Raghavan} \Email{aneesh@kth.se}\\
 \addr DCS Division, KTH, Royal Institute of Technology, Stockholm
 \AND
 \Name{Karl {Henrik Johansson}} \Email{kallej@kth.se}\\
 \addr DCS Division, KTH, Royal Institute of Technology, Stockholm
}
\begin{document}

\maketitle

\begin{abstract}
We consider the problem of function estimation by a multi-agent system comprising of two agents and a fusion center. Each agent receives data comprising of samples of an independent variable (input) and the corresponding values of the dependent variable (output). The data remains local and is not shared with other members in the system. The objective of the system is to collaboratively estimate the function from the input to the output. To this end, we develop an iterative distributed algorithm for this function estimation problem. Each agent solves a local estimation problem in a Reproducing Kernel Hilbert Space (RKHS) and uploads the function to the fusion center. At the fusion center, the functions are fused by first estimating the data points that would have generated the uploaded functions and then subsequently solving a least squares estimation problem using the estimated data from both functions. The fused function is downloaded by the agents and is subsequently used for estimation at the next iteration along with incoming data. This procedure is executed sequentially and stopped when the difference between consecutively estimated functions becomes small enough. To analyze the algorithm, we define learning operators for the agents, fusion  center and the system. We study the asymptotic properties of the norm of the learning operators and find sufficient conditions under which they converge to $1$. Given a sequence of data points, we define and prove the existence of the learning operator for the system. We prove that the porposed learning algorithm is consistent and demonstrate the same using an example.  
\end{abstract}

\begin{keywords}
Distributed Regression, RKHS, Consistency  
\end{keywords}

\section{Introduction}\label{Section 1}
Distributed learning algorithms develop models where multiple instances of the same model are trained using different subsets of the training data set or parallel paths of a single model are trained at multiple nodes using the same or different data sets. Algorithms often focus on  parallelizing computing for efficient learning, e.g.,  \cite{chan1993toward}, \cite{verbraeken2020survey}, \cite{peteiro2013survey}. For the fusion of the models, some algorithms use tools from meta-learning. Federated learning has emerged as an efficient approach for distributed learning using heterogeneous data. Has found applications in IoT, healthcare etc, e.g. \cite{liu2022distributed}, \cite{nguyen2021federated}.

Multimodal learning algorithms use data sets obtained using multiple kinds of sensors for training models. For example, images and 3D depth scans can be used for edge detection,and, audio and visual data for speech recognition, see, \cite{baltruvsaitis2018multimodal}, \cite{ngiam2011multimodal}. \cite{lanckriet2004learning} study the problem of learning a kernel matrix using data, from the space of kernel matrices generated by linear combinations of known kernel matrices. Following the same idea, multiple kernel learning (MKL) algorithms have been investigated. Multimodal learning using kernel methods has been applied to disease detection \citep{duan2012domain}, \citep{liu2013multiple},  sentiment analysis \citep{poria2017ensemble}, emotion recognition \citep{sikka2013multiple}, etc. Most of these algorithms are centralized, i.e., data from sensors are collected and analyzed simultaneously at one location.

Formal approaches to study knowledge and it is properties are crucial for development of intelligent collaborative multi-agent systems \citep{rosenschein1985formal}. (Dynamic) Epistemic logic is used as a formal language to describe knowledge and learning formally \citep{van2007dynamic}. Knowledge can be viewed as mapping from the set of events  to the set $\{0,1, \varnothing\}$, i.e., if an event is true, false or its validity is unknown. Given input-output data, the function estimated from it  enables us to state which events of the form ``input = $x$ and output = $y$" are true and which are false as long as $(x,y)$ belongs to the domain of the estimated function. In this spirit, we use the term \textit{Knowledge} for the mapping learned and the term \textit{Knowledge Space} for the function space where the agent is learning. 

The problem considered in this paper is as follows. There are two agents, Agent 1 and Agent 2, receiving data comprising of an independent variable and the corresponding value of a dependent variable. The data is received sequentially, one sample at a time. There exists a fusion center that can communicate with each agent. When an agent transmits to a fusion center, it is referred to as \textit{upload operation}. When the fusion center transmits to the an agent, it is referred to as \textit{download operation}. The data collected by the agents is private to the agents and is not shared with other agents in the system. Our objective is to develop an iterative collaborative function estimation scheme which eventually converges to an estimate of the mapping from the input to the output.

For motivation, we consider the following example from distributed SLAM \cite{chellali2013distributed}, \cite{tian2022kimera}, \cite{lajoie2020door}. There are two or more agents in an environment with different onboard sensors whose individual aim during the learning phase  is to find a mapping from their true position  to the sensor output. During the execution phase, this mapping could be utilized for planning and completion of tasks. During the learning phase, each agent is restricted to survey a certain region of the environment they live in and are aware of the predominant features of the map from the position to the sensor output. By collecting data, they each estimate a map from position to sensor output. These maps can be viewed as partial knowledge of the environment. The main objective of this paper is to come with an algorithm which can fuse local knowledge at agents to obtain global knowledge of the system. 

Our contributions are as follows. We develop a learning scheme for the problem. Given the downloaded function at stage $n-1$ and data point at stage $n$, each agent estimates the mapping from the input to the output by solving a least-squares regression problem. The estimated functions are uploaded to the fusion center. At the fusion center, the data received by the agents is estimated from the functions received considering that the agents would have performed optimal estimation. Using the estimated data points, a least-squares regression problem is solved to obtain the fused function. The fused function is downloaded on to the knowledge space of each agent. $n$ is incremented by 1 and the sequence is repeated. We define learning operators for each member of the system and for the system itself. We present sufficient conditions under which the norm of the operators converge to $1$. We prove that the sequence of learning operators at every iteration is uniformly bounded. We prove the consistency of the learning scheme and the existence of fixed point, i.e., a set of initial estimates for which given a data sequence the eventually learned function is the same as the initial estimates. We present an example demonstrating the algorithm. Note that, the data collected by the agents is transformed into the estimated functions and forgotten. At the fusion center, when the data the could have generated these functions are estimated it is relearned. Our previous work, \cite{raghavan2023distributed}, is a special case of the work presented here as in the former, we considered learning in the same knowledge space for the both agents, the  estimation problem was a one shot problem and the fusion problem was an optimization problem over linear combinations of the functions estimated by the agents. 

The organization of the paper is as follows. In Section \ref{Section 2}, we describe the learning architecture, the estimation problems and the learning algorithm. In Section \ref{Section 3}, we present the main results including the solution to the estimation problems and the consistency of the learning algorithm. In Section \ref{Section 4}, we present an example demonstrating the learning algorithm. We conclude with some comments and future work in Section \ref{Section 5}. Notation: for a function $f\in V$, $V$ vector space, we use the notation $f$ when it is treated as a vector and the notation $f(\cdot)$ when it is treated as a function. The null element of a vector space, $V$, is denoted by $\theta_{V}$. The span of vectors $\{v_{j}\}^{j=n}_{j=1} \subset V$ is denoted as $\text{Span}\Big(\{v_{j}\}^{j=n}_{j=1}\Big)$. The projection onto a subspace $\mathcal{M}$ of a Hilbert space $H$ is denoted by  $\Pi_{\mathcal{M}}$. The dual space of a vector space is denoted by $V^*$.
\section{The Distributed Learning System}\label{Section 2}
In this section, we describe the learning system. We begin with the description of the learning architecture, followed by the estimation and fusion problems, and finally present the learning algorithm.
\subsection{The Learning Architecture}\label{Subsection 2.1}
\begin{figure}
\begin{center}
\includegraphics[scale=0.45]{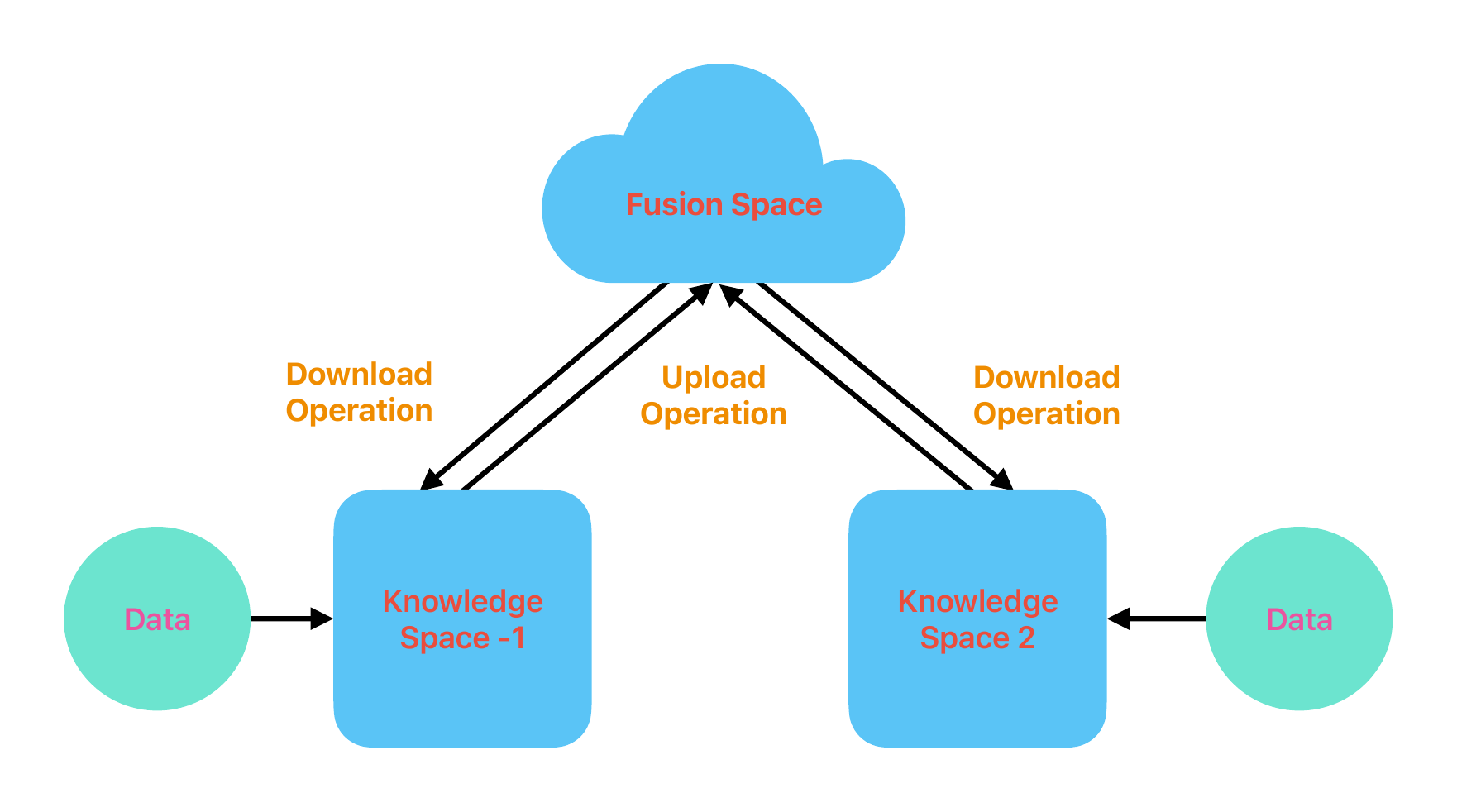}
\caption{The Learning Architecture}\label{Figure 1}
\end{center}
\end{figure}
The learning system comprises of two agents, Agent $1$ and Agent $2$, and a fusion center.  Let $\mathcal{X} \subset \mathbb{R}^{I}$. The set of features for agent $i$ is a set of continuous functions, $\{\varphi^{i}_{j}(\cdot)\}_{j \in \mathcal{I}^i}$, where $\varphi^{i}_{j} : \mathcal{X} \to \mathbb{R}$ and $| \mathcal{I}^i | < \infty$. For agent $i$, we assume that the set of features are linearly independent. Let $K^{i}(x,y) = \sum_{j \in \mathcal{I}^i}\varphi^{i}_{j}(x)\varphi^{i}_{j}(y)$ be the kernel for agent $i$ and the corresponding RKHS generated by it be $(H^i, \langle \cdot, \cdot \rangle_{H^{i}}, ||\cdot ||_{ H^{i}})$. Then, the knowledge space at Agent $i$ is considered to be the RKHS, $H^i$, with kernel $K^{i}$, Figure \ref{Figure 1}. Given the downloaded function from iteration $n-1$, $\bar{f}^i_{n-1}$ at Agent $i$ and the data point $(x^{i}_n, y^{i}_n)$ at iteration $n$, each agent solves an estimation problem (subsection \ref{Subsection 2.2}) to arrive at the estimate $f^{i}_{n}$. Given the RKHS $H^1$ and $H^2$ with kernels $K^{1}$ and $K^{2}$, an RKHS with kernel $K^{1} + K^{2}$ can be constructed, Theorem \ref{Theorem 18}. The knowledge space at the fusion center, the fusion space is the RKHS, $H$, with kernel $K = K^{1} + K^{2}$. The locally estimated functions, $f^{1}_n$ and $f^{2}_n$ are uploaded to the fusion space using the operators, $\hat{L}^{1}$ and $\hat{L}^2$ respectively, Corollary \ref{Corollary 19}. The uploaded functions are fused in the fusion space (subsection \ref{Subsection 2.3}) to obtain $f_{n}$. The fused function is downloaded onto the KS of Agent $i$ using the download operator $\sqrt{\bar{L}^{i}} \circ \Pi_{\mathcal{N}\big(\sqrt{\bar{L}^i}\big)^{\perp}}$, Theorem \ref{Theorem 23}, and is denoted as $\bar{f}_{n}$. $\bar{f}_{n}$ is considered as the final estimate at the agents at iteration $n$. The Theorems and the associated corollaries invoked in the construction  of the knowledge spaces have been proved and discussed in detail in our previous work, \citep{raghavan2024distributed}, which is under review. Hence we only state the results in the Appendix (Section \ref{Section 6}).
\subsection{Estimation at the agents} \label{Subsection 2.2}
In this subsection, we discuss the estimation problem at the agents. At iteration $n$, given the data point $(x^{i}_{n},y^{i}_{n})$ and the downloaded function from iteration $n-1$, $\bar{f}^{i}_{n-1}$, the objective of each agent is to minimize the error square between the true output and the estimated output at the received input data point while simultaneously minimizing the norm square between the downloaded function at stage $n-1$ and the current estimate. Thus, the estimation problem for agent $i$ is, 
\begin{align*}
(P1)^{i}_{n}:\; \underset{f^{1}_{n} \in H^{i}} \min C^{i}(f^{i}_n), \;  C(f^{i}_n) = (y^{i}_{n} - f^{i}_{n}(x^{i}_{n}))^{2} +  \varrho^{i}_{n}|| f^{i}_{n} - \bar{f}^{i}_{n-1} ||^{2}_{H^i}
\end{align*}
\subsection{Fusion Problem}\label{Subsection 2.3}
From Corollary \ref{Corollary 19}, the function uploaded by Agent $i$ to $H$ is $f^{i}_{n}$, with different norm. Invoking Proposition \ref{Proposition 24}, the function uploaded by agent $i$ can be expressed as $f^{i}_{n} = \sum^{m}_{j=1}\alpha^{i}_{n,j}K^{i}(\cdot,\bar{x}^{i}_{j})$. Given $f^{1}_{n}$ and $f^{2}_{n}$, the first goal of the fusion center is to estimate the data points $\{(\hat{x}^{i}_{n,j},\hat{y}^{i}_{n,j})\}^{m}_{j=1}$ which under optimal estimation would result in the functions $f^{1}_{n}$ and $f^{2}_{n}$ being estimated. We formulate the problem as a feasibility problem:
\begin{align*}
(P2)^{i}_{n}: \;& \underset{\{(\hat{x}^{i}_{n,j},\hat{y}^{i}_{n,j})\}^{m}_{j=1}} \min \bar{c}^{i}\\
&\text{s.t} \;  f^{i}_n = \underset{g^{i}_{n} \in H^{i}} \argmin \sum^{m}_{j=1} (\hat{y}^{i}_{n,j} - g^{i}_{n}(\hat{x}^{i}_{n,j}))^2 + \varrho_{n}||g^{i}_{n}||^{2}
\end{align*}
Given $\{(\hat{x}^{1}_{n,j},\hat{y}^{1}_{n,j})\}^{m}_{j=1} \cup \{(\hat{x}^{2}_{n,j},\hat{y}^{2}_{n,j})\}^{m}_{j=1} $, the fusion problem as a least squares regression problem is:
\begin{align*}
(P3)_{n}:\; \underset{f_{n} \in H} \min \; C(f_{n}), \;  C(f_{n}) = \sum_{i=1,2} \sum^{m}_{j=1} (\hat{y}^{i}_{n,j} - f_{n}(\hat{x}^{i}_{n,j}))^2 + \varrho_{n} ||f_{n} ||^{2}_{H}.
\end{align*}
From Theorem \ref{Theorem 23}, it follows that the downloaded function at Agent $i$ is $\sqrt{\bar{L}^{i}} \circ \Pi_{\mathcal{N}\big(\sqrt{\bar{L}^i}\big)^{\perp}} (f_{n})$. From Proposition \ref{Proposition 24} it follows that, function downloaded can be expressed as $\bar{f}^{i}_{n} =\sum^m_{j=1}\bar{\alpha}^{i}_{n,j} K^{i}(\cdot, \bar{x}^{i}_{j})$ which is useful in solving problem $(P1)^{i}_n$ as demonstrated in Proposition \ref{Proposition 2}. 
\subsection{The Learning Algorithm}\label{Subsection 2.4}
The learning algorithm is described in Algorithm \ref{Algorithm 1}. Lines $1-10$ of the algorithm have been described while describing the learning architecture, in subsection \ref{Subsection 2.1}. Lines $11-21$, execute the stopping criterion. Essentially, it is verified that the difference in norm between the downloaded functions at any two iterations in $[n- k_{\max}, n]$ is upper bounded by $2\epsilon$. Consider $n_{1}, n_{2} \in [n- k_{\max}, n]$. 
\begin{align*}
|| \bar{f}^{1}_{n_1} -  \bar{f}^{1}_{n_2}||_{H^1} + || \bar{f}^{2}_{n_1} -  \bar{f}^{2}_{n_2}||_{H^2} \leq  \sum_{i=1,2}|| \bar{f}^{i}_{n_1} -\bar{f}^{i}_{n - k_{\max}} || + || \bar{f}^{i}_{n_2} - \bar{f}^{i}_{n - k_{\max}} || < 2\epsilon
\end{align*}
Thus, the stopping criterion approximately verifies that the sequence is Cauchy in norm.  
\begin{algorithm}
\caption{Distributed Learning Algorithm in RKHS}
\begin{algorithmic}[1]\label{Algorithm 1}
\STATE Initialize $\bar{f}^{1}_{0}, \bar{f}^{2}_{0}, \epsilon, \max, k_{\max}$ where $\max > \epsilon$
\STATE $n \gets 0$
\WHILE {$\max \geq \epsilon$}
\STATE $n \gets n+1$
\STATE Agent $i$ collects sample $(x^{i}_n; y^{i}_n)$. 
\STATE Agent $i$ solves optimization problem $(P1)^{i}_{n}$ to find $f^{i}_{n}$
\STATE Agent $i$ uploads $f^{i}_{n}$ to fusion space. 
\STATE Feasibility problem $(P2)^{i}_n$ is solved find to $\{(\bar{x}^{i}_{j}; \bar{y}^{i}_{j})\}^{m}_{j=1}$, $i=1,2$.
\STATE Fusion problem $(P3)_n$ is solved to find $f_n$.
\STATE $f_{n}$ is downloaded onto knowledge space of agent $i$, $H^{i}$ as $\bar{f}^{i}_n$.
\IF    {$n \geq k_{\max}$}
\STATE $j \gets 1$
\WHILE {$j \leq k_{\max}$}
\STATE $\text{temp} \; = \; || \bar{f}^{1}_{n - k_{\max} +j} -  \bar{f}^{1}_{n - k_{\max}}||+ || \bar{f}^{2}_{n - k_{\max} +j} -  \bar{f}^{2}_{n - k_{\max}}||$
\IF    {$ \text{temp} > \max$}
\STATE $\max \gets \text{temp}$
\ENDIF
\STATE $j \gets j+1$
\ENDWHILE
\ENDIF
\ENDWHILE
\end{algorithmic}
\end{algorithm}
\noindent We now define consistency of the learning scheme. 
\begin{definition}\label{Definition 1}
Given any pair of initial estimates $(f^{1};f^{2})$ and any sequence of data points $(\hspace{-1pt}\{(x^{1}_n; y^{1}_{n})\\ \}, \{(x^{2}_n; y^{2}_{n})\})$,  a learning scheme is said to be strongly consistent if there is a subsequence of the learned functions that converges strongly to a function in the knowledge space. 
\end{definition}
\begin{assumption}
We assume that $\mathcal{X}$ is a closed, connected set with no isolated points. We assume that the RKHS $H^{1}$ and $H^{2}$ are finite dimensional. 
\end{assumption}
In the following, whenever the results can be extended to infinite dimensional spaces, it has been highlighted. 
\section{Main Results}\label{Section 3}
In this section, we present solutions to the problems formulated in subsections \ref{Subsection 2.2} and \ref{Subsection 2.3}. We develop the theory, find sufficient conditions and prove results which are crucial to prove that Algorithm \ref{Algorithm 1} is consistent.
\subsection{Estimation at the Agents}\label{Subsection 3.1}
In this section, we present the solution to the estimation problem at the agents. We define and study the asymptotic properties of the learning operator at the agents. For each agent $i$, let $\mathbf{K^{i}} = (K^{i}(\bar{x}^{i}_{j}, \bar{x}^{i}_{k}))_{jk}= (\langle K^{i}(\cdot, \bar{x}^i_k), K^i(\cdot, \bar{x}^i_{j})  \rangle_{H^i})$. Since the inner product is symmetric, $\mathbf{K^{i}} \in \mathbb{R}^{m \times m}$ is symmetric matrix. Let $\mathbf{\bar{K}^{i}}: \mathcal{X} \to \mathbb{R}^m$ be defined as $\mathbf{\bar{K}^{i}}(\cdot) = [K^{i}(\cdot, \bar{x}^{i}_{1}), \ldots, K^{i}(\cdot, \bar{x}^{i}_{m})]$. For any data point, $\mathbf{\bar{K}^{i}}(x^{i}_n) = [K^{i}(x^{i}_{n}, \bar{x}^{i}_{1}); \ldots; K^{i}(x^{i}_{n}, \bar{x}^{i}_{m})]$ is a column vector in $\mathbb{R}^{m}$. Given vector $\boldsymbol{\alpha^i} = [\alpha^{i}_{1}; \ldots; \alpha^{i}_{m}] \in \mathbb{R}^{m}$, we use notation $f^{i} = \boldsymbol{\alpha^T} \mathbf{\bar{K}^{i}}(\cdot)$ for the function $f^{i} =\sum^{m}_{j=1} \alpha^{i}_{j}K(\cdot, \bar{x}^i_j) \in H^{i}.$
\begin{proposition}\label{Proposition 2}
Let $ \boldsymbol{\alpha^{1,*}_{n}} =  \Big( \varrho^{i}_{n}\mathbf{K^{i}}+ \mathbf{\bar{K}^{i}}(x^{i}_n) \mathbf{\bar{K}^{i^T}}(x^{i}_n) \Big)^{-1} \Big( \mathbf{\bar{K}^{i}}(x^{i}_n) y^{i}_{n} +  \varrho^{i}_{n}\mathbf{K^{i}} \boldsymbol{\bar{\alpha}^{1}_{n-1}}\Big)$. Then, $f^{i,*}_{n} = \boldsymbol{\alpha^{{1,*}^T}_{n}} \mathbf{\bar{K}^{i}}(\cdot)$ solves the optimization problem in problem $(P1)^{i}_n$. 
\end{proposition}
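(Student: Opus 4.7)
The strategy is to reduce $(P1)^i_n$ to a finite-dimensional quadratic problem by parametrizing $f^i_n$ in the basis $\{K^i(\cdot,\bar x^i_j)\}_{j=1}^m$. Since $H^i$ is finite dimensional, and since (per the discussion preceding Proposition~\ref{Proposition 24}) the downloaded function $\bar f^i_{n-1}$ already has the form $\boldsymbol{\bar\alpha^1_{n-1}}^T\mathbf{\bar K^i}(\cdot)$, this family of representers spans the knowledge space in which we search, so writing $f^i_n=\boldsymbol{\alpha}^T\mathbf{\bar K^i}(\cdot)$ loses no generality. If one preferred, the same conclusion can be drawn from a representer-theorem argument: any component of $f^i_n$ orthogonal to $\mathrm{Span}\{K^i(\cdot,\bar x^i_j), K^i(\cdot,x^i_n)\}$ adds to $\varrho^i_n\|f^i_n-\bar f^i_{n-1}\|^2_{H^i}$ without affecting the first term, so the optimum lies in this span, which in turn equals $\mathrm{Span}\{K^i(\cdot,\bar x^i_j)\}$ under the spanning hypothesis.

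Next I would rewrite the cost in terms of $\boldsymbol{\alpha}$. The reproducing property gives
\begin{equation*}
f^i_n(x^i_n)=\langle f^i_n,K^i(\cdot,x^i_n)\rangle_{H^i}=\boldsymbol{\alpha}^T\mathbf{\bar K^i}(x^i_n),
\end{equation*}
while the definition of $\mathbf{K^i}$ as the Gram matrix yields
\begin{equation*}
\|f^i_n-\bar f^i_{n-1}\|^2_{H^i}=(\boldsymbol{\alpha}-\boldsymbol{\bar\alpha^1_{n-1}})^T\mathbf{K^i}(\boldsymbol{\alpha}-\boldsymbol{\bar\alpha^1_{n-1}}).
\end{equation*}
The objective $C^i$ is thus a convex quadratic in $\boldsymbol{\alpha}$.

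Then I would differentiate with respect to $\boldsymbol{\alpha}$ and set the gradient to zero, obtaining the normal equation
\begin{equation*}
\bigl(\varrho^i_n\mathbf{K^i}+\mathbf{\bar K^i}(x^i_n)\mathbf{\bar K^i}(x^i_n)^T\bigr)\boldsymbol{\alpha}=\mathbf{\bar K^i}(x^i_n)\,y^i_n+\varrho^i_n\mathbf{K^i}\boldsymbol{\bar\alpha^1_{n-1}},
\end{equation*}
which, once inverted, is exactly the expression claimed for $\boldsymbol{\alpha^{1,*}_n}$. Invertibility of the coefficient matrix follows because $\mathbf{K^i}$ is positive definite (the features $\varphi^i_j$, and hence the representers $K^i(\cdot,\bar x^i_j)$ under the linear-independence assumption, form a linearly independent family, so their Gram matrix is nonsingular), $\varrho^i_n>0$, and the rank-one term $\mathbf{\bar K^i}(x^i_n)\mathbf{\bar K^i}(x^i_n)^T$ is positive semidefinite; the sum is therefore positive definite. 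Strict convexity of $C^i$ then guarantees that the critical point is the unique global minimizer.

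The main subtlety, and essentially the only nontrivial step, is the parametrization in the first paragraph: one must argue that restricting the search to $\mathrm{Span}\{K^i(\cdot,\bar x^i_j)\}_{j=1}^m$ does not lose optimality. Everything else is a standard quadratic minimization. Once the parametrization is justified, the derivation of $\boldsymbol{\alpha^{1,*}_n}$ is a routine gradient computation, and writing $f^{i,*}_n=\boldsymbol{\alpha^{1,*}_n}^T\mathbf{\bar K^i}(\cdot)$ completes the proof.
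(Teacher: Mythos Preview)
Your proposal is correct and follows essentially the same route as the paper: parametrize $f^i_n$ as $\boldsymbol{\alpha}^T\mathbf{\bar K^i}(\cdot)$, rewrite $C^i$ as a quadratic in $\boldsymbol{\alpha}$, take the gradient, and solve the resulting normal equation. The paper's proof is terser---it simply adopts the parametrization without justification and does not discuss invertibility or convexity---so your added remarks on why the finite-dimensional reduction is lossless and why the coefficient matrix is positive definite are welcome elaborations rather than a different argument.
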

\begin{proof}
Using the notation, $f^{1}_n = \boldsymbol{\alpha^{1^T}_{n}}\mathbf{\bar{K}^{i}}(\cdot)$, the cost function of the estimation problem in subsection \ref{Subsection 2.1} can be expressed as $C^{i}(f^{i}_{n}) = (y^{1}_{n} - \boldsymbol{\alpha^{1}_{n}}^T\mathbf{\bar{K}^{i}}(x^{i}_n) )^2 + \varrho^{i}_{n}(\boldsymbol{\alpha^{1}_{n}} - \boldsymbol{\bar{\alpha}^{1}_{n-1}})^T\mathbf{K^{i}}(\boldsymbol{\alpha^{1}_{n}} - \boldsymbol{\bar{\alpha}^{1}_{n-1}})$. The gradient of $C^{i}(f^{i}_{n})$ with respect to $\boldsymbol{\alpha^{1}_{n}}$ is,
\begin{align*}
\nabla_{\boldsymbol{\alpha^{1}_{n}}} C^{i}(f^{i}_{n}) =  - 2(y^{1}_{n} - \boldsymbol{\alpha^{1}_{n}}^T\mathbf{\bar{K}^{i}}(x^{i}_n) )\mathbf{\bar{K}^{i}}(x^{i}_n) +  \varrho^{i}_{n} \mathbf{K^{i}}^T(\boldsymbol{\alpha^{1}_{n}} - \boldsymbol{\bar{\alpha}^{1}_{n-1}}) +  \varrho^{i}_{n} \mathbf{K^{i}}(\boldsymbol{\alpha^{1}_{n}} - \boldsymbol{\bar{\alpha}^{1}_{n-1}}). 
\end{align*}
It can be verified that, $\Big(\boldsymbol{\alpha^{1}_{n}}^T\mathbf{\bar{K}^{i}}(x^{i}_n) \Big)\mathbf{\bar{K}^{i}}(x^{i}_n) = \mathbf{\bar{K}^{i}}(x^{i}_n) \mathbf{\bar{K}^{i^T}}(x^{i}_n) \boldsymbol{\alpha^{1}_{n}}$. Thus,
\begin{align*}
\nabla_{\boldsymbol{\alpha^{1}_{n}}} C^{i}(f^{i}_{n}) = 2( \varrho^{i}_{n}\mathbf{K^{i}}+ \mathbf{\bar{K}^{i}}(x^{i}_n) \mathbf{\bar{K}^{i^T}}(x^{i}_n) )  \boldsymbol{\alpha^{1}_{n}} - 2(\mathbf{\bar{K}^{i}}(x^{i}_n) y^{i}_{n} +  \varrho^{i}_{n}\mathbf{K^{i}} \boldsymbol{\bar{\alpha}^{1}_{n-1}}). 
\end{align*}
By setting the gradient to zero, we obtain the desired result.
\end{proof} 
Given data point $(x,y)$ and current knowledge as $f$, we define the learning operator at agent $i$, $T^{i}: H^{i} \to H^{i} $ as $T^{i}(x,y, \varrho)[f] = \underset{g \in H^{i}}\argmin \; (y -g(x))^2 + \varrho|| f - g ||^2_{H^{i}}$.  Let $M: \mathcal{X} \times \mathbb{R} \to \mathbb{R}^{m \times m}$ be defined as $M^i(x, \varrho) =  \Big( \varrho\mathbf{K^{i}}+ \mathbf{\bar{K}^{i}}(x) \mathbf{\bar{K}^{i^T}}(x) \Big)^{-1}$. Then, $T^{i}(x,y,\varrho)[f] =\Big(M^{i}(x, \varrho)\big(\varrho K^{i}\boldsymbol{\alpha} + (\mathbf{\bar{K}^{i}}(x)y)^T\big)\Big)^{T}\mathbf{\bar{K}^{i}} (\cdot)$ is an affine mapping in $\boldsymbol{\alpha^T}\mathbf{\bar{K}^{i}}(\cdot) = f$. In order to avoid the complications associated with the definition of a norm for a nonlinear operator and its properties specifically with respect to composition of operators, we define the learning operator as follows:
\begin{definition}\label{Definition 3}
Given data point, $(x,y)$, we define $\psi^{i}_{(x,y)}$ as $ \psi^{i}_{(x,y)} = (\mathbf{\bar{K}^{i}}(x)y)^T\mathbf{\bar{K}^{i}}(\cdot)$. The learning operator at agent $i$, $\bar{T}^{i}: H^{i} \times H^{i} \to H^{i}$ is defined as $\bar{T}^{i}(\varrho)[ f, \psi^{i}_{(x,y)} ] =   \Big(M^{i}(x, \varrho)\big(\varrho K^{i}\boldsymbol{\alpha} + \mathbf{\bar{K}^{i}}(x)y\big)\Big)^{T}\mathbf{\bar{K}^{i}} (\cdot)$, where $f= \boldsymbol{\alpha^T}\mathbf{\bar{K}^{i}} (\cdot)$. $\bar{T}^{i}(\varrho)[\cdot]$ is linear and bounded.
\end{definition}
\begin{proposition}\label{Proposition 4}
Let $\{\varrho^i_{n}\}$ be sequence of real numbers such that $\varrho^i_{n} \to \infty$. Then, there is a subsequence $\{\varrho^i_{n_k}\}$ of $\{\varrho^i_{n}\}$ such that $\underset{k \to \infty} \lim || \bar{T}^{i}(\varrho^i_{n_k})|| = 1$.
\end{proposition}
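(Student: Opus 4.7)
The plan is to use the explicit matrix representation of the learning operator from Proposition \ref{Proposition 2} and analyze it as a matrix-perturbation problem in the limit $\varrho\to\infty$. Writing $f=\boldsymbol{\alpha}^T\mathbf{\bar{K}^{i}}(\cdot)$ and $\psi=\boldsymbol{\beta}^T\mathbf{\bar{K}^{i}}(\cdot)$ with $\boldsymbol{\beta}=\mathbf{\bar{K}^{i}}(x)y$, the operator reads
\[
\bar{T}^{i}(\varrho)[f,\psi]=\bigl((M^{i}(x,\varrho)\,\varrho\mathbf{K^{i}})\,\boldsymbol{\alpha}+M^{i}(x,\varrho)\,\boldsymbol{\beta}\bigr)^{T}\mathbf{\bar{K}^{i}}(\cdot),
\]
so the norm of $\bar{T}^{i}(\varrho)$ is governed by the two matrix coefficients $M^{i}(x,\varrho)\,\varrho\mathbf{K^{i}}$ and $M^{i}(x,\varrho)$ acting on the $\boldsymbol{\alpha}$ and $\boldsymbol{\beta}$ components respectively.

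The key identity I would establish first is
$M^{i}(x,\varrho)\,\varrho\mathbf{K^{i}}=\bigl(I+\tfrac{1}{\varrho}(\mathbf{K^{i}})^{-1}\mathbf{\bar{K}^{i}}(x)\mathbf{\bar{K}^{i}}^{T}(x)\bigr)^{-1}$,
which uses invertibility of $\mathbf{K^{i}}$ guaranteed by linear independence of the features. For fixed $x$ and $\varrho\to\infty$, this matrix converges to $I$, while $M^{i}(x,\varrho)$ itself is $O(1/\varrho)$. The upper bound $\|\bar{T}^{i}(\varrho)\|\le 1+o(1)$ then follows by the triangle inequality together with the equivalence $\|f\|_{H^{i}}^{2}=\boldsymbol{\alpha}^{T}\mathbf{K^{i}}\boldsymbol{\alpha}$; the complementary lower bound $\|\bar{T}^{i}(\varrho)\|\ge 1-o(1)$ comes from testing on pairs $(f,0)$ (i.e.\ $y=0$, so $\psi=0$), for which $\|\bar{T}^{i}(\varrho)[f,0]\|_{H^{i}}/\|f\|_{H^{i}}\to 1$ as $\varrho\to\infty$.

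The reason a subsequence, rather than the full sequence, is needed is that the operator norm implicitly involves a supremum over the data point $x\in\mathcal{X}$, and the $O(1/\varrho)$ estimates above are only uniform in $x$ when $\|\mathbf{\bar{K}^{i}}(x)\|$ is bounded. Under the Assumption, $\mathcal{X}$ is closed and connected but not necessarily compact, so $\|\mathbf{\bar{K}^{i}}(x)\|$ may be unbounded; a diagonal extraction provides the remedy. For each $n$, pick near-optimal $(f_{n},x_{n},y_{n})$ realizing $\|\bar{T}^{i}(\varrho^{i}_{n})\|$ up to $1/n$, and then extract a subsequence $\{\varrho^{i}_{n_k}\}$ growing fast enough that $\|\mathbf{\bar{K}^{i}}(x_{n_k})\|^{2}/\varrho^{i}_{n_k}\to 0$; along this subsequence both bounds sharpen to $1$, proving the claim.

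The main obstacle will be precisely this coupling between $\varrho$ and the supremum over $x$: one must control the matrix perturbation $M^{i}(x,\varrho)\,\varrho\mathbf{K^{i}}-I$ uniformly enough on the relevant family of test pairs to turn the pointwise-in-$x$ convergence into a genuine operator-norm limit, and then thin the sequence $\{\varrho^{i}_{n}\}$ to reconcile this with the possible unboundedness of $\mathcal{X}$ allowed by the Assumption. Everything else (algebraic manipulation of the Neumann expansion, application of Proposition \ref{Proposition 2}, and the norm equivalences on the finite-dimensional $H^{i}$) is routine once the correct subsequence has been identified.
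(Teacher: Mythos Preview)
Your matrix-perturbation route is genuinely different from the paper's and, where it works, more elementary. The paper does not manipulate $M^{i}(x,\varrho)\,\varrho\mathbf{K^{i}}$ directly; instead it writes $\|\bar{T}^{i}(\varrho^{i}_{n})\|^{2}=\sup_{E^{i}}\phi^{i}_{n}$ over the parameter set $E^{i}\subset\mathbb{R}^{m}\times\mathcal{X}\times\mathbb{R}$, asserts $E^{i}$ compact, shows the family $\{\phi^{i}_{n}\}$ is uniformly bounded and equicontinuous on $E^{i}$ (factor by factor, then proving that products of uniformly bounded equicontinuous families remain equicontinuous), and invokes Arzel\`a--Ascoli to extract a uniformly convergent subsequence along which $\lim$ and $\sup$ commute. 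The subsequence in the paper is thus an artefact of quoting Arzel\`a--Ascoli, not of any unboundedness of $\mathcal{X}$. Under the compactness the paper uses, your Neumann-series estimates are already uniform in $x$ and in fact deliver full-sequence convergence $\|\bar{T}^{i}(\varrho^{i}_{n})\|\to 1$, which is stronger than what is stated.

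Two points need repair. First, your diagonal extraction (``pick near-optimal $x_{n}$, then extract a subsequence with $\varrho^{i}_{n_{k}}$ growing fast enough that $\|\mathbf{\bar{K}^{i}}(x_{n_{k}})\|^{2}/\varrho^{i}_{n_{k}}\to 0$'') is circular: once the near-maximizers $x_{n}$ are fixed, the ratio $\|\mathbf{\bar{K}^{i}}(x_{n})\|^{2}/\varrho^{i}_{n}$ is a determined real sequence, and selecting indices cannot force the denominator to outpace the numerator unless that already happens along some subset of $\mathbb{N}$, which nothing guarantees. You are right that the standing Assumption does not say $\mathcal{X}$ is bounded, but the paper sidesteps this by simply declaring $E^{i}$ compact; your proposed remedy does not work as written. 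Second, the identity $M^{i}(x,\varrho)\,\varrho\mathbf{K^{i}}=(I+\tfrac{1}{\varrho}(\mathbf{K^{i}})^{-1}\mathbf{\bar{K}^{i}}(x)\mathbf{\bar{K}^{i}}(x)^{T})^{-1}$ needs $\mathbf{K^{i}}$ invertible, which the paper explicitly does \emph{not} assume: $\mathbf{K^{i}}$ is $m\times m$ with $m=\dim H\ge\dim H^{i}$, hence typically only positive semidefinite. The paper patches this by replacing $\mathbf{K^{i}}$ with $\mathbf{K^{i}}+c^{i}_{3}\mathbb{I}_{m}$ for small $c^{i}_{3}>0$, and your argument should adopt the same device.
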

\begin{proof}
Let $f =  \boldsymbol{\alpha^{T}}\mathbf{\bar{K}^{i}} (\cdot)$ and $\psi^{i}_{(x,y)}$ be as defined above. The set $\{f \in H^{i},  x \in \mathcal{X}, y \in \mathbb{R}\;:\: || f ||^{2} + || \psi^{i}_{(x,y)} ||^{2} =1 \}$ is isomorphic to the set $E^{i} =\{\boldsymbol{\alpha} \in \mathbb{R}^{m}, x \in \mathcal{X}, y \in \mathbb{R}: \boldsymbol{\alpha^{T}}\mathbf{K^{i}}\boldsymbol{\alpha^{T}} + y^2 \mathbf{\bar{K}^{i^T}}(x)\mathbf{K^{i}} \mathbf{\bar{K}^{i}}(x)= 1 \}$. $E^{i}$ is a compact subset of $\mathbb{R}^m \times \mathbb{R}^I \times \mathbb{R}$. For any $n$, 
\begin{align*}
&|| \bar{T}^{i}(\varrho^i_{n})|| = \underset{(f, \psi^{i}_{(x,y)}) \neq \theta} \sup \frac{\Big(M^{i}(x, \varrho^i_n)\big(\varrho^i_n K^{i}\boldsymbol{\alpha} + \mathbf{\bar{K}^{i}}(x)y\big)\Big)^{T} K^{i} \Big(M^{i}(x, \varrho^i_n)\big(\varrho^i_n K^{i}\boldsymbol{\alpha} + \mathbf{\bar{K}^{i}}(x)y\big)\Big)}{|| f ||^{2} + || \psi^{i}_{(x,y)} ||^{2}}\\
&\phi^{i}_{n}(\boldsymbol{\alpha};x;y) = \underbrace{\Big(\frac{\mathbf{\bar{K}^{i}}(x) y}{\varrho^{i}_{n}} + \mathbf{K^{i}} \boldsymbol{\alpha}\Big)^{T}}_{\phi^{i^T}_{n,1}} \underbrace{\Big(\mathbf{K^{i}} +\frac{\mathbf{\bar{K}^{i}}(x) \mathbf{\bar{K}^{i^T}}(x) }{\varrho^{i}_{n}}\Big)^{-1}}_{\phi^{i}_{n,2}} \mathbf{K^{i}} \Big(\mathbf{K^{i}}+ \frac{\mathbf{\bar{K}^{i}}(x) \mathbf{\bar{K}^{i^T}}(x) }{\varrho^{i}_{n}}\Big)^{-1} \\
&\times \Big( \frac{\mathbf{\bar{K}^{i}}(x) y}{\varrho^{i}_{n}} +  \mathbf{K^{i}} \boldsymbol{\alpha}\Big).\; || \bar{T}^{i}(\varrho^i_{n})|| = \underset{(f, \psi^{i}_{(x,y)}) \neq \theta} \sup  \frac{\phi^{i}_{n}(\boldsymbol{\alpha},x,y)}{|| f ||^{2} + || \psi^{i}_{(x,y)} ||^{2}} = \underset{(\boldsymbol{\alpha} , x , y) \in E^i} \sup  \phi^{i}_{n}(\boldsymbol{\alpha};x;y).
\end{align*}
First, we claim that, $\{\phi^{i}_{n,1}(\cdot)\}$ is an equicontinuous sequence of functions on $E^{i}$. Let $(\boldsymbol{\alpha_0};x_0;y_0) \in E^{i}$. Since $E^{i}$ is compact, $\exists \; c_{1} < \infty$ such that $|y| < c_{1}, y \in E^{i}$. Since the sequence $\{\frac{1}{\varrho_{n}}\}$ is bounded, $\exists \; c_{2} < \infty$ such that $|\frac{1}{\varrho_{n}}| \leq c_{2} \forall n$. Since $K^{i}(\cdot,\bar{x}^i_j)$ is uniformly continuous on $E^{i}$ for each $j$, $\mathbf{\bar{K}^{i}}(\cdot)$ is uniformly continuous on $E^{i}$. Thus, $\forall \epsilon > 0$, $\exists \delta_{1} >0$ such that $|| x- x_{0}|| < \delta_1$ implies $|| \mathbf{\bar{K}^{i}}(x) -  \mathbf{\bar{K}^{i}}(x_0) || < \frac{\epsilon}{2 \times c_{1} \times c_{2}} \implies || \frac{\mathbf{\bar{K}^{i}}(x) y}{\varrho^{i}_{n}}- \frac{\mathbf{\bar{K}^{i}}(x_0) y}{\varrho^{i}_{n}}|| <  \frac{\epsilon}{2}$. Since $g(\alpha) = \mathbf{K^{i}} \boldsymbol{\alpha}$ is linear in $\boldsymbol{\alpha}$, $\forall \epsilon > 0$, let $0 < \delta_2 < \frac{\epsilon}{2 \times \lambda_{\max}( \mathbf{K^{i}})}$, then, $|| \boldsymbol{\alpha} - \boldsymbol{\alpha_0} || < \delta_2$ implies $|| \mathbf{K^{i}} \boldsymbol{\alpha} - \mathbf{K^{i}} \boldsymbol{\alpha_0} || = ||  \mathbf{K^{i}}(\boldsymbol{\alpha} -  \boldsymbol{\alpha_0} ) || < \frac{\epsilon}{2}$. Let $\delta = \min(\delta_1, \delta_2)$. Consider any vector in $(\boldsymbol{\alpha_0};x_0;y_0) \in E^{i}$ and let $(\boldsymbol{\alpha};x;y) \in E^{i}$ be such that $|| (\boldsymbol{\alpha};x;y) - (\boldsymbol{\alpha_0};x_0;y_0) || < \delta$. Then $|| \phi^{i}_{n,1}(\boldsymbol{\alpha};x;y) -  \phi^{i}_{n,1}(\boldsymbol{\alpha_0};x_0;y_0) || <  || \frac{\mathbf{\bar{K}^{i}}(x) y}{\varrho^{i}_{n}}- \frac{\mathbf{\bar{K}^{i}}(x_0) y}{\varrho^{i}_{n}}|| + || \mathbf{K^{i}} \boldsymbol{\alpha} - \mathbf{K^{i}} \boldsymbol{\alpha_0} || < \epsilon \; \forall n$. Thus,  $\{\phi^{i}_{n,1}(\cdot)\}$ is uniformly equicontinuous on $E^{i}$. Further, it is clear that $\{\phi^{i}_{n,1}(\cdot)\}$ is uniformly bounded on $E_{i}$, i.e., $\exists c_{\phi^{i}_{n,1}}$ such that $|| \phi^{i}_{n,1}(\boldsymbol{\alpha};x;y) || <  c_{\phi^{i}_{n,1}} \forall  (\boldsymbol{\alpha};x;y) \in E^{i}, \forall n$.\\
Second, we claim that, $\{\phi^{i}_{n,2}(\cdot)\}$ is an equicontinuous sequence of functions on $E^{i}$. We recall that the map from $GL_{m}(\mathbb{R}) \to M_{m}(\mathbb{R})$ given by  $M \mapsto M^{-1}$ is continuous in the operator norm topology. Let $M \in GL_{n}(\mathbb{R})$. Define, $S= M^{-1} \sum^{\infty}_{n=0} (M -N)^nM^{-n}$, $N \in M_{m}(\mathbb{R})$. The sequence in this definition converges if $||M -N || \times ||M^{-1}|| < 1$ as the space $M_{m}(\mathbb{R})$ is complete. Let $N$  be such that $||M -N || < \frac{1}{||M^{-1}||}$. Then, $S - S(M-N)M^{-1} = M^{-1} \implies S(\mathbb{I}_{m} - (M-N)M^{-1}) = M^{-1} \implies S= M^{-1}(\mathbb{I}_{m} - (M-N)M^{-1})^{-1} = ((\mathbb{I}_{m} - (M-N)M^{-1})M)^{-1} = N^{-1}$. $\mathbb{I}_{m} - (M-N)M^{-1}$ is invertible as $|| \mathbb{I}_{m} - (M-N)M^{-1} || \geq  ||\mathbb{I}_{m}|| - ||(M-N)M^{-1}|| > 0 $. The expansion of $N^{-1}$ gives the bound $|| N^{-1} || \leq \frac{||M^{-1}||}{1 - || M - N|| \; ||M^{-1}|| }$. Thus, $N \in B_{\frac{1}{|| 2 \times M^{-1}||}}(M)$ is invertible and $|| N^{-1} || \leq 2 ||M^{-1}||$. For such N, $ || N^{-1} - M^{-1} || = ||M^{-1} (M -N) N^{-1} ||\leq 2 ||M^{-1} ||^{2} ||M - N ||$. Given $\epsilon$, let $\delta < \min (\frac{1}{2 \times || M^{-1}||}, \frac{\epsilon}{2||M^{-1} ||^{2}})$. Then, $|| N - M || < \delta \implies || N^{-1} - M^{-1} || < \epsilon$. Since the map $M \mapsto M^{-1}$ is nonlinear map, by the $\epsilon -\delta$ definition of continuity, the map is continuous. We note that $\mathbf{K^{i}} +\frac{\mathbf{\bar{K}^{i}}(x) \mathbf{\bar{K}^{i^T}}(x) }{\varrho^{i}_{n}}$ need \textit{not} be invertible $\forall n$ since $\mathbf{K^{i}}$ is only positive semidefinite. In case $\mathbf{K^{i}}$ is not invertible, it is replaced with $\mathbf{K^{i}} + c^{i}_{3}\mathbb{I}_{m}$, where $c^{i}_{3}$ is small positive constant, to ensure that the previous sequence of matrices is invertible. Since $\mathbf{\bar{K}^{i}}(\cdot)$ is uniformly continuous on $E^{i}$, given $\epsilon >0$, $\exists \delta >0$ such that $|| x- x_{0} || < \delta$ implies $||\mathbf{\bar{K}^{i}}(x) \mathbf{\bar{K}^{i^T}}(x)  - \mathbf{\bar{K}^{i}}(x_0) \mathbf{\bar{K}^{i^T}}(x_0) || < \frac{\epsilon}{c_{2}}$. Thus, $|| x- x_{0} || < \delta$ implies $|| \mathbf{K^{i}} +\frac{\mathbf{\bar{K}^{i}}(x) \mathbf{\bar{K}^{i^T}}(x) }{\varrho^{i}_{n}} - \mathbf{K^{i}} - \frac{\mathbf{\bar{K}^{i}}(x_0) \mathbf{\bar{K}^{i^T}}(x_0) }{\varrho^{i}_{n}} || = ||\frac{\mathbf{\bar{K}^{i}}(x) \mathbf{\bar{K}^{i^T}}(x)  - \mathbf{\bar{K}^{i}}(x_0) \mathbf{\bar{K}^{i^T}}(x_0)}{\varrho^{i}_{n}}|| < \epsilon, \forall n$. For $x \in E^{i}$, let $\underset{n \in \mathbb{N}} \sup \; || \mathbf{K^{i}} + \frac{\mathbf{\bar{K}^{i}}(x_0) \mathbf{\bar{K}^{i^T}}(x_0)}{\varrho^{i}_{n}} || = c_{4}(x_0) > 0$. Given $\epsilon >0$, let $\delta >0 $ be such that  $|| x- x_{0} || < \delta$ implies $||\frac{\mathbf{\bar{K}^{i}}(x) \mathbf{\bar{K}^{i^T}}(x)  - \mathbf{\bar{K}^{i}}(x_0) \mathbf{\bar{K}^{i^T}}(x_0)}{\varrho^{i}_{n}}|| < \frac{\epsilon}{2 c^2_{4}(x_0)}, \forall n$. For $x$ such that $|| x- x_{0} || < \delta$, $|| \phi^{i}_{n,2}(\boldsymbol{\alpha};x;y) - \phi^{i}_{n,2}(\boldsymbol{\alpha}_0;x_0;y_0)|| \leq 2 c^2_{4}(x_0) \times ||\mathbf{K^{i}} +\frac{\mathbf{\bar{K}^{i}}(x) \mathbf{\bar{K}^{i^T}}(x) }{\varrho^{i}_{n}} - \mathbf{K^{i}} - \frac{\mathbf{\bar{K}^{i}}(x_0) \mathbf{\bar{K}^{i^T}}(x_0) }{\varrho^{i}_{n}}  || <  \epsilon, \forall n$. Further, we note that $\{\phi^{i}_{n,2}(\cdot)\}$ is uniformly bounded on $E_{i}$, i.e., $\exists c_{\phi^{i}_{n,2}}$ such that $|| \phi^{i}_{n,2}(\boldsymbol{\alpha};x;y) || <  c_{\phi^{i}_{n,2}} \forall  (\boldsymbol{\alpha};x;y) \in E^{i}, \forall n$. \\
Third, we claim that sequence $\{\phi^{i}_{n,2}(\cdot)\phi^{i}_{n,1}(\cdot)\}$ is equicontinuous. Indeed, let $(\boldsymbol{\alpha_0};x_0;y_0) \in E^{i}$. Given $\epsilon >0$, there exists $\delta>0$ such that $||(\boldsymbol{\alpha};x;y) - (\boldsymbol{\alpha_0};x_0;y_0) || < \delta \implies || \phi^{i}_{n,1}(\boldsymbol{\alpha};x;y) - \phi^{i}_{n,1}(\boldsymbol{\alpha};x;y) || < \frac{\epsilon}{2 \times c_{\phi^{i}_{n,2}}}$ and  $|| \phi^{i}_{n,2}(\boldsymbol{\alpha};x;y) - \phi^{i}_{n,2}(\boldsymbol{\alpha};x;y) || < \frac{\epsilon}{2 \times c_{\phi^{i}_{n,1}}}, \forall n$. Then, $||(\boldsymbol{\alpha};x;y) - (\boldsymbol{\alpha_0};x_0;y_0) || < \delta$, implies
\begin{align*}
&|| \phi^{i}_{n,2}(\boldsymbol{\alpha};x;y)\phi^{i}_{n,1}(\boldsymbol{\alpha};x;y) - \phi^{i}_{n,2}(\boldsymbol{\alpha_0};x_0;y_0)\phi^{i}_{n,1}(\boldsymbol{\alpha_0};x_0;y_0)|| = || \phi^{i}_{n,2}(\boldsymbol{\alpha};x;y)\phi^{i}_{n,1}(\boldsymbol{\alpha};x;y) - \\
&\phi^{i}_{n,2}(\boldsymbol{\alpha_0};x_0;y_0)\phi^{i}_{n,1}(\boldsymbol{\alpha};x;y) + \hspace{-1pt}\phi^{i}_{n,2}(\boldsymbol{\alpha_0};x_0;y_0) \phi^{i}_{n,1}(\boldsymbol{\alpha};x;y) - \hspace{-1pt} \phi^{i}_{n,2}(\boldsymbol{\alpha_0};x_0;y_0)\phi^{i}_{n,1}(\boldsymbol{\alpha_0};x_0;y_0)||\\
&\leq || \phi^{i}_{n,2}(\boldsymbol{\alpha};x;y) ||\; || \phi^{i}_{n,1}(\boldsymbol{\alpha};x;y) - \phi^{i}_{n,1}(\boldsymbol{\alpha_0};x_0;y_0) || + ||  \phi^{i}_{n,1}(\boldsymbol{\alpha_0};x_0;y_0)||\; || \phi^{i}_{n,2}(\boldsymbol{\alpha};x;y)  -\\
&\phi^{i}_{n,2}(\boldsymbol{\alpha_0};x_0;y_0)|| <  c_{\phi^{i}_{n,2}} \times  \frac{\epsilon}{2 \times c_{\phi^{i}_{n,2}}} + c_{\phi^{i}_{n,1}} \times \frac{\epsilon}{2 \times c_{\phi^{i}_{n,1}}} < \epsilon, \; \forall n. 
\end{align*} 
$|| \phi^{i}_{n,2}(\boldsymbol{\alpha};x;y)\phi^{i}_{n,1}(\boldsymbol{\alpha};x;y) || < c_{\phi^{i}_{n,2}} c_{\phi^{i}_{n,1}}\forall  (\boldsymbol{\alpha};x;y) \in E^{i}, \forall n$. Thus, the product of uniformly bounded equicontinuous sequence of functions is a uniformly bounded equicontinuous sequence of functions. By this argument the sequences,  $\{\mathbf{K^{i}}\phi^{i}_{n,2}(\cdot)\phi^{i}_{n,1}(\cdot)\}$, $\{\phi^{i^T}_{n,2}(\cdot)\mathbf{K^{i}}\phi^{i}_{n,2}(\cdot)\phi^{i}_{n,1}(\cdot)\}$, $\{ \phi^{i^T}_{n,1}(\cdot)\\ \phi^{i^T}_{n,2}(\cdot) \mathbf{K^{i}}\phi^{i}_{n,2}(\cdot)\phi^{i}_{n,1}(\cdot)\}$ are equicontinuous on $E^{i}$. Thus, $\{\phi^{i}_{n}(\cdot)\}$ is sequence of equicontinuous functions on $E^{i}$ which converges pointwise to $\boldsymbol{\alpha}^{T}\mathbf{K^{i}}\boldsymbol{\alpha}\; \text{on}\; E^{i}$. By the \textit{Arzel\`a Ascoli Theorem}, there exists a subsequence $\{\phi^{i}_{n_k}(\cdot)\}$ that converges uniformly to $\boldsymbol{\alpha}^{T}\mathbf{K^{i}}\boldsymbol{\alpha}\; \text{on}\; E^{i}$. Hence the limit and the supremum can be swapped as below, 
\begin{align*}
\underset{k \to \infty}\lim || \bar{T}^{i}(\varrho^{i}_{n_k})||^2 = \underset{k \to \infty}\lim \; \underset{ E^{i}} \sup \; \phi^{i}_{n_k}(\boldsymbol{\alpha};x;y) = \underset{E^{i}} \sup \; \underset{k \to \infty}\lim \; \phi^{i}_{n_k}(\boldsymbol{\alpha};x;y) = \underset{E^{i}} \sup \; \boldsymbol{\alpha}^{T}\mathbf{K^{i}}\boldsymbol{\alpha} = 1.
\end{align*}
\end{proof}
\subsection{The Fusion Problem}\label{Subsection 3.2}
Let $\mathbf{K} =  (K(\bar{x}^{p}_{j}, \bar{x}^{q}_{k}))_{jk}= (\langle K(\cdot, \bar{x}^q_k), K(\cdot, \bar{x}^p_{j}) \rangle_{H})_{j,k}, p,q=1,2, j,k, =1 \ldots m$ be a symmetric matrix in $\mathbb{R}^{2m \times 2m}$ and $\mathbf{\hat{K}^{i}_n} = (K^{i}(\hat{x}^{i}_{n,j}, \hat{x}^{i}_{n,k}))_{jk}= (\langle K^{i}(\cdot, \hat{x}^i_{n,k}), K^i(\cdot, \hat{x}^i_{n,j})  \rangle_{H^i})_{j,k} \in \mathbb{R}^{m \times m}$ be a symmetric matrix. Let $\mathbf{\bar{K}}: \mathcal{X} \to \mathbb{R}^{2m}$ be defined as $\mathbf{\bar{K}}(\cdot) = \hspace{-2pt} \Big[K(\cdot, \bar{x}^{1}_{1}); \ldots,  K(\cdot, \bar{x}^{1}_{m});  K(\cdot, \bar{x}^{2}_{1});\ldots ;  \\ K(\cdot, \bar{x}^{2}_{m})\Big]$.  Let $\mathbf{\tilde{K}^{i}_{n}}: \mathcal{X} \to \mathbb{R}^m$ be defined as $\mathbf{\tilde{K}^{i}_{n}}(\cdot) = [K(\cdot, \hat{x}^{i}_{n,1}), \ldots, K(\cdot, \hat{x}^{i}_{n,m})]$. Let $\mathbf{\hat{Y}}^i_n = \Big[\hat{y}^{i}_{n,1}; \ldots; \hat{y}^{i}_{n,m}\Big] \in \mathbb{R}^m$ and $\mathbf{\hat{Y}}_{n} =  \Big[\mathbf{\hat{Y}}^1_{n} ;  \mathbf{\hat{Y}}^2_{n}\Big] \in \mathbb{R}^{2m}$. Let $\mathbf{\check{K}^{i}}: \mathcal{X} \to \mathbb{R}^{m}$ be defined as $\mathbf{\check{K}^{i}}(\cdot) = \Big[K(\cdot, \bar{x}^{i}_{1}); \ldots,  K(\cdot, \bar{x}^{i}_{m})\Big]$ and $\mathbf{\breve{K}^{i}}(\cdot) = (K(\bar{x}^{i}_{j}, \bar{x}^{i}_{k}))_{jk}= (\langle K(\cdot, \bar{x}^i_k), K(\cdot, \bar{x}^i_{j}) \rangle_{H})_{j,k}, i=1,2,\; j,k = 1 \ldots m$ be a symmetric matrix in $\mathbb{R}^{m \times m}$. A function $f^{i}$ which belongs to $H^{i}$ and $H$ can be expressed as $\boldsymbol{\alpha^{i^T}} \mathbf{\bar{K}^{i}}(\cdot)$ and $\boldsymbol{\hat{\alpha}^{i^T}}\mathbf{\check{K}^{i}}(\cdot)$. Since $\{K(\cdot, \bar{x}^i_j)\}^{m}_{j=1}$ is basis for $H$, $\exists! \{M^{i}_{kj}\}^{m}_{k=1}$ such that $K^{i}(\cdot, \bar{x}^{i}_j) =\sum^{m}_{k=1}M_{kj}K(\cdot, \bar{x}^i_k), \forall j$. Thus, $\sum^{m}_{j=1}\alpha^{i}_{j}K^{i}(\cdot, \bar{x}^{i}_j) = \sum^{m}_{k=1}\sum^{m}_{j=1}M_{kj}\alpha^{i}_{j}K(\cdot, \bar{x}^i_k) = \sum^{m}_{k=1}\hat{\alpha}_{k}K(\cdot, \bar{x}^i_k)$, which implies that $\boldsymbol{\hat{\alpha}^{i}} = M^{i}\boldsymbol{\alpha^{i}}$ where $M^{i} = (M_{kj})_{k,j} \in \mathbb{R}^{m \times m}$. In the following, we solve the problems $(P2)^i_n$ and $(P3)_n$. We define and the study the asymptotics of the learning operator at the fusion center.  
\begin{proposition}\label{Proposition 5}
$\hat{x}^{i}_{n,j} = \bar{x}^{i}_{j}\; \forall n \in \mathbb{N}, i=1,2, j= 1, \ldots, m$ and $\mathbf{\hat{Y}}^{i}_{n} = \Big(\mathbf{K^{i}} + \varrho^{i}_{n}\mathbb{I}_{m}\Big) \boldsymbol{\alpha^{i}_{n}}$ solve the problem $(P2)^{i}_{n}$ where $f^{i}_{n} = \boldsymbol{\alpha^{i^T}_{n}}\mathbf{\bar{K}^{i}}(\cdot)$. $f_{n} = \boldsymbol{\alpha^{T}_{n}}\mathbf{\bar{K}}(\cdot)$, where $\boldsymbol{\alpha_{n}} = \Big(\mathbf{K^{T}}\mathbf{K} + \varrho_{n}\mathbf{K}\Big)^{-1}\Big(\mathbf{K^{T}}\mathbf{\hat{Y}}_{n}\Big)$ solves $(P3)_{n}$.
\end{proposition}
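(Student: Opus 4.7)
The plan is to handle the two problems separately, treating each as a regularized least-squares problem in an RKHS whose optimal solution is characterized by the representer theorem and first-order conditions.

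For problem $(P2)^i_n$, I read the objective $\bar{c}^i$ as a feasibility placeholder, so the task reduces to exhibiting one feasible tuple $\{(\hat{x}^i_{n,j},\hat{y}^i_{n,j})\}_{j=1}^m$. The plan is to make the ansatz $\hat{x}^i_{n,j} = \bar{x}^i_j$ and then reverse-engineer the $\hat{y}^i_{n,j}$. Under this ansatz, by the representer theorem the inner minimizer $g^i_n$ lies in $\mathrm{Span}\{K^i(\cdot,\bar{x}^i_j)\}_{j=1}^m$, so writing $g^i_n = \boldsymbol{\beta}^T\mathbf{\bar{K}^i}(\cdot)$ the inner cost becomes $\|\mathbf{\hat{Y}}^i_n - \mathbf{K^i}\boldsymbol{\beta}\|^2 + \varrho^i_n \boldsymbol{\beta}^T\mathbf{K^i}\boldsymbol{\beta}$. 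Setting its gradient with respect to $\boldsymbol{\beta}$ to zero yields $\mathbf{K^i}(\mathbf{K^i} + \varrho^i_n \mathbb{I}_m)\boldsymbol{\beta} = \mathbf{K^i}\mathbf{\hat{Y}}^i_n$, so it suffices to pick $\mathbf{\hat{Y}}^i_n$ such that the unique-in-the-appropriate-sense solution $\boldsymbol{\beta}$ equals $\boldsymbol{\alpha}^i_n$. Choosing $\mathbf{\hat{Y}}^i_n = (\mathbf{K^i} + \varrho^i_n \mathbb{I}_m)\boldsymbol{\alpha}^i_n$ does exactly this: substituting back gives $\mathbf{K^i}(\mathbf{K^i} + \varrho^i_n \mathbb{I}_m)\boldsymbol{\alpha}^i_n = \mathbf{K^i}(\mathbf{K^i} + \varrho^i_n \mathbb{I}_m)\boldsymbol{\beta}$, whose solution set certainly contains $\boldsymbol{\beta}=\boldsymbol{\alpha}^i_n$, producing the required $f^i_n$.

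For problem $(P3)_n$, I would substitute the data from the $(P2)^i_n$ solution and apply the representer theorem in the fusion space $H$ with kernel $K = K^1 + K^2$. Since all evaluation points are among $\{\bar{x}^1_j\} \cup \{\bar{x}^2_j\}$, the minimizer lies in $\mathrm{Span}\{K(\cdot,\bar{x}^i_j)\}_{i=1,2;\,j=1,\dots,m}$, i.e., $f_n = \boldsymbol{\alpha}_n^T \mathbf{\bar{K}}(\cdot)$. Using that $f_n(\bar{x}^i_j) = (\mathbf{K}\boldsymbol{\alpha}_n)_{ij}$ (with the flattened indexing of $\mathbf{\bar{K}}$) and $\|f_n\|_H^2 = \boldsymbol{\alpha}_n^T \mathbf{K}\boldsymbol{\alpha}_n$, the cost becomes $\|\mathbf{\hat{Y}}_n - \mathbf{K}\boldsymbol{\alpha}_n\|^2 + \varrho_n \boldsymbol{\alpha}_n^T\mathbf{K}\boldsymbol{\alpha}_n$. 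The first-order condition $-2\mathbf{K}^T(\mathbf{\hat{Y}}_n - \mathbf{K}\boldsymbol{\alpha}_n) + 2\varrho_n\mathbf{K}\boldsymbol{\alpha}_n = 0$ rearranges to $(\mathbf{K}^T\mathbf{K} + \varrho_n\mathbf{K})\boldsymbol{\alpha}_n = \mathbf{K}^T\mathbf{\hat{Y}}_n$, yielding the claimed formula once invertibility is noted.

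The main obstacle will be the invertibility subtleties. In $(P2)^i_n$ the matrix $\mathbf{K^i}$ is positive definite under the linear-independence assumption on the features, so the argument above is clean; but in $(P3)_n$ the matrix $\mathbf{K}$ is built from a basis across \emph{two} knowledge spaces and need not be full rank, so $\mathbf{K}^T\mathbf{K} + \varrho_n \mathbf{K}$ requires justification (either via the conversion matrix $M^i$ introduced in the paragraph preceding the proposition, which relates the two representations, or by checking that for $\varrho_n > 0$ the positive-semidefiniteness of $\mathbf{K}$ together with the normal-equation structure makes the system consistent and the stated inverse well-defined on the relevant subspace). I would address this by reducing to a basis of the span of $\{K(\cdot,\bar{x}^i_j)\}$ or by interpreting the inverse as a Moore–Penrose pseudoinverse, noting that the resulting $f_n$ is basis-independent.
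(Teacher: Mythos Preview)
Your proposal is correct and follows essentially the same route as the paper: apply the representer theorem to reduce each regularized least-squares problem to a finite-dimensional quadratic, then read off the first-order (normal) equations to recover $\mathbf{\hat{Y}}^{i}_{n}$ for $(P2)^{i}_{n}$ and $\boldsymbol{\alpha}_{n}$ for $(P3)_{n}$. If anything you are more careful than the paper, which simply writes the inverse $(\mathbf{K}^{T}\mathbf{K}+\varrho_{n}\mathbf{K})^{-1}$ without the invertibility discussion you flag.
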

\begin{proof}
If Agent $i$ had received the data points $\{(\hat{x}^{i}_{n,j},\hat{y}^{i}_{n,j})\}^{m}_{j=1}$, by the \textit{Representer Theorem}, \cite{hofmann2008kernel}, the optimal solution for a least squares regression problem for Agent $i$ is given by $g^{i}_{n} = \boldsymbol{\bar{\alpha}^{i^T}_{n}}\mathbf{\tilde{K}^{i}_{n}}(\cdot)$, where $\boldsymbol{\bar{\alpha}^{i}_{n}} = \Big(\mathbf{\hat{K}^{i^T}_{n}}\mathbf{\hat{K}^{i}_{n}} + \varrho_{n}\mathbf{\hat{K}^{i}_{n}}\Big)^{-1}\Big(\mathbf{\hat{K}^{i^T}_n}\mathbf{\hat{Y}^i_n}\Big)$. From Proposition \ref{Proposition 24}, the uploaded function is of the form $f^{i}_{n} = \boldsymbol{\alpha^{i}_{n}}\mathbf{\bar{K}^{i}}(\cdot)$. The feasibility problem in $(P2)^{i}_{n}$ is solved if and only if $f^{i}_{n}  = g^{i}_{n}$. To achieve the same we let, $\mathbf{\bar{K}^{i}}(\cdot) = \mathbf{\tilde{K}^{i}_{n}}(\cdot), \forall n$ and $\boldsymbol{\bar{\alpha}^{i}_{n}} = \boldsymbol{\alpha^{i}_{n}}, \forall n$. If $\hat{x}^{i}_{n,j} = \bar{x}^{i}_{j}$, then $\mathbf{\bar{K}^{i}}(\cdot) = \mathbf{\tilde{K}^{i}_{n}}(\cdot)$. This implies that $\mathbf{\hat{K}^{i}_n} = \mathbf{K^{i}}$. Using $\boldsymbol{\bar{\alpha}^{i}_{n}} = \boldsymbol{\alpha^{i}_{n}}$, $\mathbf{\hat{Y}}^i_{n}$ is obtained as $\mathbf{\hat{Y}}^i_{n} = \mathbf{K^{i^{T^{-1}}}}\Big(\mathbf{K^{i^T}}\mathbf{K^{i}} + \varrho_{n}\mathbf{K^{i}} \Big) \boldsymbol{\alpha^{i}_{n}} = \Big(\mathbf{K^{i}} + \varrho_{n}\mathbb{I}_{m}\Big) \boldsymbol{\alpha^{i}_{n}}$. Given the gram matrix generated from the kernel $K$ at the fusion input data points, $\{\bar{x}^{1}_{j}\}^{m}_{j=1} \cup \{\bar{x}^{2}_{j}\}^{m}_{j=1}$, $\mathbf{K}$,  and the fusion output data points $\mathbf{\hat{Y}}_{n}$, the solution of the estimation problem, $(P3)_{n}$,  is given by the \textit{Representer Theorem}, $f^{*}_{n} = \boldsymbol{\alpha^T_{n}}\mathbf{\bar{K}}$, where $\boldsymbol{\alpha_{n}} = \Big(\mathbf{K^{T}}\mathbf{K} + \varrho_{n}\mathbf{K}\Big)^{-1}\Big(\mathbf{K^{T}}\mathbf{\hat{Y}}_{n}\Big)$. 
\end{proof}
In the title of this paper, we mention ``A forget and relearn approach"; ``forget" meaning that the data gathered by the agents locally is forgotten or hidden in the functions transmitted by them. ``Relearn" meaning; at the fusion center, the data is retrieved from the received functions and the function is estimated from the retrieved data. 
\begin{definition}
Given the functions $f^{1} \in \{H \cap H^{1}\}$ and $f^{2} \in \{H\cap H^{2}\}$ uploaded to fusion center, we define the learning operator at the fusion center $T : \{H\cap H^{1}\} \times \{H \cap H^{2}\}  \to H$ as follows:
\begin{align*}\label{Definition 6}
T(\varrho)[f^{1},f^{2}] = \boldsymbol{\alpha^T}\mathbf{\bar{K}}, \boldsymbol{\alpha} = \Big(\mathbf{K^{T}}\mathbf{K} + \varrho\mathbf{K}\Big)^{-1}\Bigg[&\mathbf{K^{T}} 
\begin{bmatrix}
\Big(\mathbf{K^{1}} + \varrho \mathbb{I}_{m}\Big) \boldsymbol{\alpha^{1}} \\
\Big(\mathbf{K^{2}} + \varrho\mathbb{I}_{m} \Big) \boldsymbol{\alpha^{2}}
\end{bmatrix} \Bigg], 
\end{align*}
where $f^{i} = \boldsymbol{\alpha^{i}}\mathbf{\bar{K}^{i}}(\cdot) = \boldsymbol{\hat{\alpha}^{i^T}}\mathbf{\check{K}^{i}}(\cdot)= \boldsymbol{\alpha^{i}}M^{i^T}\mathbf{\check{K}^{i}}(\cdot)$. 
\end{definition}
We note that the definition of $T$ is restricted to subspaces of $H$ which contain and are contained in $H^{1}$ or $H^{2}$. This translates into the definition of the norm as well as in the proof below.
\begin{proposition}\label{Proposition 7}
Let $\mathbf{K} =\begin{bmatrix} 
&\mathbf{\tilde{K}^{1}} &\mathbf{K^{12}} \\ 
&\mathbf{K^{{12}^T}} &\mathbf{\tilde{K}^{2}} 
\end{bmatrix}$ be such that $\max(\lambda_{\max}(\mathbf{\tilde{K}^{1}}), \lambda_{\max}(\mathbf{\tilde{K}^{2}})) \leq 1$ and $\varrho_{n}$ be monotone sequence of positive real numbers diverging to $\infty$. Then, $\underset{ n \to \infty} \lim || T(\varrho_n) || \leq 1$. 
\end{proposition}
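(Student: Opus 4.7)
My strategy mirrors that of Proposition~\ref{Proposition 4}: rewrite $\|T(\varrho_{n})\|$ as the supremum of a continuous quadratic form over a compact normalising set, show that the resulting family is uniformly bounded and equicontinuous in $\varrho_{n}$, swap limit and supremum via Arzel\`a--Ascoli, and bound the limiting supremum using the hypothesis on $\mathbf{\tilde{K}^{1}}$ and $\mathbf{\tilde{K}^{2}}$.

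First, since $\mathbf{K}$ is symmetric, $\mathbf{K^{T}K}+\varrho_{n}\mathbf{K}=\mathbf{K}(\mathbf{K}+\varrho_{n}\mathbb{I}_{2m})$, so (invertibility of $\mathbf{K}$ being ensured by the same small regularisation $\mathbf{K}\mapsto\mathbf{K}+c\,\mathbb{I}_{2m}$ employed in the proof of Proposition~\ref{Proposition 4}) the formula of Definition~6 simplifies to
\[
\boldsymbol{\alpha}=\bigl(\mathbf{K}+\varrho_{n}\mathbb{I}_{2m}\bigr)^{-1}\begin{bmatrix}\bigl(\mathbf{K^{1}}+\varrho_{n}\mathbb{I}_{m}\bigr)\boldsymbol{\alpha^{1}}\\ \bigl(\mathbf{K^{2}}+\varrho_{n}\mathbb{I}_{m}\bigr)\boldsymbol{\alpha^{2}}\end{bmatrix}.
\]
With $\|T(\varrho_{n})[f^{1},f^{2}]\|_{H}^{2}=\boldsymbol{\alpha^{T}}\mathbf{K}\boldsymbol{\alpha}$, normalising the source in analogy with the set $E^{i}$ of Proposition~\ref{Proposition 4} reduces $\|T(\varrho_{n})\|^{2}$ to $\sup_{E}\Psi_{n}$, where $\Psi_{n}(\boldsymbol{\alpha^{1}},\boldsymbol{\alpha^{2}})$ is a continuous quadratic form and $E$ is the unit sphere in the source, compact because the RKHSs are finite dimensional.

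Second, the matrix sequences $\varrho_{n}(\mathbf{K}+\varrho_{n}\mathbb{I}_{2m})^{-1}$ and $\varrho_{n}^{-1}(\mathbf{K^{i}}+\varrho_{n}\mathbb{I}_{m})$ are uniformly bounded (the first is contractive since $\varrho_{n}(\mathbf{K}+\varrho_{n}\mathbb{I}_{2m})^{-1}\preceq\mathbb{I}_{2m}$, the second because $\mathbf{K^{i}}$ is fixed and $\varrho_{n}\to\infty$) and equicontinuous in $(\boldsymbol{\alpha^{1}},\boldsymbol{\alpha^{2}})$ by the continuity-of-matrix-inversion argument developed in Proposition~\ref{Proposition 4}. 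Products and quadratic functions of such sequences inherit equicontinuity and uniform boundedness on $E$, so Arzel\`a--Ascoli supplies a uniformly convergent subsequence of $\{\Psi_{n}\}$. Pointwise, $\varrho_{n}(\mathbf{K}+\varrho_{n}\mathbb{I}_{2m})^{-1}\to\mathbb{I}_{2m}$ and $\varrho_{n}^{-1}(\mathbf{K^{i}}+\varrho_{n}\mathbb{I}_{m})\to\mathbb{I}_{m}$, so $\boldsymbol{\alpha}\to[\boldsymbol{\alpha^{1}};\boldsymbol{\alpha^{2}}]$; swapping the limit and supremum gives
\[
\limsup_{n\to\infty}\|T(\varrho_{n})\|^{2}\;\leq\;\sup_{E}\Bigl(\boldsymbol{\alpha^{1T}}\mathbf{\tilde{K}^{1}}\boldsymbol{\alpha^{1}}+\boldsymbol{\alpha^{2T}}\mathbf{\tilde{K}^{2}}\boldsymbol{\alpha^{2}}+2\boldsymbol{\alpha^{1T}}\mathbf{K^{12}}\boldsymbol{\alpha^{2}}\Bigr).
\]

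Finally, I would bound the right-hand side using $\max(\lambda_{\max}(\mathbf{\tilde{K}^{1}}),\lambda_{\max}(\mathbf{\tilde{K}^{2}}))\leq 1$ together with the positive semi-definiteness of $\mathbf{K}$. The latter yields the factorisation $\mathbf{K^{12}}=(\mathbf{\tilde{K}^{1}})^{1/2}C(\mathbf{\tilde{K}^{2}})^{1/2}$ with $\|C\|\leq 1$, so a block Cauchy--Schwarz bounds the cross term by the diagonal contributions; each diagonal contribution is then controlled by the spectral hypothesis. The principal obstacle is precisely this last algebraic step: one must identify the normalisation implicit in Definition~6 so that the block Cauchy--Schwarz combined with $\lambda_{\max}(\mathbf{\tilde{K}^{i}})\leq 1$ produces the constant~$1$ rather than a larger constant such as $2$; the relation between $\|f^{i}\|_{H^{i}}$ and $\|f^{i}\|_{H}$ induced by the sum-of-kernels construction (Theorem~\ref{Theorem 18}) can likely be used to absorb such a residual factor.
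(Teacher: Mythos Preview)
Your overall strategy---reduce to a supremum of quadratic forms over a compact set, pass to the limit, then bound the limiting form spectrally---matches the paper's. There are, however, two places where your execution diverges and the second is a genuine gap.

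First, a minor point: invoking Arzel\`a--Ascoli yields only a uniformly convergent \emph{subsequence}, so you end up with a bound on $\limsup_{n}\|T(\varrho_{n})\|$, not on $\lim_{n}\|T(\varrho_{n})\|$ as stated. The paper instead exploits the hypothesis that $\{\varrho_{n}\}$ is \emph{monotone}: with it, $\|\phi_{n,1}-\mathbf{K^{T}}[\boldsymbol{\alpha^{1}};\boldsymbol{\alpha^{2}}]\|$ is monotone decreasing in $n$ for each fixed argument, and a Dini-type argument (ascending open covers of the compact set $E$) upgrades pointwise convergence to uniform convergence of the \emph{entire} sequence. You never use monotonicity, which is why you are left with a subsequence.

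Second, and more substantively, the final spectral step you flag as the ``principal obstacle'' is exactly where the paper's argument differs and yours does not close. Your block Cauchy--Schwarz route, $\mathbf{K^{12}}=(\mathbf{\tilde{K}^{1}})^{1/2}C(\mathbf{\tilde{K}^{2}})^{1/2}$ with $\|C\|\leq 1$, indeed produces a factor~$2$ that you cannot absorb with the stated hypotheses alone; the relationship between $\|f^{i}\|_{H^{i}}$ and $\|f^{i}\|_{H}$ from Theorem~\ref{Theorem 18} goes the wrong way for this purpose. The paper bypasses the cross term entirely by bounding $\sup_{E}\boldsymbol{\alpha^{T}}\mathbf{K}\boldsymbol{\alpha}\leq\lambda_{\max}(\mathbf{K})$ and then using the \emph{Schur complement} factorisation $\mathbf{K}=P^{T}DP$, with $D$ block-diagonal having blocks $\mathbf{\tilde{K}^{1}}-\mathbf{K^{12}}\mathbf{\tilde{K}^{2^{-1}}}\mathbf{K^{12^{T}}}$ and $\mathbf{\tilde{K}^{2}}$, to argue $\lambda_{\max}(\mathbf{K})\leq\max\bigl(\lambda_{\max}(\mathbf{\tilde{K}^{1}}),\lambda_{\max}(\mathbf{\tilde{K}^{2}})\bigr)$. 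That single inequality is the missing idea; once you have it, the hypothesis $\max(\lambda_{\max}(\mathbf{\tilde{K}^{1}}),\lambda_{\max}(\mathbf{\tilde{K}^{2}}))\leq 1$ finishes the proof without any extraneous constants.
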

\begin{proof}
The set $\{(f^1;f^2) \in \{H \cap H^{1}\} \times \{ H \cap H^{2}\} : || f^{1} ||^2_{H} + || f^{2} ||^2_{H} =1 \}$ is isomorphic to the set $ E = \{\boldsymbol{\hat{\alpha}} = (\boldsymbol{\hat{\alpha}^{1}},\boldsymbol{\hat{\alpha}^{2}}) \in \mathbb{R}^{2m}: \exists \boldsymbol{\alpha^{i}} \text{ s.t } \boldsymbol{\hat{\alpha}^{i}} = M^{i}\boldsymbol{\alpha^{i}}, \text{ and }    \boldsymbol{\hat{\alpha}^{1^T}}\mathbf{\breve{K}^{1}} \boldsymbol{\hat{\alpha}^{1}} + \boldsymbol{\hat{\alpha}^{2^T}}\mathbf{\breve{K}^{2}} \boldsymbol{\hat{\alpha}^{2}} = 1 \} = \{ \boldsymbol{\alpha} = (\boldsymbol{\alpha^{1}},\boldsymbol{\alpha^{2}}) \in \mathbb{R}^{2m}: \boldsymbol{\alpha^{1^T}}M^{1^T}\mathbf{K^{1}} M^{1}\boldsymbol{\alpha^{1}} + \boldsymbol{\alpha^{2^T}}M^{2^T}\mathbf{K^{2}} M^{2}\boldsymbol{\alpha^{2}}  = 1\}$. Define $\phi_{n}(\cdot)$ as follows,
\begin{align*}
\phi_{n}(\boldsymbol{\alpha^{1}};\boldsymbol{\alpha^{2}}) = \bigg[\Big[\boldsymbol{\alpha^{1^T}} \Big(\frac{\mathbf{K^{1}}}{\varrho_n} +  \mathbb{I}_{m}\Big), \boldsymbol{\alpha^{2^T}} \Big(\frac{\mathbf{K^{2}}}{\varrho_n} +  &\mathbb{I}_{m}\Big) \Big] \mathbf{K} \bigg] \Big(\frac{\mathbf{K^{T}}\mathbf{K}}{\varrho_n} + \mathbf{K}\Big)^{-1} \mathbf{K}\\
&\underbrace{\Big(\frac{\mathbf{K^{T}}\mathbf{K}}{\varrho_n} + \mathbf{K}\Big)^{-1}}_{\phi_{n,2}(\cdot)}\underbrace{\Bigg[\mathbf{K^{T}} 
\begin{bmatrix}
\Big(\frac{\mathbf{K^{1}}}{\varrho_n} +  \mathbb{I}_{m}\Big) \boldsymbol{\alpha^{1}} \\
\Big(\frac{\mathbf{K^{2}}}{\varrho_n} + \mathbb{I}_{m} \Big) \boldsymbol{\alpha^{2}}
\end{bmatrix} \Bigg]}_{\phi_{n,1}(\cdot)}.
\end{align*}
First, we claim that $\{\phi_{n,1}(\cdot)\}$ converges uniformly to $\mathbf{K^{T}} \begin{bmatrix} \boldsymbol{\alpha^{1}} \\ \boldsymbol{\alpha^{2}}  \end{bmatrix} $. Consider, 
\begin{align*}
|| \mathbf{K^{T}} \begin{bmatrix} \boldsymbol{\alpha^{1}} \\ \boldsymbol{\alpha^{2}}  \end{bmatrix} - \phi_{n,1}(\boldsymbol{\alpha^{1}};\boldsymbol{\alpha^{2}}) || = \frac{1}{\varrho_n} || \mathbf{K^{T}}  \begin{bmatrix}  \mathbf{K^{1}}\boldsymbol{\alpha^{1}} \\  \mathbf{K^{2}}\boldsymbol{\alpha^{2}}  \end{bmatrix}|| &\geq \frac{1}{\varrho_{n+1}} || \mathbf{K^{T}}  \begin{bmatrix}  \mathbf{K^{1}}\boldsymbol{\alpha^{1}} \\  \mathbf{K^{2}}\boldsymbol{\alpha^{2}}  \end{bmatrix}||\\
 &= || \mathbf{K^{T}} \begin{bmatrix} \boldsymbol{\alpha^{1}} \\ \boldsymbol{\alpha^{2}}  \end{bmatrix} - \phi_{n+1,1}(\boldsymbol{\alpha^{1}};\boldsymbol{\alpha^{2}}) ||. 
\end{align*}
Thus, $\{\bar{\phi}_{n,1}(\boldsymbol{\alpha^{1}};\boldsymbol{\alpha^{2}}) =||\mathbf{K^{T}} \begin{bmatrix} \boldsymbol{\alpha^{1}} \\ \boldsymbol{\alpha^{2}}  \end{bmatrix} - \phi_{n,1}(\boldsymbol{\alpha^{1}};\boldsymbol{\alpha^{2}}) ||\}$ is monotone decreasing sequence. Let $E_{n, \epsilon} = \{(\boldsymbol{\alpha^{1}},\boldsymbol{\alpha^{2}}) \in E: \bar{\phi}_{n,1}(\boldsymbol{\alpha^{1}};\boldsymbol{\alpha^{2}}) < \epsilon\}, \epsilon >0$. Since $\bar{\phi}_{n,1}(\cdot)$ is continuous, $E_{n, \epsilon}$ is open. $\{E_{n, \epsilon}\}$ is an ascending sequence of open sets, i.e., $E_{n} \subset E_{n+1}$. Since $\{\bar{\phi}_{n,1}(\cdot)$ converges pointwise to zero, $\{E_{n, \epsilon}\}$ is an open cover for $E$. Since $E$ is compact, there exists a finite subcover. Since $\{E_{n, \epsilon}\}$ is ascending, the maximum index from the finite subcover is a cover too, i.e., $\exists N_{\epsilon}$ such that $E_{N_{\epsilon}} = E$. We note that $\{\phi_{n,1}(\cdot)\}$ is uniformly bounded, i.e., $\exists c_{\phi_{n,1}}$ such that, $|| \phi_{n,1}(\boldsymbol{\alpha^{1}};\boldsymbol{\alpha^{2}}) || < c_{\phi_{n,1}} \forall  (\boldsymbol{\alpha^{1}};\boldsymbol{\alpha^{2}}) \in E, \forall n$. \\
Second, assuming $\mathbf{K}$ is invertible,  we note that $\{\phi_{n,2}(\cdot)\}$ converges uniformly to $\mathbf{K^{-1}}$ since it is independent of $(\boldsymbol{\alpha^{1}};\boldsymbol{\alpha^{2}})$. Further it is uniformly bounded by $\mathbf{K^{-1}}= c_{\phi_{n,2}}$.\\
Third, we claim that $\{\phi_{n,2}(\cdot)\phi_{n,1}(\cdot)\}$ uniformly converges to $\begin{bmatrix} \boldsymbol{\alpha^{1}} \\ \boldsymbol{\alpha^{2}}  \end{bmatrix}$. Given $\epsilon >0$, there exists $N_{\epsilon}$ such that $|| \phi_{n,1}(\boldsymbol{\alpha^{1}};\boldsymbol{\alpha^{2}}) - \mathbf{K^{T}} \begin{bmatrix} \boldsymbol{\alpha^{1}} \\ \boldsymbol{\alpha^{2}}  \end{bmatrix} || < \frac{\epsilon}{2 \times c_{\phi_{n,2}}}, \forall n \geq N_{\epsilon}, \forall (\boldsymbol{\alpha^{1}};\boldsymbol{\alpha^{2}}) \in E $ and $|| \phi_{n,2}(\boldsymbol{\alpha^{1}};\boldsymbol{\alpha^{2}}) - \mathbf{K^{-1}} || < \frac{\epsilon}{2 \times c_{\phi_{n,1}}}, \forall n \geq N_{\epsilon}$. Thus for $n \geq N_{\epsilon}$,
\begin{align*}
&|| \phi_{n,2}(\boldsymbol{\alpha^{1}};\boldsymbol{\alpha^{2}})\phi_{n,1}(\boldsymbol{\alpha^{1}};\boldsymbol{\alpha^{2}}) -  \begin{bmatrix} \boldsymbol{\alpha^{1}} \\ \boldsymbol{\alpha^{2}}  \end{bmatrix} || = || \phi_{n,2}(\boldsymbol{\alpha^{1}};\boldsymbol{\alpha^{2}})\phi_{n,1}(\boldsymbol{\alpha^{1}};\boldsymbol{\alpha^{2}}) - \mathbf{K^{-1}} \phi_{n,1}(\boldsymbol{\alpha^{1}};\boldsymbol{\alpha^{2}}) + \\
&\mathbf{K^{-1}} \phi_{n,1}(\boldsymbol{\alpha^{1}};\boldsymbol{\alpha^{2}}) - \mathbf{K^{-1}} \mathbf{K^{T}} \begin{bmatrix} \boldsymbol{\alpha^{1}} \\ \boldsymbol{\alpha^{2}}  \end{bmatrix} || \leq || \phi_{n,1}(\boldsymbol{\alpha^{1}};\boldsymbol{\alpha^{2}}) || \; ||  \phi_{n,2}(\boldsymbol{\alpha^{1}};\boldsymbol{\alpha^{2}}) -  \mathbf{K^{-1}} || + ||  \mathbf{K^{-1}}||\\
&|| \phi_{n,1}(\boldsymbol{\alpha^{1}};\boldsymbol{\alpha^{2}}) -  \mathbf{K^{T}} \begin{bmatrix} \boldsymbol{\alpha^{1}}  \\ \boldsymbol{\alpha^{2}}  \end{bmatrix}  || < c_{\phi_{n,1}} \times \frac{\epsilon}{2 \times c_{\phi_{n,1}}} + c_{\phi_{n,2}} \times \frac{\epsilon}{2 \times c_{\phi_{n,2}}} = \epsilon, \forall  (\boldsymbol{\alpha^{1}};\boldsymbol{\alpha^{2}}) \in E
\end{align*}
By the above argument, $\{\mathbf{K}\phi_{n,2}(\cdot)\phi_{n,1}(\cdot)\}$, $\{\phi^{T}_{n,2}(\cdot)\mathbf{K}\phi_{n,2}(\cdot)\phi_{n,1}(\cdot)\}$ and $\{\phi^{T}_{n,1}(\cdot)\phi^{T}_{n,2}(\cdot)\mathbf{K}\phi_{n,2}(\cdot) \\ \phi_{n,1}(\cdot)\}$are uniformly bounded, uniformly converging sequences of continuous functions. Thus, the sequence of continuous functions $\{\phi_{n}(\cdot, \cdot)\}$ converges uniformly to  the continuous function $ \phi(\boldsymbol{\alpha^{1}},\boldsymbol{\alpha^{2}}) =\boldsymbol{\alpha^T}\mathbf{K} \boldsymbol{\alpha}$ where $\boldsymbol{\alpha}= (\boldsymbol{\alpha^{1}},\boldsymbol{\alpha^{2}})$. Thus, swapping limits and supremum, we get
\begin{align*}
\underset{n \to \infty} \lim || T(\varrho_{n}) ||^2 =& \underset{n \to \infty} \lim \underset{(\boldsymbol{\alpha^{1}};\boldsymbol{\alpha^{2}}) \in E} \sup \;   \phi_{n}(\boldsymbol{\alpha^{1}};\boldsymbol{\alpha^{2}}) =
\underset{(\boldsymbol{\alpha^{1}};\boldsymbol{\alpha^{2}}) \in E} \sup  \underset{n \to \infty} \lim  \phi_{n}(\boldsymbol{\alpha^{1}};\boldsymbol{\alpha^{2}}) \\
= & \underset{(\boldsymbol{\alpha^{1}};\boldsymbol{\alpha^{2}}) \in E} \sup \; \boldsymbol{\alpha^T}\mathbf{K} \boldsymbol{\alpha} \leq \lambda_{\text{max}}(\mathbf{K}).
\end{align*}
$\lambda_{\text{max}}(\mathbf{K}) \leq 1$ would be sufficient for the statement in the  current proposition to hold. However analyzing the blocks of $\mathbf{K}$ provides some insights,
\begin{align*}
&\mathbf{K} = \hspace{-4pt}
\begin{bmatrix} 
&\hspace{-4pt}\mathbf{\tilde{K}^{1}} &\mathbf{K^{12}} \\ 
&\hspace{-4pt}\mathbf{K^{{12}^T}} &\mathbf{\tilde{K}^{2}} 
\end{bmatrix} \hspace{-4pt}=\hspace{-5pt}
\underbrace{\begin{bmatrix} 
&\mathbb{I}_{m} &\mathbf{K^{12}}\mathbf{\tilde{K}^{2^{-1}}} \\ 
&0 &\mathbb{I}_{m}
\end{bmatrix}}_{P^T}\hspace{-4pt}
\underbrace{\begin{bmatrix}
&\mathbf{\tilde{K}^{1}}  - \mathbf{K^{12}}\mathbf{\tilde{K}^{2^{-1}}}\mathbf{K^{{12}^T}} &0\\
&0 &\mathbf{\tilde{K}^{2}} 
\end{bmatrix}}_{D}\hspace{-4pt}
\underbrace{\begin{bmatrix}
&\mathbb{I}_{m} &\mathbf{K^{12}}\mathbf{\tilde{K}^{2^{-1}}} \\ 
&0 &\mathbb{I}_{m}
\end{bmatrix}^{T}}_{P}
\end{align*}
Invoking \textit{Schur's complement}, we get the above decomposition of $\mathbf{K}$. Thus, 
\begin{align*}
\lambda_{\text{max}}(\mathbf{K}) =\underset{x \neq 0} \sup \;  \frac{x^{T}\mathbf{K}x}{x^Tx} = \underset{x \neq 0} \sup \;  \frac{x^{T}P^T D Px}{x^Tx} \overset{(a)}{=} \underset{y \neq 0} \sup \;  \frac{y^{T} D y}{y^Ty} = \lambda_{\text{max}}(D) \overset{(b)}{\leq} \lambda_{\text{max}}(\mathbf{\tilde{K}^{1}}),\lambda_{\text{max}}(\mathbf{\tilde{K}^{2}}).
\end{align*}
Equality $(a)$ is true because $P$ is a full rank matrix ($\mathcal{R}(P) = \mathbb{R}^{2m}$) with all eigenvalues equal to $1$. Note that $\mathbf{K^{12}}\mathbf{\tilde{K}^{2^{-1}}}\mathbf{K^{{12}^T}}$ is positive definite. Thus $\lambda_{\max}(\mathbf{\tilde{K}^{1}}  - \mathbf{K^{12}}\mathbf{\tilde{K}^{2^{-1}}}\mathbf{K^{{12}^T}}) \leq \lambda_{\max}(\mathbf{\tilde{K}^{1}})$. The set of eigenvalues of $D$ is the union of the set of eigenvalues of $\mathbf{\tilde{K}^{1}}  - \mathbf{K^{12}}\mathbf{\tilde{K}^{2^{-1}}}\mathbf{K^{{12}^T}}$ and $\mathbf{\tilde{K}^{2}}$.  Hence, $\lambda_{\text{max}}(D) \leq \lambda_{\text{max}}(\mathbf{\tilde{K}^{1}} - \mathbf{K^{12}}\mathbf{\tilde{K}^{2^{-1}}}\mathbf{K^{{12}^T}}),\lambda_{\text{max}}(\mathbf{\tilde{K}^{2}})$. Hence, inequality $(b)$ follows. Thus, $\underset{n \to \infty} \lim || T(\varrho_{n}) ||^2 \leq \max(\lambda_{\text{max}}(\mathbf{\tilde{K}^{1}}),\lambda_{\text{max}}(\mathbf{\tilde{K}^{2}})) \leq 1$
\end{proof}
It is not necessary that $\lambda_{\max}(\mathbf{K}) \leq 1 $ or $\max(\lambda_{\max}(\mathbf{\tilde{K}^{1}}), \lambda_{\max}(\mathbf{\tilde{K}^{2}})) \leq 1$. However, by suitably normalizing $\mathbf{K}$ or $\mathbf{\tilde{K}^{1}}, \mathbf{\tilde{K}^{2}}$ matrices for large $n$, we can ensure that $\lambda_{\text{max}}(\mathbf{K}) \leq 1$ and thus guarantee that $|| T(\varrho_{n_{k_l}})||^2 $ converges to a value less than or equal to $1$.
\subsection{Consistency of the Learning Algorithm}\label{Subsection 3.3}
We define the learning operator for the multi-agent system for each iteration which takes into account learning at the agents, the uploading operation, the fusion operation at the fusion center and the downloading operation on to the agents. Then, we go on to define the learning operator for the system for an execution of the learning algorithm and prove consistency properties of Algorithm \ref{Algorithm 1}.
\begin{definition}\label{Definition 8}
The multi-agent learning and upload operator, $\bar{T}$, is defined as $\bar{T}: H^{1} \times H^{2} \times H^{1} \times H^{2}  \to H \times H$ defined as
\begin{align*}
\bar{T}(\varrho^{1};\varrho^{2})[f^{1}; f^{2};\psi^{1}_{(x^{1},y^{1})}, \psi^{2}_{(x^{2},y^{2})}] =  \Big[\hat{L}^{1} \circ \bar{T}^{1}(\varrho^{1}) [  f^{1};  \psi^{1}_{(x^{1},y^{1})}]; \hat{L}^{2} \circ \bar{T}^{2}(\varrho^{2})[f^{2}; \psi^{2}_{(x^{2},y^{2})}]\Big].
\end{align*}
\end{definition}
\begin{proposition}
Let $\{\varrho^{1}_{n}\}$ and $\{\varrho^{2}_{n}\}$ be sequences which diverge to $\infty$. Then, there exists  a subsequence of $\{\bar{T}(\varrho^{1}_{n}, \varrho^2_n)\}$, $\{\bar{T}(\varrho^{1}_{n_k}, \varrho^2_{n_K})\}$ such that $\underset{k \to \infty} \lim || \bar{T}(\varrho^{1}_{n_k}, \varrho^2_{n_K})|| = 1 $  
\end{proposition}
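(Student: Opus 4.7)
The plan is to reduce the statement to a double application of Proposition 4 by exploiting the block structure of $\bar{T}$. Viewed as an operator on $H^{1}\times H^{2}\times H^{1}\times H^{2}$ with values in $H\times H$, Definition \ref{Definition 8} shows that $\bar{T}(\varrho^1,\varrho^2)$ is block-diagonal: it acts as $\hat{L}^{1}\circ\bar{T}^{1}(\varrho^1)$ on the coordinate pair $(f^1,\psi^1_{(x^1,y^1)})$ and as $\hat{L}^{2}\circ\bar{T}^{2}(\varrho^2)$ on $(f^2,\psi^2_{(x^2,y^2)})$, with no cross-coupling. Equipping both the domain and codomain with the natural product (sum-of-squared-norms) Hilbert norm, this decoupling forces
\[
\|\bar{T}(\varrho^1,\varrho^2)\| \;=\; \max\Bigl(\|\hat{L}^{1}\circ\bar{T}^{1}(\varrho^1)\|,\ \|\hat{L}^{2}\circ\bar{T}^{2}(\varrho^2)\|\Bigr).
\]
My first step is to verify this identity by observing that on any unit vector $(u^1,u^2)$ in the product space the squared image norm splits additively into an agent-1 and an agent-2 contribution, and the supremum over the unit sphere is attained by concentrating all the mass in whichever coordinate has the larger component norm.

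Next, I extract the required subsequence by a standard nested/diagonal argument. Applying Proposition \ref{Proposition 4} to $\{\varrho^1_n\}$ yields a subsequence $\{\varrho^1_{n_k}\}$ with $\|\bar{T}^{1}(\varrho^1_{n_k})\|\to 1$. The corresponding subsequence $\{\varrho^2_{n_k}\}$ still diverges to infinity, so Proposition \ref{Proposition 4} applied a second time produces a further subsequence $\{\varrho^2_{n_{k_\ell}}\}$ along which $\|\bar{T}^{2}(\varrho^2_{n_{k_\ell}})\|\to 1$. The diagonal pair $(\varrho^1_{n_{k_\ell}},\varrho^2_{n_{k_\ell}})$ then realizes $\|\bar{T}^{i}(\cdot)\|\to 1$ simultaneously for $i=1,2$.

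Finally, the upload operators must be incorporated. By Corollary \ref{Corollary 19}, the map $\hat{L}^{i}:H^{i}\to H$ is a norm-preserving embedding from $H^{i}$ onto its image in the sum RKHS $H$ (with the native $H$-norm on the image), hence $\|\hat{L}^{i}\circ\bar{T}^{i}(\varrho^{i})\|=\|\bar{T}^{i}(\varrho^{i})\|$. Substituting into the decoupled formula above along the diagonal subsequence immediately gives $\lim_{\ell\to\infty}\|\bar{T}(\varrho^1_{n_{k_\ell}},\varrho^2_{n_{k_\ell}})\|=1$, which is the claim. The main obstacle is precisely the norm-preserving property of $\hat{L}^{i}$: if the upload were only a contraction, the above argument would yield just $\limsup\|\bar{T}(\cdot)\|\le 1$, and a matching lower bound would have to be built by tracking the explicit near-maximizer $(\boldsymbol{\alpha},x,y)\in E^i$ produced inside the Arzel\`a--Ascoli step of Proposition \ref{Proposition 4} through the upload map and verifying that its $H$-norm is not shrunk in the limit. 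Under the finite-dimensional Assumption and the kernel-sum construction of $H$, however, Corollary \ref{Corollary 19} supplies the isometry directly and no additional work is required.
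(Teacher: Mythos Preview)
Your approach is essentially the paper's: both exploit the block-diagonal structure of $\bar{T}$, invoke Proposition~\ref{Proposition 4} twice via a nested-subsequence argument, and finish with Corollary~\ref{Corollary 19}. Your max formula $\|\bar{T}\|=\max_i\|\hat{L}^i\circ\bar{T}^i\|$ is a cleaner packaging of the paper's sup-over-ratio computation, but the content is the same.

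There is one genuine gap. You assert that Corollary~\ref{Corollary 19} makes $\hat{L}^i$ a norm-preserving isometry; it does not. The corollary only gives $\|\hat{L}^i\|=1$, i.e.\ $\|f\|_H\le\|f\|_{H^i}$ with equality achieved for \emph{some} $f$. Whenever $H^1\cap H^2\neq\{0\}$ the embedding is a strict contraction on part of $H^i$, so $\|\hat{L}^i\circ\bar{T}^i(\varrho)\|$ need not equal $\|\bar{T}^i(\varrho)\|$. Your own caveat (``if the upload were only a contraction\dots'') is precisely the situation here, and the fallback you sketch is what is actually required: since $\bar{T}^i(\varrho)[f,0]\to f$ as $\varrho\to\infty$, one obtains $\liminf_\varrho\|\hat{L}^i\circ\bar{T}^i(\varrho)\|\ge\sup_{\|f\|_{H^i}=1}\|f\|_H=\|\hat{L}^i\|=1$, which together with the trivial upper bound closes the argument. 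The paper's own proof has the analogous soft spot at its equality~(a), and it too flags that the equality may fail to hold.
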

\begin{proof}
The norm of the multi-agent learning operator can be computed as,
\begin{align*}
|| \bar{T}(\varrho^{1}_{n};\varrho^{2}_{n})||^{2} &= \hspace{-10pt}\underset{(f^{1};\psi^{1}_{(x^{1},y^{1})};f^{2};\psi^{2}_{(x^{2},y^{2})}) \neq \theta} \sup \frac{\sum_{i= 1, 2}|| \hat{L}^{i} \circ \bar{T}^{i}(\varrho^{i}_{n})[f^{i},\psi^{i}_{(x^{i},y^{i})}  ] ||^2}{ || f^{1} ||^{2}_{H^{1}}  \hspace{-3pt} + \hspace{-3pt}|| \psi^{1}_{(x^{1},y^{1})} ||^{2}_{H^1}  \hspace{-3pt} +  \hspace{-3pt} || f^{2} ||^{2}_{H^{2}}  \hspace{-3pt} +  \hspace{-3pt} || \psi^{2}_{(x^{2},y^2)} ||^{2}_{H^2}} \\
& \overset{(a)}{=} \hspace{-17pt}\underset{(f^{1};\psi^{1}_{(x^{1},y^{1})};f^{2};\psi^{2}_{(x^{2},y^{2})}) \neq \theta} \sup \frac{\sum_{i= 1, 2}|| \hat{L}^{i} ||\; || \bar{T}^{i}(\varrho^{i}_{n})|| \; (||f^{i}||^2 + ||\psi^{i}_{(x^{i},y^{i})}||^2)}{ || f^{1} ||^{2}_{H^{1}}  \hspace{-3pt} + \hspace{-3pt}|| \psi^{1}_{(x^{1},y^{1})} ||^{2}_{H^1}  \hspace{-3pt} +  \hspace{-3pt} || f^{2} ||^{2}_{H^{2}}  \hspace{-3pt} +  \hspace{-3pt} || \psi^{2}_{(x^{2},y^2)} ||^{2}_{H^2}}\\
\underset{k \to \infty} \lim \hspace{5pt}|| \bar{T}(\varrho^{1}_{n_k}; \varrho^{2}_{n_k} )||^{2} &= \hspace{-15pt} \underset{(\{f^{i};\psi^{i}_{(x^{i},y^{i})}\}_{i=1,2}) \neq \theta}\sup \; \hspace{-3pt} \underset{k \to \infty} \lim \frac{\sum_{i= 1, 2} || \hat{L}^{i} ||\; || \bar{T}^{i}(\varrho^{i}_{n})|| \; (||f^{i}||^2 + ||\psi^{i}_{(x^{i},y^{i})}||^2)}{ \sum_{i=1,2} || f^{i} ||^{2}_{H^{i}}  + || \psi^{i}_{(x^{i},y^{i})} ||^{2}_{H^i}}\\
&\overset{(b)}{=} \frac{\sum_{i= 1, 2} (||f^{i}||^2 + ||\psi^{i}_{(x^{i},y^{i})}||^2)}{ \sum_{i=1,2} || f^{i} ||^{2}_{H^{i}}  + || \psi^{i}_{(x^{i},y^{i})} ||^{2}_{H^i}} =1,
\end{align*}
where equality $(a)$ is achieved by choosing a sequence $\{f^{i}_{k}, \psi^{i}_{x^{i}_{k}, y^i_{k}}\} \subset H^{i}$ such that the norm is achieved; in case such that a sequence does not exist then equality $(a)$ would be replaced with strict inequality; $(b)$ follows from Proposition \ref{Proposition 4} and Corollary \ref{Corollary 19}. 
\end{proof}
We note that mapping from $(x,y) \mapsto \psi^{i}_{(x,y)}$ is a nonlinear map however bounded under suitable assumptions on $K$. This mapping has not been incorporated into the definition of $\bar{T}$ to avoid complications associated with nonlinearity. 
\begin{definition}
The multi-agent download operator, $\hat{T}: H \to H^{1} \times H^{2}$, is defined as 
\begin{align*}
\hat{T}(f): \frac{1}{c_d} \Big[\sqrt{\bar{L}^{1}} \circ \Pi_{\mathcal{N}\big(\sqrt{\bar{L}^1}\big)^{\perp}}\big(f\big) ; \sqrt{\bar{L}^{2}} \circ \Pi_{\mathcal{N}\big(\sqrt{\bar{L}^2}\big)^{\perp}}\big(f\big)  \Big] 
\end{align*}
where $\bar{L}^{i}$ is defined in Lemma \ref{Lemma 22} and $c_{d} = \underset{(f \neq \theta)} \sup \; 1 + \frac{\langle f, \Pi_{\mathcal{N}\big(\sqrt{\bar{L}^1}\big)^{\perp}}\Pi_{\mathcal{N}\big(\sqrt{\bar{L}^2}\big)^{\perp}}\big(f \big) \rangle_{H}}{ || f||^{2}_H}$. The norm of $\hat{T}$, $||\hat{T} ||$ is equal to 1.
\end{definition}
The \textit{unnormalized} norm of the download operator is,
\begin{align*}
|| \hat{T} ||^{2} &= \underset{(f \neq \theta)} \sup \; \frac{ || \sqrt{\bar{L}^{1}} \circ \Pi_{\mathcal{N}\big(\sqrt{\bar{L}^1}\big)^{\perp}}\big(f\big) ||^{2}_{H^1} + || \sqrt{\bar{L}^{2}} \circ \Pi_{\mathcal{N}\big(\sqrt{\bar{L}^2}\big)^{\perp}}\big(f\big) ||^{2}_{H^2}}{ || f ||^{2}_{H}},\\
&=  \underset{(f \neq \theta)} \sup \frac{||\Pi_{\mathcal{N}\big(\sqrt{\bar{L}^1}\big)^{\perp}}\big(f\big) ||^{2}_{H} + || \Pi_{\mathcal{N}\big(\sqrt{\bar{L}^2}\big)^{\perp}}\big(f\big) ||^{2}_{H}}{ || f ||^{2}_{H}}, \text{equality invoking Theorem \ref{Theorem 23}} ,
\end{align*}
which is equal to $c_d$. We note that the normalization of the download operator is not needed for all $n$. For $n$ sufficiently large, we normalize the downloaded functions by $c_d$ to so that $|| \hat{T} || = 1$. This done to ensure that the learning operator (defined next) has norm converging to $1$. To summarize (as in Figure \ref{Figure 2}), the operator $\bar{T}(\varrho^{1}_n, \varrho^2_n)$ takes in $(\bar{f}^{1}_{n-1}; \psi^{2}_{x^1_n, y^1_n}, \bar{f}^{2}_{n-1}; \psi^{2}_{x^2_n, y^2_n})$ and outputs $f^{1}_n, f^{2}_n$, operator $T(\varrho_n)$ takes in $f^{1}_n$ and $f^2_n$ and outputs $f_{n}$, operator $\hat{T}$ takes in $f_{n}$ and outputs $\bar{f}^{1}_n$ and $\bar{f}^{2}_n$. Thus, the learning operator at stage $n$ is defined as the composition of the operators, $\bar{T}(\varrho^{1}_n, \varrho^2_n)[\cdot]$, $T(\varrho_n)[\cdot]$, and $\hat{T}(\cdot)$, as below.
\begin{figure}
\begin{center}
\includegraphics[scale=0.44]{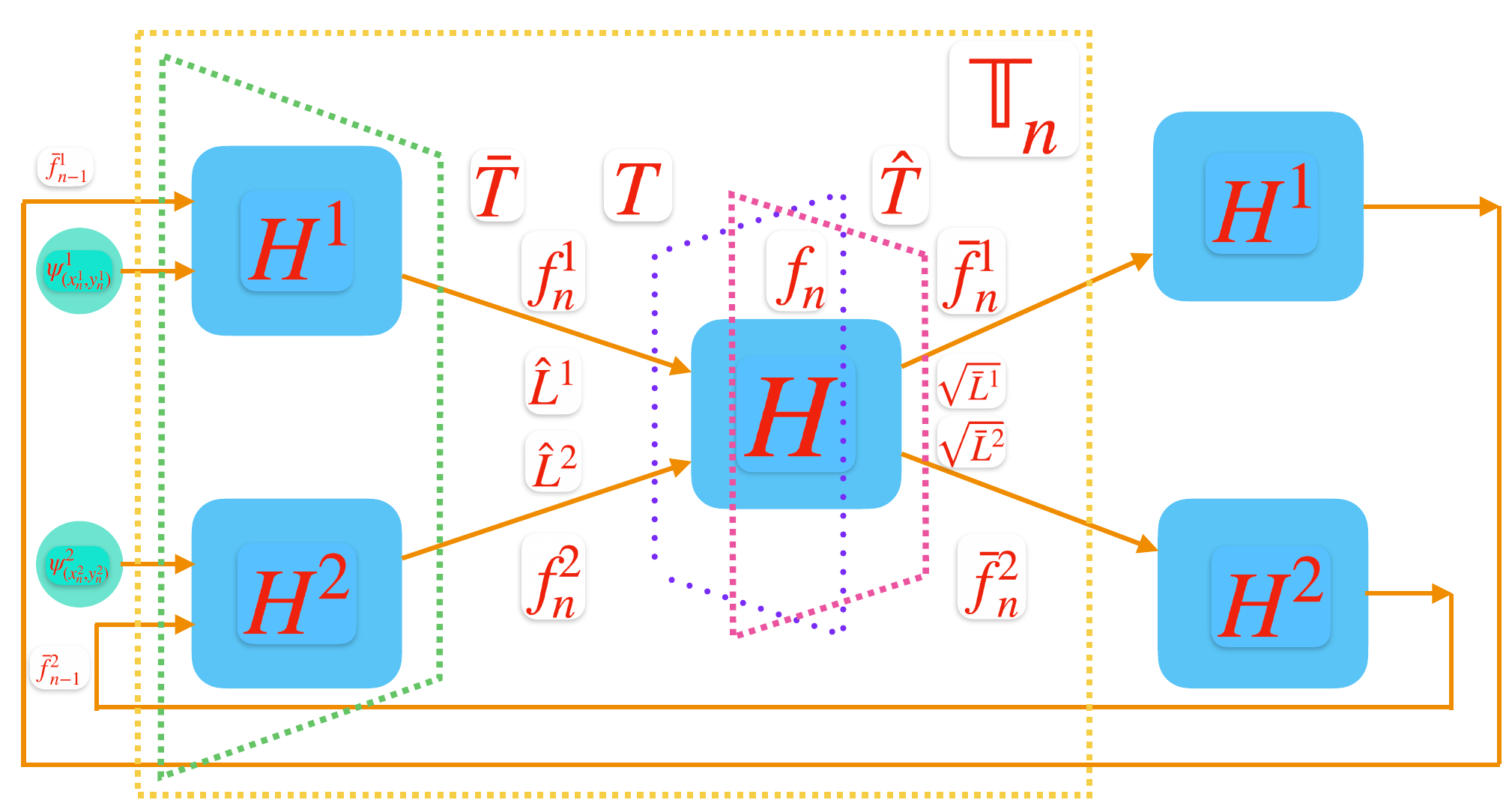}
\caption{The learning operator at stage $n$, $\mathbb{T}_{n}$, obtained through composition of operators, $\bar{T}, T, \hat{T}.$}\label{Figure 2}
\end{center}
\end{figure}
\begin{definition}\label{Definition 9}
The learning operator at stage $n$ is defined as $ \mathbb{T}_{n}: H^{1} \times H^{2} \times H^{1} \times H^{2}\to H^{1} \times H^{2}$, 
\begin{align*}
 \mathbb{T}_{n}(\varrho^{1}_{n}, \varrho^{2}_{n}, \varrho_n)[f^{1};f^{2};\psi^{1}_{x^{1},y^{1}}; \psi^{2}_{x^{2},y^{2}}] = \hat{T} \circ  T(\varrho_{n}) \circ \bar{T}(\varrho^{1}_{n}, \varrho^{2}_{n})[f^{1};f^{2}; \psi^{1}_{x^{1},y^{1}}; \psi^{2}_{x^{2},y^{2}}].  
\end{align*}
\end{definition}
\begin{proposition}\label{Proposition 10}
Let $\{\varrho^{1}_{n}\}$, $\{\varrho^{2}_{n}\}$ and $\{\varrho_{n}\}$ be sequences which diverge to $\infty$.  There exists a subsequence of $\{ \mathbb{T}_{n}\}$, $ \{\mathbb{T}_{n_{k_l}}\}$ such that $\underset{ l \to \infty} \lim \; \prod^{l}_{p=1} || \mathbb{T}_{n_{k_p}} || = c_{M_1} < \infty$ and $\underset{ l \in \mathbb{N}} \sup \; \prod^{l}_{p=1} || \mathbb{T}_{n_{k_p}} || = c_{M_2} < \infty$.  
\end{proposition}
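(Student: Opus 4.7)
The plan is to exploit submultiplicativity of the operator norm applied to the composition $\mathbb{T}_n = \hat{T} \circ T(\varrho_n) \circ \bar{T}(\varrho^1_n,\varrho^2_n)$ and then to extract a subsequence carefully enough that both the infinite product of norms converges and the supremum of partial products stays finite. First I would write
\[
\|\mathbb{T}_n\| \;\leq\; \|\hat{T}\|\;\|T(\varrho_n)\|\;\|\bar{T}(\varrho^1_n,\varrho^2_n)\|.
\]
By the normalization built into the definition of the download operator, $\|\hat{T}\|=1$; by Proposition~\ref{Proposition 7} together with the normalization remark following it, $\lim_n \|T(\varrho_n)\| \leq 1$; and by the proposition immediately preceding this one, along some subsequence $\{n_k\}$ the agent-side operator norm $\|\bar{T}(\varrho^1_{n_k},\varrho^2_{n_k})\|$ tends to $1$. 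Combining these three factor bounds along a common subsequence (by a standard diagonal extraction over the three divergent regularization sequences $\{\varrho^1_n\}$, $\{\varrho^2_n\}$, $\{\varrho_n\}$), I arrive at a subsequence, still denoted $\{n_k\}$, along which $\limsup_k \|\mathbb{T}_{n_k}\| \leq 1$.

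Second, I would use this $\limsup$ bound to extract a further subsequence $\{n_{k_l}\}$ satisfying $\|\mathbb{T}_{n_{k_l}}\| \leq 1 + 2^{-l}$ for every $l$. Such a subsequence exists because, for each fixed $l$, all but finitely many indices along $\{n_k\}$ obey the corresponding bound; a greedy ``choose $n_{k_l}$ past $n_{k_{l-1}}$'' procedure succeeds at every step. The partial products then obey
\[
\prod_{p=1}^{l}\|\mathbb{T}_{n_{k_p}}\| \;\leq\; \prod_{p=1}^{l}(1+2^{-p}) \;\leq\; \prod_{p=1}^{\infty}(1+2^{-p}) \;<\; \infty,
\]
the infinite product on the right being convergent because $\sum_p 2^{-p}<\infty$. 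This at once yields the supremum bound $c_{M_2} := \sup_l \prod_{p=1}^l \|\mathbb{T}_{n_{k_p}}\| < \infty$.

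The remaining, and most delicate, step is showing that the limit itself exists, since the individual factors can lie above or below $1$, so the partial products are not automatically monotone. I would handle this by a dichotomy applied to the refined subsequence. If infinitely many $l$ satisfy $\|\mathbb{T}_{n_{k_l}}\| \leq 1$, pass to that sub-subsequence; the partial products then become monotone decreasing and bounded below by $0$, hence converge. Otherwise, all but finitely many $l$ satisfy $\|\mathbb{T}_{n_{k_l}}\| \in (1,\,1+2^{-l}]$, so the partial products are eventually monotone increasing and bounded above by $\prod_{p=1}^{\infty}(1+2^{-p})$, and again converge. In either case, the common limit provides a finite $c_{M_1}$. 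The main obstacle is precisely this non-monotonicity of the partial products; the dichotomy resolves it cleanly and is the whole reason a second subsequence extraction beyond the one producing $c_{M_2}$ is required.
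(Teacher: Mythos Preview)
Your argument is correct and reaches the same conclusion as the paper, but the route diverges at the point where the existence of the limit $c_{M_1}$ is established. The paper, rather than stopping at $\limsup_k\|\mathbb{T}_{n_k}\|\leq 1$, first applies Bolzano--Weierstrass to the bounded sequence $a_{n_k}=\|\mathbb{T}_{n_k}\|$ to extract a genuinely convergent subsequence with limit $c_5\in(0,1]$, normalizes so that $c_5=1$, and then picks a further subsequence satisfying the two-sided bound $|a_{n_{k_l}}-1|\leq 2^{-l}$. From this it obtains $|\ln a_{n_{k_l}}|\leq 1/(2^l-1)$, so $\sum_l \ln a_{n_{k_l}}$ is absolutely convergent and hence Cauchy, which gives convergence of the partial products directly, with no case split. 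Your approach instead uses only the one-sided bound $a_{n_{k_l}}\leq 1+2^{-l}$, which suffices for the supremum bound $c_{M_2}$ but leaves the sign of $\ln a_{n_{k_l}}$ undetermined, forcing the dichotomy on whether infinitely many factors lie at or below $1$. Your dichotomy is sound (in the first case the sub-subsequence has nonincreasing partial products bounded below by $0$; in the second the original partial products are eventually nondecreasing and bounded above), so both approaches work. The paper's logarithm argument is the classical infinite-product technique and yields a slightly sharper conclusion, namely $c_{M_1}\in\big[\prod_l(1-2^{-l}),\,\prod_l(1+2^{-l})\big]$ and in particular $c_{M_1}>0$, whereas your Case~1 allows $c_{M_1}=0$; the proposition as stated does not require positivity, so this is not a defect, but it is worth noting.
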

\begin{proof}
Let $a_{n} = || \mathbb{T}_{n}||$. Then, $0 \leq a_{n} \leq || \hat{T} || \times || T(\varrho_n)  || \times ||\bar{T}_{n}(\{(x^{i}_{n};y^{i}_{n};\varrho^{i}_{n})\}_{i=1,2}) ||$. From Propositions \ref{Proposition 4} and \ref{Proposition 7}, there is subsequence $\{a_{n_k}\}$ that converges to a positive constant, $c_5$, less than or equal to 1. W.L.O.G we assume that the subsequence converges to 1 as the sequence $\{\frac{a_n}{c_5}\}$ converges to 1. There is a subsequence  of $\{a_{n_k}\}$, $\{a_{n_{k_l}}\}$ such that $ | a_{n_{k_l}} -1 | \leq \frac{1}{2^l}$. We define the sequence $b_{n} = \prod^{n}_{l=1}a_{n_{k_l}}$. We claim that the sequence $\{b_{n}\}$ converges. Indeed, consider $\ln(b_{n}) =\sum^{n}_{l=1}\ln(a_{n_{k_l}})$. Since  $\frac{-1}{2^l-1} \leq \ln(a_{n_{k_l}}) \leq \frac{1}{2^l}$, $|\ln(a_{n_{k_l}})| \leq \frac{1}{2^{l}-1}$. Since $\sum^{\infty}_{l=1}\frac{1}{2^{l}-1}$ converges, for every $\epsilon > 0$, $\exists N_{\epsilon}$ such that $\sum^{\infty}_{l=N_{e}} |\ln(a_{n_{k_l}})| < \epsilon$. Thus, $|\ln(b_{p}) - \ln(b_{q})| \leq \sum^{p}_{l=q} |\ln(a_{n_{k_l}})| < \epsilon \forall p, q \geq N_{\epsilon}$. Thus, the sequence $\{\ln(b_n)\}$ is cauchy and converges to a real number. By continuity of the natural logarithm function, sequence $\{b_{n}\}$ converges to a real number lying in the interval $\big[\prod^{\infty}_{l=1} \frac{2^l-1}{2^l}, \prod^{\infty}_{l=1} \frac{2^l+1}{2^l}\big]$. From the construction of the subsequence $\{a_{n_{k_l}}\}$, it follows that $\underset{ l \in \mathbb{N}} \sup \; \prod^{l}_{p=1} || \mathbb{T}_{n_{k_p}} || \leq  \prod^{\infty}_{l=1} \frac{2^l+1}{2^l} < \infty$.
\end{proof}
The groundwork needed to define the learning operator given initial estimates, $f^{1}, f^{2}$ and a sequence of data points $(\{(x^{1}_n; y^{1}_{n})\}, \{(x^{2}_n; y^{2}_{n})\})$ at the agents has been laid. Consider the set of operators $\{\mathbb{T}_{n}\}^{n_{k_1}-1}_{n=1} \cup \{\mathbb{T}_{n_{k_l}}\}^{\infty}_{l=1}$, where the sequence of operators $\{\mathbb{T}_{n_{k_l}}\}^{\infty}_{l=1}$ is such that $\underset{l \in \mathbb{N}} \sup \prod^{l}_{p=1} || \mathbb{T}_{n_{k_l}} ||$ is finite (Proposition  \ref{Proposition 10}). We re-index the set to obtain the countable collection of operators $\{\mathbb{T}_{n}\}^{\infty}_{n=1}$ for which $\underset{n \in \mathbb{N}} \sup \prod^{n}_{l=1} || \mathbb{T}_{l} || = c_{M} < \infty$.
\begin{definition}\label{Definition 11}
Given $(f^{1}_0;f^{2}_0)$, $\{\varrho^{1}_n, \varrho^{2}_n, \varrho_n\}$, and sequence of data points $\{(x^{1}_n; y^{1}_{n})\}, \{(x^{2}_n;y^{2}_{n}\})$, the learning operator, $\mathbb{T}:  H^{1} \times H^{2} \times H^{1} \times H^{2} \to H^{1} \times H^{2}$ is recursively  defined as follows
\begin{align*}
\bar{\mathbb{T}}_{n}(\{(\varrho^{1}_k;\varrho^{2}_k; \varrho_k)\}^{n}_{k=1})\Big[f^{1}_{0},f^{2}_{0};\psi^{1}_{x^{1}_1,y^{1}_1};& \psi^{2}_{x^{2}_1,y^{2}_1}\Big] = \mathbb{T}_{n}(\varrho^{1}_{n}, \varrho^{2}_{n}, \varrho_n)\Big[\bar{\mathbb{T}}_{n-1}(\{(\varrho^{1}_k;\varrho^{2}_k; \varrho_k)\}^{n-1}_{k=1}\})\\
&\Big[\bar{f}^{1}_{n-2},\bar{f}^{2}_{n-2}, ;\psi^{1}_{x^{1}_{n-1},y^{1}_{n-1}}; \psi^{2}_{x^{2}_{n-1},y^{2}_{n-1}}\Big]\psi^{1}_{x^{1}_{n},y^{1}_{n}}; \psi^{2}_{x^{2}_{n},y^{2}_{n}}\Big]\;,\\
\bar{\mathbb{T}}_{1}(\varrho^{1}_1;\varrho^{2}_1; \varrho_1)\Big[f^{1}_{0},f^{2}_{0};\psi^{1}_{x^{1}_1,y^{1}_1}; \psi^{2}_{x^{2}_1,y^{2}_1}\Big] &=  \mathbb{T}_{1}(\varrho^{1}_1;\varrho^{2}_1; \varrho_1)\Big[f^{1}_{0},f^{2}_{0};\psi^{1}_{x^{1}_1,y^{1}_1}; \psi^{2}_{x^{2}_1,y^{2}_1}\Big], \\
\mathbb{T}(\{(\varrho^{1}_n;\varrho^{2}_n;\varrho_n)\})\Big[f^{1}_0;f^{2}_0;\psi^{1}_{x^{1}_1,y^{1}_1}; \psi^{2}_{x^{2}_1,y^{2}_1}\Big] &= \underset{ n \to \infty } \lim \bar{\mathbb{T}}_{n}(\{(\varrho^{1}_k;\varrho^{2}_k; \varrho_k)\}^{n}_{k=1})\Big[f^{1}_{0},f^{2}_{0};\psi^{1}_{x^{1}_1,y^{1}_1}; \psi^{2}_{x^{2}_1,y^{2}_1}\Big].
\end{align*}
\end{definition}
\begin{proposition}\label{Proposition 12}
Let $\{\varrho^{1}_{n}\}$, $\{\varrho^{2}_{n}\}$ and $\{\varrho_{n}\}$ be sequences which diverge to $\infty$. Then, $\bar{\mathbb{T}}_{n}(\cdot)\Big[f^{1}_{0},f^{2}_{0};\\ \psi^{1}_{x^{1}_1,y^{1}_1}; \psi^{2}_{x^{2}_1,y^{2}_1}\Big] $ is a uniformly equicontinuous sequence of linear operators. 
\end{proposition}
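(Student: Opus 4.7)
The plan is to reduce the uniform equicontinuity claim to a uniform operator-norm bound on $\{\bar{\mathbb{T}}_n\}$, exploiting the fact that each $\bar{\mathbb{T}}_n$ is linear on its (product) domain. For any linear operator $T$ between normed spaces one has $\|T(u) - T(v)\| = \|T(u-v)\| \leq \|T\| \, \|u-v\|$, so a uniform bound $\sup_n \|\bar{\mathbb{T}}_n\| \leq C$ immediately yields the common modulus of continuity $\delta = \epsilon / C$ required for uniform equicontinuity of the family. Thus the whole proof reduces to producing that uniform bound.

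First I would unroll the recursion in Definition \ref{Definition 11}. At stage $k$, the single-step operator $\mathbb{T}_k$ consumes both the previous downloaded iterate $(\bar{f}^1_{k-1}, \bar{f}^2_{k-1})$ and the newly received data pair $(\psi^1_{x^1_k, y^1_k}, \psi^2_{x^2_k, y^2_k})$. Viewing $\bar{\mathbb{T}}_n$ as a linear map whose domain is the product space carrying the initial estimate together with all data inputs for steps $1, \ldots, n$, the recursion becomes an explicit composition: at each level $k$ we append the $k$th data pair and apply $\mathbb{T}_k$. Submultiplicativity of the operator norm under composition then gives the clean bound $\|\bar{\mathbb{T}}_n\| \leq \prod_{k=1}^{n} \|\mathbb{T}_k\|$.

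Next I would invoke Proposition \ref{Proposition 10} together with the reindexing remark immediately preceding Definition \ref{Definition 11}: after passing to the subsequence of stages that keeps the factors sufficiently close to $1$, the partial products satisfy $\sup_n \prod_{l=1}^{n} \|\mathbb{T}_l\| = c_M < \infty$. Combining this with the composition bound produces $\|\bar{\mathbb{T}}_n\| \leq c_M$ for every $n$. Applying the first paragraph with $C = c_M$ concludes that the family $\{\bar{\mathbb{T}}_n\}$ admits the common modulus of continuity $\delta(\epsilon) = \epsilon/c_M$, which is uniform equicontinuity.

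The main obstacle I anticipate is not analytic but organizational: making the block-triangular composition rigorous when each $\mathbb{T}_k$ mixes a ``state'' input $(f^1, f^2)$ with a ``data'' input $(\psi^1, \psi^2)$, and when the output of $\bar{\mathbb{T}}_{n-1}$ must be paired with fresh step-$n$ data before being fed into $\mathbb{T}_n$. Once the domain is set up as a product space indexed by stage, each stage contributes exactly the factor $\|\mathbb{T}_k\|$ to the operator norm of $\bar{\mathbb{T}}_n$, with no additional growth coming from the data-appending map (which is an isometric inclusion into the enlarged product). With that bookkeeping in place, the argument proceeds as outlined and Proposition \ref{Proposition 10} supplies the quantitative control needed for the uniform bound.
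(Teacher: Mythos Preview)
Your proposal is correct and follows essentially the same route as the paper: linearity reduces uniform equicontinuity to a uniform operator-norm bound, the recursion gives $\|\bar{\mathbb{T}}_n\|\le\prod_{k=1}^n\|\mathbb{T}_k\|$, and Proposition~\ref{Proposition 10} (after the reindexing) yields the bound $c_M$, whence $\delta=\epsilon/c_M$. Your extra paragraph on the block-triangular bookkeeping for the state/data inputs is a welcome clarification the paper leaves implicit, but it does not change the argument.
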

\begin{proof}
Since the composition of linear operators is linear, $\{\mathbb{T}_{n}\}$ and $\{\bar{\mathbb{T}}_{n}\}$ are sequences of linear operators. From Proposition \ref{Proposition 10} , the sequence of operators $\{\bar{\mathbb{T}}_{n}\}$ is uniformly bounded, $||\bar{\mathbb{T}}_{n}|| \leq \prod^{n}_{k=1} || \mathbb{T}_{n}|| \leq  \underset{n \in \mathbb{N}} \sup \prod^{n}_{k=1} || \mathbb{T}_{k} || = c_{M} < \infty $. Given $\epsilon >0$, let $\delta < \frac{\epsilon}{c_{M}}$. Then, $|| f - g|| < \delta $ implies $|| \bar{\mathbb{T}}_{n}(f) - \bar{\mathbb{T}}_{n}(g) || \leq || \bar{\mathbb{T}}_{n} || || f- g || < c_{M} \times \frac{\epsilon}{c_{M}} < \epsilon \forall n \in \mathbb{N}$.
\end{proof}
For the above proof and the following results, we suppress the data related arguments in the definition of operators $\bar{\mathbb{T}}_{n}(\cdot)[\cdot]$ and $\mathbb{T}(\cdot)[\cdot]$ and use the notation $\bar{\mathbb{T}}_{n}(\cdot)$ and $\mathbb{T}(\cdot)$ where the arguments are the functions that they operate on. 
\begin{theorem}\label{Theorem 13}
The learning operator is well defined in the following sense; there exists a subsequence of $\{\bar{\mathbb{T}}_{n}\Big(f^{1}_0;f^{2}_0; \psi^{1}_{x^{1}_1,y^{1}_1}; \psi^{2}_{x^{2}_1,y^{2}_1}\Big)\}$ which strongly converges to $(f^{1,*}; f^{2,*})\in H^{1} \times H^{2}$ for any $(f^{1}_0;f^{2}_0; \psi^{1}_{x^{1}_1,y^{1}_1}; \psi^{2}_{x^{2}_1,y^{2}_1}) \in H^{1} \times H^{2} \times H^{1} \times H^{2}$. $(f^{1,*}; f^{2,*})$ depends on $(f^{1}_0;f^{2}_0; \psi^{1}_{x^{1}_1,y^{1}_1}; \psi^{2}_{x^{2}_1,y^{2}_1})$. $\mathbb{T}$ is a linear and bounded operator.
\end{theorem}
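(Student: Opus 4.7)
The plan is to exploit two facts established earlier in the paper: the finite-dimensionality of $H^1$ and $H^2$ (from the standing Assumption), and the uniform boundedness of $\{\bar{\mathbb{T}}_n\}$ in operator norm established in Proposition \ref{Proposition 10} and Proposition \ref{Proposition 12}. Since $H^1$ and $H^2$ are finite-dimensional, so is the product domain $H^1 \times H^2 \times H^1 \times H^2$ and its image $H^1 \times H^2$; consequently the space of bounded linear operators $\mathcal{B}(H^1 \times H^2 \times H^1 \times H^2, H^1 \times H^2)$ is itself a finite-dimensional normed vector space, in which all norms are equivalent and closed balls are compact.

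First, I would invoke Proposition \ref{Proposition 12} to assert that $\{\bar{\mathbb{T}}_n\}$ is a sequence of linear operators, and that the uniform bound $\|\bar{\mathbb{T}}_n\| \le c_M$ derived there places this sequence in the closed ball of radius $c_M$ in the operator space above. The Bolzano--Weierstrass theorem, applied in this finite-dimensional normed space, then produces a subsequence $\{\bar{\mathbb{T}}_{n_k}\}$ and an operator $\mathbb{T}^\ast$ with $\|\bar{\mathbb{T}}_{n_k} - \mathbb{T}^\ast\| \to 0$. Writing $A = (f^1_0; f^2_0; \psi^{1}_{x^{1}_1,y^{1}_1}; \psi^{2}_{x^{2}_1,y^{2}_1})$, the estimate $\|\bar{\mathbb{T}}_{n_k}[A] - \mathbb{T}^\ast[A]\|_{H^1 \times H^2} \le \|\bar{\mathbb{T}}_{n_k} - \mathbb{T}^\ast\|\,\|A\|$ shows that the subsequence converges strongly on every input, so setting $(f^{1,\ast}; f^{2,\ast}) = \mathbb{T}^\ast[A]$ identifies the strong limit promised by the theorem and makes the dependence on $A$ manifestly through $\mathbb{T}^\ast$. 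Defining $\mathbb{T} := \mathbb{T}^\ast$ then aligns with Definition \ref{Definition 11}.

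Finally, linearity and boundedness of $\mathbb{T}$ follow at once: operator-norm limits of linear operators are linear (scalar multiplication and addition are continuous in the operator norm), and $\|\mathbb{T}\| \le \liminf_k \|\bar{\mathbb{T}}_{n_k}\| \le c_M < \infty$. The principal obstacle this approach sidesteps is that a priori different inputs $A$ might require different convergent subsequences of $\{\bar{\mathbb{T}}_n[A]\}$, which would preclude defining a single limiting operator $\mathbb{T}$. Extracting the subsequence at the level of the operator space, rather than pointwise at the level of individual inputs, produces a single subsequence valid for all inputs simultaneously; the finite-dimensionality of $H^1$ and $H^2$ is precisely what allows this diagonalization to be replaced by compactness. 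In an infinite-dimensional setting one would have to pass to a weak or pointwise notion of convergence and invoke a diagonal or Arzel\`a--Ascoli style argument, which is flagged in the paper as a direction beyond the scope of the present assumptions.
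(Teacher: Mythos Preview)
Your argument is correct and considerably more direct than the paper's. The paper proceeds by first establishing \emph{weak} convergence on a countable dense subset of the domain via Helly's theorem and the Riesz--Fr\'echet representation, then uses Cantor diagonalization to get a single subsequence that works for all elements of the dense set, then extends weak convergence to the whole domain using the equicontinuity of Proposition~\ref{Proposition 12}, and finally upgrades weak to strong convergence by expanding in a finite orthonormal basis of $H^1\times H^2$. You instead observe that since both domain and codomain are finite-dimensional, the operator space $\mathcal{B}(H^1\times H^2\times H^1\times H^2,\,H^1\times H^2)$ is itself finite-dimensional, so the uniform bound $\|\bar{\mathbb{T}}_n\|\le c_M$ immediately yields a subsequence convergent in operator norm by Bolzano--Weierstrass; strong pointwise convergence, linearity, and boundedness of the limit then follow in one line each. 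What your route buys is brevity and transparency under the standing finite-dimensional assumption. What the paper's route buys is that its weak-convergence and diagonalization steps remain valid in any separable Hilbert space (as the paper itself notes in Remark~\ref{Remark 14}); only the final weak-to-strong upgrade genuinely requires finite dimension. You anticipate this correctly in your closing remark about the infinite-dimensional case requiring a diagonal or Arzel\`a--Ascoli style argument, which is exactly what the paper carries out.
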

\begin{proof}
Since $H^{1} \times H^{2} \times H^{1} \times H^{2} $ is finite dimensional space, it is separable. Let $\{\psi_n\}$ be an enumeration of a dense set, $\mathbb{D}$, in $H^{1} \times H^{2} \times H^{1} \times H^{2}$. Consider the sequence, $\Psi_{n}(\psi_1)[f] = \langle \bar{\mathbb{T}}_{n}(\psi_1), f \rangle_{H^1 \times H^2}, f \in H^{1} \times H^{2}$. Since $|| \bar{\mathbb{T}}_{n}|| \leq \underset{n \in \mathbb{N}} \sup \prod^{n}_{l=1} || \mathbb{T}_{l} || = c_{M}  < \infty$, by the CBS inequality $ | \langle \bar{\mathbb{T}}_{n}(\psi_1), f \rangle_{H^1 \times H^2} | \leq c_{M} \;  || \psi_1 ||_{H^1 \times H^2 \times H^{1} \times H^{2}} \; || f ||_{H^1 \times H^2}  $ , i.e., the sequence $\{\Psi_{n}(\psi_1)[f]\}$ is bounded sequence of bounded linear functionals on a separable Hilbert space. By Helley's theorem there exists a subsequence, a strictly increasing sequence of integers, $\{s(1,n)\}$, such that $\{\Psi_{s(1,n)}(\psi_1)[\cdot]\}$ converges pointwise to $\Psi^{*}(\psi_1)[\cdot] \in (H^{1} \times H^{2})^*$. By the \textit{Riesz- Fr\'echet Representation Theorem}, $\exists  \psi^{*}_1 \in H^{1} \times H^{2}$ such that $\Psi^{*}(\psi_1)[f] = \langle  \psi^{*}_1, f \rangle, \forall f \in H^{1} \times H^{2}$. Thus, $ \bar{\mathbb{T}}_{s(1,n)}(\psi_1) \rightharpoonup \psi^{*}_1$. Since all the inner products and norms in this proof are evaluated in $H^1 \times H^2$, going forward this is suppressed.  By the same argument, since the sequence $\langle \bar{\mathbb{T}}_{n}(\psi_2), f \rangle$ is bounded, there exists a subsequence of $\{s(1,n)\}$, $\{s(2,n)\}$, such that $\bar{\mathbb{T}}_{s(2,n)}(\psi_2) \rightharpoonup \psi^{*}_2, \psi^{*}_2 \in H^{1} \times H^{2}$. We can inductively continue this process to obtain a strictly increasing sequence of integers $\{s(j,n)\}$ which is a subsequence of $\{s(j-1,n)\}$ such that $\bar{\mathbb{T}}_{s(j,n)}(\psi_k) \rightharpoonup \psi^{*}_k$. For each $j$, we define $\mathbb{T}(\psi_j)$ as $\psi^{*}_j$. By \textit{Cantor's diagonalization argument}, consider the subsequence of operators, $\bar{\mathbb{T}}_{n_k}(\cdot)$ , where $n_{k} =s(k,k)$ for every index $k$. For each $j$, the subsequence, $\{n_{k}\}^{\infty}_{k=j}$ is a subsequence of the $j$th subsequence of integers chosen before and thus $ \bar{\mathbb{T}}_{n_k}(\psi_{j}) \overset{k}{\rightharpoonup} \mathbb{T}(\psi_j)\; \forall j \in \mathbb{N}.$ Thus, $\{\bar{\mathbb{T}}_{n_k}(f) \}$ converges weakly to  $\mathbb{T}(f)$ on $\mathbb{D}$. Let $g$ be any function in $H^{1} \times H^{2} \times H^{1} \times H^{2}$. We claim that the sequence $\{\Psi_{n_k}(g)[f]\}$ is Cauchy for every $f$, where $ \Psi_{n_k}(g)[f] = \langle \bar{\mathbb{T}}_{n_k}(g), f \rangle$. Indeed, since the sequence operators $\{\bar{\mathbb{T}}_{n_k}\}$ is equicontinuous, $\forall\epsilon >0, \exists  \delta > 0$ such that, $|| \bar{\mathbb{T}}_{n_k}(h) - \bar{\mathbb{T}}_{n_k}(g)|| < \frac{\epsilon}{3 \times ||f||}$ for all $h$ such that $|| h -g || <\delta$ and all indices $n_k$. This implies that $ | \langle \bar{\mathbb{T}}_{n_k}(h), f \rangle  - \langle \bar{\mathbb{T}}_{n_k}(g), f \rangle | \leq || \bar{\mathbb{T}}_{n_k}(h) - \bar{\mathbb{T}}_{n_k}(g)||\; ||f|| <  \frac{\epsilon}{3}$ $\forall h$ such that $|| h -g || <\delta$ and all indices $n_k$. Since $\mathbb{D}$ is dense in $H^{1} \times H^{2} \times H^{1} \times H^{2}$, there exists $\psi \in \mathbb{D}$ such that $||\psi - g|| < \delta$. Since the sequence, $\{ \langle \bar{\mathbb{T}}_{n}(\psi), f \rangle\}$ is cauchy, there exists $N_{\epsilon}$ such that $|\langle \bar{\mathbb{T}}_{n_p}(h), f \rangle  - \langle \bar{\mathbb{T}}_{n_q}(g), f \rangle  | < \frac{\epsilon}{3}, \; \forall p, q \geq N_{\epsilon}$. Thus, for all $p,q \geq N_{\epsilon}$
\begin{align*}
|\Psi_{n_p}(g)[f] - \Psi_{n_q}(g)[f]| \leq |\Psi_{n_p}(g)[f] - \Psi_{n_p}(\psi)[f]| + |\Psi_{n_p}(\psi)[f] - \Psi_{n_q}(\psi)[f]|+ \\ |\Psi_{n_q}(\psi)[f] - \Psi_{n_q}(g)[f]| < \frac{\epsilon}{3} + \frac{\epsilon}{3} + \frac{\epsilon}{3} = \epsilon.
\end{align*}
Thus, $\{\Psi_{n_k}(g)[f]\}$ converges to real number which denote by $\Psi^*(g)[f]$. From the linearity of each functional in the sequence $\{\Psi_{n_k}(g)[\cdot]\}$, it follows that $\Psi^*(g)[\cdot]$ is linear. Since, $| \Psi_{n_k}(g)[f] | \leq c_M ||g||\; ||f||, \forall n_{k}$, it follows that $\underset{k \to \infty} \lim  | \Psi_{n_k}(g)[f] | = |\underset{k \to \infty} \lim   \Psi_{n_k}(g)[f]| = \hspace{-3pt} |\Psi^*(g)[f]|  \leq c_M ||g||\; ||f||.$ Thus, $\Psi^*(g)[\cdot] \in (H^{1} \times H^{2})^*$. By the \textit{Riesz- Fr\'echet Representation Theorem}, $\exists  \psi^{*}_g \in H^{1} \times H^{2}$ such that $\Psi^{*}(g)[f] = \langle  \psi^{*}_g, f \rangle, \forall f \in H^{1} \times H^{2}$. Thus, $\langle \bar{\mathbb{T}}_{n_k}(g), f \rangle \to \langle \psi^{*}_g, f \rangle, \forall f \in H^{1} \times H^{2}$.  We let $\mathbb{T}(g) = \psi^{*}_g$ and from the arguments presented we concluded that $\bar{\mathbb{T}}_{n_k}(g) \rightharpoonup \mathbb{T}(g) \forall g \in H^{1} \times H^{2}$. Let $\{\hat{\varphi}_{j}\}^{|\mathcal{I}^1| + |\mathcal{I}^2|}_{j=1} \hspace{-3pt}$ be an orthonormal basis for $H^{1} \times H^{2}$ obtained by applying the \textit{Gram–Schmidt orthonormalization process} to $\{\varphi^{1}_{j} \times \theta^2 \}_{j \in \mathcal{I}^{1}} \cup \{\theta^1 \times \varphi^{2}_{j} \}_{j \in \mathcal{I}^{2}}$. Consider the linear functionals, $\Psi_j(\hat{\varphi}_{j})[f]  = \langle \hat{\varphi}_{j} , f \rangle, f \in H^{1} \times H^{2}, j=1, \ldots, |\mathcal{I}^1| + |\mathcal{I}^2|$. By the weak convergence result, $\Psi_j(\hat{\varphi}_{j})[\bar{\mathbb{T}}_{n_k}(g)] \to \Psi_j(\hat{\varphi}_{j})[\mathbb{T}(g)] \forall g \in H^{1} \times H^{2}$ and $j$. This implies that, $\forall \epsilon >0$, $\exists N_{\epsilon}$ such that $|\Psi_j(\hat{\varphi}_{j})[\bar{\mathbb{T}}_{n_k}(g)] - \Psi_j(\hat{\varphi}_{j})[\mathbb{T}(g)] | < \frac{\epsilon}{|\mathcal{I}^1| + |\mathcal{I}^2|}, \forall j, \forall k \geq N_{\epsilon}$. Thus, $\forall \epsilon >0$, $\exists N_{\epsilon}$ such that 
\begin{align*}
|| \bar{\mathbb{T}}_{n_k}(g) - \mathbb{T}(g) || &= || \sum^{|\mathcal{I}^1| + |\mathcal{I}^2|}_{j=1} \Big [ \Psi_j(\hat{\varphi}_{j})[\bar{\mathbb{T}}_{n_k}(g)] - \Psi_j(\hat{\varphi}_{j})[\mathbb{T}(g)] \Big] \hat{\varphi}_j || \\
&\leq \hspace{-3pt} \sum^{|\mathcal{I}^1| + |\mathcal{I}^2|}_{j=1} \hspace{-3pt}  | \Psi_j(\hat{\varphi}_{j})[\bar{\mathbb{T}}_{n_k}(g)] - \Psi_j(\hat{\varphi}_{j})[\mathbb{T}(g)] | < (|\mathcal{I}^1| + |\mathcal{I}^2|) \times \frac{\epsilon}{|\mathcal{I}^1| + |\mathcal{I}^2|} = \epsilon.
\end{align*}
Thus, $\{ \bar{\mathbb{T}}_{n_k}(g) \}$ converges to $\mathbb{T}(g)$ in strong topology for all $g \in H^{1} \times H^{2}$. From the linearity of the sequence $\{ \bar{\mathbb{T}}_{n_k} \}$, it follows that $\mathbb{T}$ is linear. $|| \mathbb{T}(f) || = \underset{k \to \infty}  \lim ||\bar{\mathbb{T}}_{n_k}(f) || \leq \underset{k \to \infty} \lim  ||\bar{\mathbb{T}}_{n_k} || \; || (f) || \leq c_{M} ||f ||$. Thus,  $\mathbb{T}$ is bounded.
\end{proof}
\begin{remark}\label{Remark 14}
We note that weak convergence result in the proof of theorem holds for seperable Hilbert spaces, not just finite dimensional or compact Hilbert spaces. The proof of extending weak convergence to strong convergence holds only for finite dimensional spaces.
\end{remark}
\begin{corollary}\label{Corollary 15}
The learning algorithm presented in algorithm 1 is strongly consistent. 
\end{corollary}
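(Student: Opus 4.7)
The plan is to deduce Corollary \ref{Corollary 15} directly from Theorem \ref{Theorem 13} and Definition \ref{Definition 1}. Fix any initial estimates $(f^{1}, f^{2}) \in H^1 \times H^2$ and any data sequence $\{(x^{1}_n, y^{1}_n), (x^{2}_n, y^{2}_n)\}_{n \geq 1}$, and put $g_0 = (f^{1}; f^{2}; \psi^{1}_{x^{1}_1, y^{1}_1}; \psi^{2}_{x^{2}_1, y^{2}_1})$. By Definitions \ref{Definition 9} and \ref{Definition 11}, the pair $(\bar{f}^{1}_n, \bar{f}^{2}_n)$ that Algorithm \ref{Algorithm 1} produces at iteration $n$ is exactly $\bar{\mathbb{T}}_n(g_0)$, where the operators $\mathbb{T}_n$ have been passed through the subsequence-and-relabel procedure described in the paragraph preceding Definition \ref{Definition 11} (justified by Proposition \ref{Proposition 10}), so that $\sup_n \prod_{l=1}^n ||\mathbb{T}_l|| = c_M < \infty$.

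Applying Theorem \ref{Theorem 13} to the point $g_0$ then produces a subsequence of indices $\{n_k\}$ and a limit $(f^{1,*}, f^{2,*}) \in H^1 \times H^2$ with $\bar{\mathbb{T}}_{n_k}(g_0) \to (f^{1,*}, f^{2,*})$ strongly in $H^1 \times H^2$. Translating back, $\{(\bar{f}^{1}_{n_k}, \bar{f}^{2}_{n_k})\}$ is a subsequence of the learned function pairs that converges in norm to an element of the knowledge space $H^1 \times H^2$, which is precisely the assertion of strong consistency in Definition \ref{Definition 1}.

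The real difficulty has already been absorbed into the earlier results, so Corollary \ref{Corollary 15} itself contains essentially no new technical content. Propositions \ref{Proposition 4} and \ref{Proposition 7} produce the asymptotic behavior of the individual learning operators at the agents and at the fusion center; Proposition \ref{Proposition 10} converts those into uniform boundedness of the partial products of operator norms along a subsequence, via summability of $\log ||\mathbb{T}_{n_{k_l}}||$; and Theorem \ref{Theorem 13} performs the most delicate step, upgrading subsequential weak convergence on a countable dense set (obtained via Helley's theorem and a Cantor diagonalization) to strong convergence throughout the ambient space, where the strengthening rests crucially on the finite dimensionality of $H^1 \times H^2$. The only point in the present argument that requires care is the identification of $\bar{\mathbb{T}}_{n_k}(g_0)$ with the algorithm's actual output at the corresponding (relabelled) iteration, and this is immediate from the constructions in Definitions \ref{Definition 9} and \ref{Definition 11}; everything else is a translation between operator-theoretic language and the language of Definition \ref{Definition 1}.
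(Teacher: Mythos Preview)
Your proof is correct and follows exactly the same approach as the paper: the paper's own proof is the single line ``Follows from Definition \ref{Definition 1} and Theorem \ref{Theorem 13},'' and your argument is simply a careful unpacking of that sentence, identifying the algorithm's output with $\bar{\mathbb{T}}_{n}(g_0)$ and reading off strong subsequential convergence from Theorem \ref{Theorem 13}.
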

\begin{proof}
Follows from Definition \ref{Definition 1} and Theorem \ref{Theorem 13}. 
\end{proof}
\begin{lemma}\label{Lemma 16}
$\mathbb{T}(\cdot)$ is continuous. 
\end{lemma}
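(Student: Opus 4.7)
The plan is to reduce continuity of $\mathbb{T}$ to its linearity and boundedness, both of which have already been established in Theorem \ref{Theorem 13}. Specifically, the final lines of Theorem \ref{Theorem 13} show that $\mathbb{T}$ is linear and satisfies $\| \mathbb{T}(f) \| \leq c_M \| f \|$ for every $f$ in the domain. This operator-norm bound is precisely what is needed to deduce (uniform) continuity from linearity.

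Concretely, given any $f, g \in H^{1} \times H^{2} \times H^{1} \times H^{2}$ and any $\epsilon > 0$, I would choose $\delta = \epsilon/c_M$ (and $\delta$ arbitrary if $c_M = 0$). Then linearity of $\mathbb{T}$ combined with the bound gives
\begin{align*}
\| \mathbb{T}(f) - \mathbb{T}(g) \| = \| \mathbb{T}(f - g) \| \leq c_M \, \| f - g \|,
\end{align*}
so that $\| f - g \| < \delta$ implies $\| \mathbb{T}(f) - \mathbb{T}(g) \| < \epsilon$. This establishes Lipschitz continuity with constant $c_M$, which is strictly stronger than continuity and yields the lemma.

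As a backup route, should one prefer to argue without re-invoking the norm bound from Theorem \ref{Theorem 13}, one could use Proposition \ref{Proposition 12}: the sequence $\{\bar{\mathbb{T}}_{n}\}$ is uniformly equicontinuous, and the proof of Theorem \ref{Theorem 13} exhibits a subsequence $\{\bar{\mathbb{T}}_{n_k}\}$ that converges strongly (hence pointwise) to $\mathbb{T}$ on all of $H^{1} \times H^{2} \times H^{1} \times H^{2}$. The pointwise limit of a uniformly equicontinuous family of continuous maps on a metric space is itself continuous, which again yields the lemma. I do not anticipate any real obstacle here, since the technical work has already been carried out in Proposition \ref{Proposition 12} and Theorem \ref{Theorem 13}; Lemma \ref{Lemma 16} is essentially a one-line corollary, and the only choice to be made is which of the two framings above to present.
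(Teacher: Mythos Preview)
Your primary argument is exactly the paper's: it opens with ``Since $\mathbb{T}(\cdot)$ is linear and bounded it is continuous,'' invoking precisely the conclusions of Theorem~\ref{Theorem 13}. The paper then offers an alternative proof that, like your backup, exploits the uniform equicontinuity of $\{\bar{\mathbb{T}}_{n_k}\}$, though it proceeds via total boundedness of the unit sphere to show the sequence is Cauchy in operator norm rather than citing the pointwise-limit-of-equicontinuous-maps principle directly; both alternatives are valid and lead to the same conclusion.
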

\begin{proof}
Since $\mathbb{T}(\cdot)$ is linear and bounded it continuous. Alternatively it can be proven as follows. Since $\{\bar{\mathbb{T}}_{n_k}\}$ is a uniformly equicontinuous sequence of operators, given $\epsilon > 0, \exists \delta> 0$, such that $|| f - g || < \delta \implies || \bar{\mathbb{T}}_{n_k}(f) - \bar{\mathbb{T}}_{n_k}(g) || < \frac{\epsilon}{3}, \forall k$. Since the set $\{f \in H^{1} \times H^{2} \times H^{1} \times H^{2}: || f || = 1\}$ is compact, it is totally bounded, i.e., $\exists\{f_j\}^{l}_{j=1}$, such that $H^{1} \times H^{2} \subset \bigcup^{l}_{j=1}B_{\delta}(f_{j})$. Since $\{\bar{\mathbb{T}}_{n_k}(f_{j})\}$ is strongly Cauchy for every $f_j$ (from Theorem \ref{Theorem 13}), given $\epsilon > 0, \exists N_{\epsilon} $ such that $|| \bar{\mathbb{T}}_{n_p}(f_{j}) - \bar{\mathbb{T}}_{n_q}(f_{j}) || < \frac{\epsilon}{3}, \forall p,q \geq N_{\epsilon}, j$.  For any $f \in H^{1} \times H^{2}$, $f \in B_{\delta}(f_{j}) $ for some $j$. Thus, for $p, q \geq N_{\epsilon}$, 
\begin{align*}
||  \bar{\mathbb{T}}_{n_p}(f) - \bar{\mathbb{T}}_{n_q}(f) || \leq  ||  \bar{\mathbb{T}}_{n_p}(f) - \bar{\mathbb{T}}_{n_p}(f_{j}) || + ||  \bar{\mathbb{T}}_{n_p}(f_j) - \bar{\mathbb{T}}_{n_q}(f_j) ||  + ||  \bar{\mathbb{T}}_{n_q}(f_{j}) - \bar{\mathbb{T}}_{n_q}(f) ||,
\end{align*} 
which is less than $\epsilon$. Thus, $||  \bar{\mathbb{T}}_{n_p} - \bar{\mathbb{T}}_{n_q} || \leq  \underset{f : || f ||= 1} \sup ||\bar{\mathbb{T}}_{n_p}(f) - \bar{\mathbb{T}}_{n_q}(f) || < \epsilon, \forall p, q, \geq N_{\epsilon}$. Thus, the sequence $\{ \bar{\mathbb{T}}_{n_k} \}$ of operators is Cauchy in space of continuous operators on $H^{1} \times H^{2}$ with the operator norm. Due to completeness of the space of continuous operators with the associated norm, the sequence converges to a continuous operator on $H^{1} \times H^{2}$.
\end{proof}
\begin{corollary}\label{Corollary 17}
There exists a fixed point for the operator $\mathbb{T}\big\vert_{H^{1} \times H^{2}}(\cdot)$ , i.e. $\exists f \in H^{1} \times H^{2}$, $|| f|| =1$, such that $\mathbb{T}(f) = f$.
\end{corollary}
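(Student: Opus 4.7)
The plan is to combine Brouwer's fixed point theorem on a compact convex set with a sequential-compactness limit argument drawn from the approximating family $\{\bar{\mathbb{T}}_{n_k}\}$ that converges strongly to $\mathbb{T}$ (Theorem \ref{Theorem 13}).

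First, I would use the normalizations already introduced in the paper -- the factor $c_d$ in the download operator and the renormalization of $\mathbf{K}$ discussed after Proposition \ref{Proposition 7} -- to ensure that $\|\mathbb{T}\big\vert_{H^1 \times H^2}\| \leq 1$. Since $H^1 \times H^2$ is finite-dimensional by assumption, the closed unit ball $B$ is compact and convex, and the contractivity makes $B$ invariant under $\mathbb{T}$. Lemma \ref{Lemma 16} supplies the continuity of $\mathbb{T}$, so Brouwer's fixed point theorem yields some $f^* \in B$ with $\mathbb{T}(f^*) = f^*$. To connect this directly to the approximating sequence, I would also apply Brouwer to each $\bar{\mathbb{T}}_{n_k} : B \to B$ to obtain $f_{n_k} \in B$ with $\bar{\mathbb{T}}_{n_k}(f_{n_k}) = f_{n_k}$, pass to a convergent subsequence $f_{n_{k_j}} \to f^*$ by compactness of $B$, and use the uniform equicontinuity of $\{\bar{\mathbb{T}}_{n_k}\}$ (Proposition \ref{Proposition 12}) together with the strong convergence $\bar{\mathbb{T}}_{n_k} \to \mathbb{T}$ (Theorem \ref{Theorem 13}) to exchange limits and conclude $\mathbb{T}(f^*) = f^*$.

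The main obstacle I expect is the \emph{non-triviality} of $f^*$: since $\mathbb{T}$ is linear, $0$ is always a fixed point, and Brouwer cannot distinguish it from non-trivial ones. To rule out $f^* = 0$, I would exploit that $\|\mathbb{T}\| = 1$ is attained by compactness on the finite-dimensional unit sphere at some unit vector $\tilde{f}$, together with the specific construction of $\mathbb{T}$ as a strong limit of compositions of the positive-semidefinite-kernel operators $\bar{T}$, $T$, and $\hat{T}$; this structure precludes the purely rotational behaviour that would leave $\ker(I - \mathbb{T}) = \{0\}$. Alternatively, one may apply the Mean Ergodic Theorem to the contraction $\mathbb{T}$ on the finite-dimensional Hilbert space $H^1 \times H^2$, obtaining $\tfrac{1}{N}\sum_{n=0}^{N-1}\mathbb{T}^n \to P$ for the orthogonal projection $P$ onto $\ker(I - \mathbb{T})$, and then use the kernel-induced positive semidefiniteness inherited by $\mathbb{T}$ to certify $P \neq 0$. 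Once any non-zero fixed point $g$ is secured, linearity of $\mathbb{T}$ yields the required unit fixed point $f = g/\|g\|$ with $\mathbb{T}(f) = f$ and $\|f\| = 1$.
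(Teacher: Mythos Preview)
Your core strategy---invoke Brouwer's fixed-point theorem on a compact convex subset of the finite-dimensional space $H^{1}\times H^{2}$, using the continuity of $\mathbb{T}$ from Lemma~\ref{Lemma 16}---is exactly what the paper does. The paper's entire proof is two sentences: it declares $\mathbb{T}\big\vert_{H^{1}\times H^{2}}$ continuous on $\{f:\|f\|=1\}$, calls that set ``compact and convex,'' and invokes Brouwer. So your extra machinery (the approximating sequence $\{\bar{\mathbb{T}}_{n_k}\}$, the limit exchange, the Mean Ergodic Theorem) goes well beyond what the paper records; the paper does not address invariance of the set under $\mathbb{T}$, nor the triviality of the fixed point $0$, nor the fact that the unit \emph{sphere} is not convex.

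You are right to flag the non-triviality issue: since $\mathbb{T}$ is linear, Brouwer applied to the closed unit ball always returns at least the origin, and nothing in the paper rules that out. However, your proposed remedies do not close the gap either. Knowing $\|\mathbb{T}\|=1$ is attained at some unit vector $\tilde f$ does not by itself yield $\mathbb{T}\tilde f=\tilde f$; a linear operator of norm $1$ on a finite-dimensional real Hilbert space can be a pure rotation with $\ker(I-\mathbb{T})=\{0\}$. The Mean Ergodic Theorem likewise produces the projection onto $\ker(I-\mathbb{T})$ but says nothing about that kernel being non-trivial. Your appeal to ``positive semidefiniteness inherited by $\mathbb{T}$'' is not justified by anything established in the paper: $\mathbb{T}$ is a composition $\hat T\circ T(\varrho)\circ \bar T$ of three operators acting between \emph{different} spaces, none of which is shown to be self-adjoint or positive on $H^{1}\times H^{2}$, so there is no reason $\mathbb{T}$ should have a real non-negative spectrum. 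In short, you have correctly located a genuine lacuna in the paper's argument, and your write-up is more careful than the original, but the step that would upgrade the trivial Brouwer fixed point to a unit-norm one remains open in both your proposal and the paper.
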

\begin{proof}
$\mathbb{T}\big\vert_{H^{1} \times H^{2}}$ is continuous operator on $\{f \in H^{1} \times H^{2}: || f ||= 1\}$,  which is a compact and convex set. By \textit{Brouwer's fixed-point theorem}, the result follows.
\end{proof}
\section{Example}\label{Section 4}
\begin{figure}
\begin{center}
\includegraphics[scale=0.54]{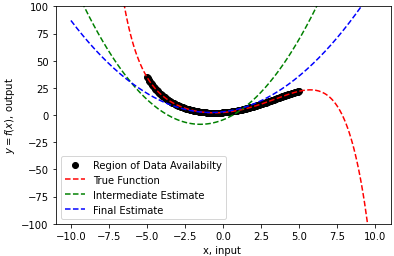}
\includegraphics[scale=0.54]{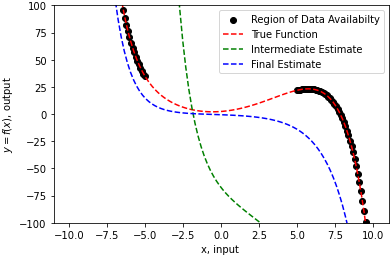}
\caption{True function and estimated functions at: (right) Agent 1 and (left) Agent 2.}\label{Figure 3}
\end{center}
\end{figure}
In this section, we consider a simple example to demonstrate the algorithm mentioned in Algorithm \ref{Algorithm 1}. True data is generated by considering a real valued function whose inputs are real values, obtained through a linear combination of polynomials and exponentials. Each agent receives noisy version of true data, i.e., noise added to the output data. Agent 1 considers the features, $\varphi^1_{1}(x) =1, \varphi^1_{2}(x) =x$ and $\varphi^1_3(x) = x^{2}$, while Agent 2 considers the features $\varphi^2_{1}(x) =\exp(-x), \varphi^2_{2}(x) =\exp(x)$. Thus, the kernel corresponding to Agent $1$ is $K^{1}(x,y) = 1 + xy + x^2 y^2$ and to Agent $2$ is $K^{2}(x,y) = \exp(-x -y) + \exp(x+y)$. The domain of the input data for Agent 1 is considered to be $[-5,5]$, while for Agent $2$ it is considered to be $[-10,-5] \cup [5,10]$. At time step $n$, after collecting data point $(x^{i}_n, y^{i}_n)$, each agent solves $(P1)^{i}_n$ to obtain $f^{i}_{n}$ which is then uploaded to the fusion space. 
\begin{wrapfigure}{r}{0.5\textwidth}
\begin{center}
\includegraphics[scale=0.54]{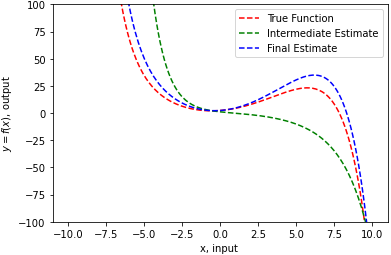}
\caption{True function and estimated functions at Fusion Center.}\label{Figure 4}
\vspace{-1.25cm}
\end{center}
\end{wrapfigure}
The fusion space corresponds to the RKHS generated by the kernel, $K(x,y) = 1 + xy + x^2 y^2 +  \exp(-x -y) + \exp(x+y)$. We note that the feature vectors of Agent $i$ are linearly independent. Thus, $H$ has dimension $5$. We chose $\bar{x}^{1}_{1} = 0, \bar{x}^{1}_{2}= 2, \bar{x}^{1}_{3} = 4, \bar{x}^{1}_{4} = -2, \bar{x}^{1}_{5} =-4$ and $\bar{x}^{2}_{1} = 1, \bar{x}^{1}_{2}= 3, \bar{x}^{1}_{3} = 5, \bar{x}^{1}_{4} = -1, \bar{x}^{1}_{5} =-3$. After estimating $\mathbf{\hat{Y}}^{i}_{n}$ as in Proposition \ref{Proposition 5}, the fusion problem in $(P3)_{n}$ is solved. To find the download operator, first, we choose the set of basis vectors for the space $H$ as $\varphi_1(x)=1, \varphi_2(x) = x, \varphi_3(x) = \sqrt{2}x^2 ,\varphi_4(x) =\exp(-x), \varphi_{5}(x) =\exp(x)$. With these basis vectors, the coefficients for $K^{1}(\cdot,y)$ are $[1;y;y^2; 0; 0]$, and for $K^{2}(\cdot,y)$ are $[0,0,0,\exp(-x),\exp(x)]$. Thus, the matrix representation for $\bar{L}^{i}$ is obtained as follows:
\begin{align*}
&\bar{L}^{1}(\varphi_1)(y)  \hspace{-2pt} = \hspace{-2pt} \langle \varphi_1(\cdot), K^{1}(\cdot,y) \rangle_{H} \hspace{-2pt} =\hspace{-3pt} \langle [1,0,0,0,0], [1,y, y^2, 0, 0] \rangle_{\mathbb{R}^5} =1. \;\hspace{-2pt}\bar{L}^{1}(\varphi_2)(y) \hspace{-2pt}  = \hspace{-2pt} \langle [0,1,0,0,0],\\ 
&[1,y, y^2, 0,0] \rangle_{\mathbb{R}^5} \hspace{-2pt} = \hspace{-2pt}y. \; \hspace{-2pt}\bar{L}^{1}(\varphi_3)(y) = \langle [0,0,1,0,0], [1,y, y^{2}, 0,0] \rangle_{\mathbb{R}^5} =y^{2}.\; \bar{L}^{1}(\varphi_4)(y) = 0.\\
&  \bar{L}^{1}(\varphi_5)(y) = 0, L^{1}_{M}= \sqrt{L^{1}_{M}}= 
\begin{bmatrix}
1 & 0 & 0 & 0 &0\\
0 & 1 & 0 & 0 &0\\
0 & 0 & 1 & 0 &0\\
0 & 0 & 0 & 0 &0\\
0 & 0 & 0 & 0 &0\\
\end{bmatrix}, \; 
L^{2}_{M}= \sqrt{L^{2}_{M}}= 
\begin{bmatrix}
0 & 0 & 0 & 0 &0\\
0 & 0 & 0 & 0 &0\\
0 & 0 & 0 & 0 &0\\
0 & 0 & 0 & 1 &0\\
0 & 0 & 0 & 0 &1\\
\end{bmatrix}.
\end{align*}
With this setup, simulations were run and the results are demonstrated in Figure \ref{Figure 3} and Figure \ref{Figure 4}. In each of the figures, the true function, function estimated at an intermediate iteration,  and the final estimate are plotted. In the figures pertaining to the agents, the segment of the true function which is accessible to the agents for data collection is also marked.  Our observations from the figures are as follows. Since the kernel for Agent 1 quadratic, the curve estimated by it is quadratic. The final estimate of Agent 1 partially overlaps with the true curve. Since the feature maps for Agent 2 are exponentials, the same is reflected in its estimates. However, neither of them are able to capture the true function ``completely". The final estimate in the fusion space captures the true function with minimum norm of the error, i.e., with minimum $|| f_{n^*} - f*||$ among all iterations, where $n^*$ is the final iteration. In reality, $f^*$ is not unknown and it would not be possible to verify the same. 
\section{Conclusion and Future Work}\label{Section 5}
To conclude, we presented a distributed algorithm for estimation of functions given data. Key aspects of the algorithm included use of heterogeneous data, use of different features by different agents, fusion of models by estimating data which could have generated the models, the use of uploading and downloading operators due to different members learning in different spaces, and, the consistency of the learning algorithm. Going forward, we are interested in studying nonparametric estimation problems using the algorithm developed in this paper. We would like to investigate transform methods for the fusion problem or the meta learning problems in RKHS. Quantification and formal guarantees of transfer of knowledge from one agent to another is also of interest. As mentioned in Section \ref{Section 1}, the formal study of knowledge is crucial for AI. Similarly, integrating reasoning with statistical learning is also crucial for enhanced learning. We are interested in understanding as to how reasoning can be integrated into a distributed learning algorithm mentioned in this paper.  
\section{Appendix}\label{Section 6}
\subsection{Fusion Space, Uploading and Downloading Operator}\label{Subsection 4.1}
The construction of the fusion space has been discussed in detail in \cite{raghavan2024distributed}. We mention the key results here for completion of the paper.
\begin{theorem}\label{Theorem 18}
If $K^{i}(\cdot,\cdot)$ is the reproducing kernel of Hilbert space $H^{i}$,  with norm $||\cdot||_{H^i}$, then $K(x,y)=K^{1}(x,y) + K^{2}(x,y)$ is the reproducing kernel of the space $H = \{f| f= f^1 + f^2 | f^{i} \in H^{i}\}$ with the norm:
\begin{align*}
||f||^2_{H} = \underset{\substack{f^1 + f^2 = f,\\ f^{i} \in H^{i}} }  \min \;\;  ||f^{1}||^2_{H^1} + ||f^{2}||^2_{H^2}.
\end{align*} 
\end{theorem}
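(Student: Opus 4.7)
The plan is to realize $H$ as an isometric image of a closed subspace of the direct-sum Hilbert space $H^{1}\oplus H^{2}$, and then verify the reproducing property by exploiting the orthogonality that comes with the quotient structure. This is the standard Aronszajn-style argument for sums of reproducing kernels.

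First I would equip $\tilde H := H^{1}\oplus H^{2}$ with the direct-sum inner product $\langle (f^{1},f^{2}),(g^{1},g^{2})\rangle_{\tilde H} = \langle f^{1},g^{1}\rangle_{H^{1}} + \langle f^{2},g^{2}\rangle_{H^{2}}$, and define the addition map $\pi:\tilde H \to \mathrm{Func}(\mathcal{X},\mathbb{R})$ by $\pi(f^{1},f^{2})(\cdot) = f^{1}(\cdot)+f^{2}(\cdot)$. By construction, the range of $\pi$ is exactly the candidate space $H$. The kernel is $\mathcal{N} = \{(g,-g) : g \in H^{1}\cap H^{2}\}$. Continuity of point evaluation in each $H^{i}$ shows that if $(f^{1}_n,f^{2}_n)\to(f^{1},f^{2})$ in $\tilde H$ and $f^{1}_n+f^{2}_n\equiv 0$, then $f^{1}+f^{2}\equiv 0$; hence $\mathcal{N}$ is closed in $\tilde H$.

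Next I would push the Hilbert structure through. Since $\mathcal{N}$ is closed, $\tilde H = \mathcal{N}\oplus\mathcal{N}^{\perp}$, and the restriction $\pi|_{\mathcal{N}^{\perp}}:\mathcal{N}^{\perp}\to H$ is a bijection: surjectivity because any representative $(f^{1},f^{2})$ of $f\in H$ can be replaced by its projection onto $\mathcal{N}^{\perp}$, and injectivity because two representatives differ by an element of $\mathcal{N}$. Transporting the norm along this bijection gives $\|f\|_{H}^{2} = \|\Pi_{\mathcal{N}^{\perp}}(f^{1},f^{2})\|_{\tilde H}^{2}$, which by the Pythagorean theorem equals $\min_{(g^{1},g^{2})\in\pi^{-1}(f)}(\|g^{1}\|_{H^{1}}^{2}+\|g^{2}\|_{H^{2}}^{2})$. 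The minimum is attained (not merely an infimum) exactly because the projection onto a closed subspace exists, and $H$ inherits completeness and the Hilbert-space inner product from $\mathcal{N}^{\perp}$.

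The main step, and the one that carries the most content, is the reproducing property. For fixed $x\in\mathcal{X}$, the element $K(\cdot,x)=K^{1}(\cdot,x)+K^{2}(\cdot,x)$ lies in $H$ via the natural decomposition $(K^{1}(\cdot,x),K^{2}(\cdot,x))\in\tilde H$. For arbitrary $f\in H$ with optimal decomposition $(f^{1}_{*},f^{2}_{*})\in\mathcal{N}^{\perp}$, let $(K^{1}_{*},K^{2}_{*})\in\mathcal{N}^{\perp}$ be the optimal decomposition of $K(\cdot,x)$. The key observation is that
\begin{equation*}
\bigl(K^{1}(\cdot,x)-K^{1}_{*},\,K^{2}(\cdot,x)-K^{2}_{*}\bigr)\in\mathcal{N},
\end{equation*}
because both pairs project to the same function $K(\cdot,x)$. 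Therefore $\langle (f^{1}_{*},f^{2}_{*}),(K^{1}(\cdot,x)-K^{1}_{*},K^{2}(\cdot,x)-K^{2}_{*})\rangle_{\tilde H}=0$, which gives
\begin{equation*}
\langle f,K(\cdot,x)\rangle_{H}=\langle f^{1}_{*},K^{1}_{*}\rangle_{H^{1}}+\langle f^{2}_{*},K^{2}_{*}\rangle_{H^{2}}=\langle f^{1}_{*},K^{1}(\cdot,x)\rangle_{H^{1}}+\langle f^{2}_{*},K^{2}(\cdot,x)\rangle_{H^{2}}=f^{1}_{*}(x)+f^{2}_{*}(x)=f(x),
\end{equation*}
using the reproducing properties of $H^{1}$ and $H^{2}$ in the penultimate equality.

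The main obstacle, in my view, is not any single calculation but the bookkeeping needed to cleanly separate (i) the abstract Hilbert quotient step, (ii) the fact that this quotient really is a space of honest functions (so that ``$f(x)$'' is well-defined independently of the chosen decomposition), and (iii) the orthogonality trick that makes the reproducing formula collapse. The well-definedness in (ii) is the subtle piece, and it is handled precisely by the observation that two decompositions of $f$ differ by an element of $\mathcal{N}=\{(g,-g)\}$, so their pointwise sums agree on $\mathcal{X}$.
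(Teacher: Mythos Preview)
Your argument is correct; it is the standard Aronszajn construction for sums of reproducing kernels. The paper itself does not prove Theorem~\ref{Theorem 18}: the result is stated in the appendix with the details deferred to the companion reference \cite{raghavan2024distributed}. That said, the quotient machinery you build---the direct-sum Hilbert space $H^{1}\times H^{2}$, the addition map, its null space $\mathcal{N}$, and the isometric identification of $H$ with $\mathcal{N}^{\perp}$---coincides exactly with the framework the paper lays out in the paragraph immediately following Corollary~\ref{Corollary 19} (the operator $L$, the subspaces $\mathcal{N}(L)$ and $\mathcal{M}=\mathcal{N}(L)^{\perp}$, and the isomorphism $L_{\mathcal{M}}$), so your approach is fully aligned with the paper's own setup. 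One small difference worth noting: the paper obtains closedness of $\mathcal{N}(L)$ from its standing finite-dimensionality assumption, whereas your argument via continuity of point evaluations in each $H^{i}$ works without that restriction and is what Aronszajn's original proof uses.
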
  
\begin{corollary}\label{Corollary 19}
The uploading operator from agent $i$'s knowledge space, $H^i$,  to the fusion space $H$, $\hat{L}^{i}: H^{i} \to H$, is $\hat{L}(f) = f$. $\hat{L}^{i}(\cdot)$, is linear and is bounded, $|| \hat{L}^{i} || = \sup\{ || f||_{H} : f\in H^{i}, || f ||_{H^{i}} =1 \} = 1$.
\end{corollary}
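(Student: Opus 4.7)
The plan is to prove Corollary \ref{Corollary 19} directly from the variational characterization of the fusion-space norm given by Theorem \ref{Theorem 18}.

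First I would verify well-definedness and linearity. For any $f \in H^i$, the decomposition $f = f + \theta_{H^j}$ (with $j \neq i$) witnesses that $f \in H$, so $\hat{L}^i$ is a well-defined map $H^i \to H$; since it is just the restriction of the identity on the underlying function space, it is linear.

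Next I would establish the upper bound $\|\hat{L}^i\| \le 1$. By Theorem \ref{Theorem 18}, for any $f \in H^i$,
\begin{equation*}
\|\hat{L}^i f\|_H^2 \;=\; \min_{g^1 + g^2 = f,\; g^k \in H^k} \|g^1\|_{H^1}^2 + \|g^2\|_{H^2}^2 \;\le\; \|f\|_{H^i}^2 + \|\theta_{H^j}\|_{H^j}^2 \;=\; \|f\|_{H^i}^2,
\end{equation*}
obtained by evaluating the functional being minimized at the admissible decomposition $(f, \theta_{H^j})$. Taking the supremum over $f \in H^i$ with $\|f\|_{H^i} = 1$ yields $\|\hat{L}^i\| \le 1$.

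For the matching lower bound, I would use the paper's standing setup from Subsection \ref{Subsection 2.1}: the combined feature families of the two agents are linearly independent (this is what makes $\dim H = |\mathcal{I}^1| + |\mathcal{I}^2|$ in the example of Section \ref{Section 4}), so $H^1 \cap H^2 = \{\theta\}$. Under this transversality condition, if $f \in H^i$ and $f = g^1 + g^2$ with $g^k \in H^k$, then $g^j = f - g^i \in H^i \cap H^j = \{\theta\}$, forcing $g^j = \theta$ and $g^i = f$. Hence the infimum is attained uniquely at $(f, \theta_{H^j})$, so $\|\hat{L}^i f\|_H = \|f\|_{H^i}$ for every $f \in H^i$, and taking the supremum gives $\|\hat{L}^i\| = 1$.

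The main obstacle is the lower bound: the upper bound is a one-line consequence of the variational definition, while equality requires ruling out ``cheaper'' decompositions of $f \in H^i$ that borrow nontrivially from $H^j$. This is precisely where the transversality of the two knowledge spaces (implied by the joint linear independence of the features) enters, and it is the only genuine ingredient beyond Theorem \ref{Theorem 18}; without such a condition one would only be guaranteed $\|\hat{L}^i\| \le 1$.
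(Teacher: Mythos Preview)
The paper states Corollary \ref{Corollary 19} without proof, deferring to \cite{raghavan2024distributed}; there is nothing in the present paper to compare your argument against directly. Your well-definedness, linearity, and upper bound arguments are exactly the standard ones and are correct: plugging the decomposition $(f,\theta_{H^j})$ into the variational formula of Theorem \ref{Theorem 18} gives $\|f\|_H \le \|f\|_{H^i}$ immediately.

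The issue is in the lower bound. You assert that ``the paper's standing setup from Subsection \ref{Subsection 2.1}'' gives joint linear independence of the combined feature families, hence $H^1\cap H^2=\{\theta\}$. But Subsection \ref{Subsection 2.1} only assumes that the features of \emph{each agent separately} are linearly independent; nowhere in the general framework is cross-agent independence assumed. On the contrary, the discussion preceding Propositions \ref{Proposition 20}--\ref{Proposition 21} explicitly allows $\mathcal{N}(L)\neq\{0\}$ (equivalently $H^1\cap H^2\neq\{\theta\}$), introducing $m\le |\mathcal{I}^1|+|\mathcal{I}^2|$ precisely to handle that overlap. The observation that $\dim H=|\mathcal{I}^1|+|\mathcal{I}^2|$ in Section \ref{Section 4} is a feature of that particular example, not a standing hypothesis.

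Your closing paragraph is therefore more honest than your middle one: without transversality one only gets $\|\hat{L}^i\|\le 1$, and equality can genuinely fail (for instance, if $K^1=K^2$ then $\|f\|_H=\|f\|_{H^i}/\sqrt{2}$ for every $f\in H^i$, so $\|\hat{L}^i\|=1/\sqrt{2}$). So your argument is correct as a proof under the additional hypothesis $H^1\cap H^2=\{\theta\}$, but you should state that hypothesis explicitly rather than attribute it to Subsection \ref{Subsection 2.1}; and you are right to flag that the corollary, as written, appears to require it.
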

Let $H^{1} \times H^2$ be the product space with inner product $\langle (f^{1},f^{2}),(g^{1},g^{2})\rangle_{H^{1} \times H^2} = \langle f^{1}, g^{1}\rangle_{H^1} + \langle f^{2}, g^{2}\rangle_{H^2}$. Let $L: H^{1} \times H^2 \to H$ be a operator defined as $L((f^{1},f^{2})) =f^{1} +f^{2}$. $L$ is a linear operator and its null space, $\mathcal{N}(L) =\{(f^{1}, f^{2}) \in H^{1} \times H^2 : f^{1} +f^{2} =\theta \}$ is a closed subspace as it is finite dimensional. Thus, there exists a unique closed subspace $\mathcal{M}$ such that $H^{1} \times H^2 = \mathcal{M}\oplus \mathcal{N}(L)$, where $\mathcal{M} = \mathcal{N}^{\perp}$. The mapping $L_{\mathcal{M}} = L \circ \Pi_{\mathcal{M}}$ (operator $L$ restricted to subspace $\mathcal{M}$) is an isomorphism from $\mathcal{M}$ to $H$. Let the dimension of $\mathcal{M}$ be $m$ and the dimension of $\mathcal{N}(L)$ be $|\mathcal{I}^1| + |\mathcal{I}^2| -m$.
\begin{proposition}\label{Proposition 20}
Let $V= V^{1} \oplus V^2$ and $U = U^{1} \oplus U^2$ be Hilbert spaces such that $V \equiv U$ with $V^1=V^{2^{\perp}}$ and $U^1=U^{2^{\perp}}$. If $V^2$ is isomorphic to $U^{2}$ under the same isomorphism from $V$ to $U$, then $V^{1} \equiv U^{1}$. 
\end{proposition}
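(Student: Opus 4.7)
The plan is to show that the isomorphism $\Phi: V \to U$ in the hypothesis, which by the notation $V \equiv U$ is an isometric (Hilbert space) isomorphism preserving the inner product, restricts to an isometric isomorphism between $V^1$ and $U^1$. The whole argument rests on the fact that a unitary-type map carries orthogonal complements to orthogonal complements.

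First I would verify the inclusion $\Phi(V^1) \subseteq U^1$. Pick any $v^1 \in V^1$ and any $v^2 \in V^2$. Since $V^1 = (V^2)^{\perp}$ we have $\langle v^1, v^2 \rangle_V = 0$. Because $\Phi$ preserves the inner product, $\langle \Phi(v^1), \Phi(v^2) \rangle_U = 0$. By hypothesis, $\Phi$ maps $V^2$ isomorphically onto $U^2$, so as $v^2$ ranges over $V^2$ the vector $\Phi(v^2)$ ranges over all of $U^2$. Hence $\Phi(v^1) \perp U^2$, i.e.\ $\Phi(v^1) \in (U^2)^{\perp} = U^1$.

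Second, I would run the symmetric argument for the inverse. The map $\Phi^{-1}: U \to V$ is also an isometric isomorphism, and by construction it maps $U^2$ onto $V^2$ isometrically. The same reasoning then gives $\Phi^{-1}(U^1) \subseteq V^1$, which is equivalent to $U^1 \subseteq \Phi(V^1)$. Combined with the first step, $\Phi(V^1) = U^1$. The restriction $\Phi\big|_{V^1}: V^1 \to U^1$ is a linear bijection and inherits the inner-product preservation of $\Phi$, so it is an isometric isomorphism, yielding $V^1 \equiv U^1$.

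The main subtlety, rather than any computational obstacle, is interpretive: one must be sure that $\equiv$ means isometric isomorphism, because without preservation of the inner product the orthogonality step fails and the conclusion would require a different, more delicate argument (for instance, passing through dimension or through the operator $L_{\mathcal{M}}$ built earlier). In the setting of this paper, where the relevant isomorphisms between product RKHS and fusion RKHS are constructed from $L = L \circ \Pi_{\mathcal{M}}$ and are manifestly inner-product preserving on $\mathcal{M}$, this interpretation is the natural one and the proof above is essentially the entire content of the proposition.
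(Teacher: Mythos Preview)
Your proof is correct and cleaner than the paper's. You use the isometric nature of $\Phi$ directly: since it preserves inner products and maps $V^2$ onto $U^2$, it carries $(V^2)^\perp$ into $(U^2)^\perp$, and the inverse handles the reverse inclusion. The paper instead argues by contradiction without explicitly invoking inner-product preservation: assuming $\mathbb{L}(v)\notin U^1$ for some $v\in V^1$, it decomposes $\mathbb{L}(v)$ along $U^1\oplus U^2$, pulls the two pieces back by $\mathbb{L}^{-1}$, projects onto $V^2$, and claims the resulting sum cannot vanish. Your caveat about interpreting $\equiv$ as \emph{isometric} isomorphism is precisely the crux: without that hypothesis the paper's final contradiction step does not go through (a shear on $\mathbb{R}^2$ fixing one coordinate axis already gives a counterexample to $\mathbb{L}(V^1)\subseteq U^1$), so the paper's argument tacitly relies on what you state outright. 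Your route makes the dependence on isometry transparent and avoids the projection bookkeeping; the paper's is more algebraic in flavor but less self-contained as written.
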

\begin{proof}
Since $V$ is isomorphic to $U$, there exits $\mathbb{L}: V \to U $ such that $\mathbb{L}$ is a bijection, is linear, and, has a well defined inverse $\mathbb{L}^{-1}: U \to V $ which is also linear. The subspace $V^{2}$ of $V$ is isomorphic to some subspace of $U$, however it is given that it is isomorphic to $U^2$. Thus $\mathbb{L}(V^{2}) = U^{2}$ and $\mathbb{L}^{-1}(U^{2}) = V^{2}$. Let $v \in V^{1}$, i.e, $\Pi_{V^{2}}(v) = \theta_{V}$. We claim that $\mathbb{L}(v) \in U^{1}$. Suppose not. Then, $\Pi_{U^{2}}(\mathbb{L}(v)) \neq \theta_{U}$, which implies that $ \mathbb{L}^{-1} (\mathbb{L}(v)) = \mathbb{L}^{-1}(\Pi_{U^{1}}(\mathbb{L}(v)) + \Pi_{U^{2}}(\mathbb{L}(v)) ) =  \mathbb{L}^{-1}(\Pi_{U^{1}}(\mathbb{L}(v))) + \mathbb{L}^{-1}(\Pi_{U^{2}}(\mathbb{L}(v)))$. Since $\mathbb{L}$ is an isomorphism from $V^2$ to $U^2$, $\mathbb{L}^{-1}(\Pi_{U^{2}}(\mathbb{L}(v))) \in V^{2} \neq \theta_{V}$. Hence, $\Pi_{V^{2}}( \mathbb{L}^{-1}(\Pi_{U^{2}}(\mathbb{L}(v))))  = \mathbb{L}^{-1}(\Pi_{U^{2}}(\mathbb{L}(v)))$. This implies that,
\begin{align*}
\Pi_{V^{2}} (v) = \Pi_{V^{2}}(\mathbb{L}^{-1} (\mathbb{L}(v))) = \Pi_{V^{2}}( \mathbb{L}^{-1}(\Pi_{U^{1}}(\mathbb{L}(v)))) + \mathbb{L}^{-1}(\Pi_{U^{2}}(\mathbb{L}(v))).
\end{align*}
Since there exists a unique $u' \in U^{2}$ such that $\mathbb{L}^{-1}(u') + \mathbb{L}^{-1}(\Pi_{U^{2}}(\mathbb{L}(v))) = \theta_V$, and,  $ \mathbb{L}^{-1}(\Pi_{U^{1}}(\mathbb{L}(v \\ )))$ cannot be written of the form $u' + u''$ as $U^{1} \cap U^{2} =\theta_{U}$, it follows that $\Pi_{V^{2}}( \mathbb{L}^{-1}(\Pi_{U^{1}}(\mathbb{L}(v)))) + \mathbb{L}^{-1}(\Pi_{U^{2}}(\mathbb{L}(v))) \neq \theta_V$  which is clearly a contradiction. Thus, $\mathbb{L}$ maps every $v \in V^{1}$ to a unique $u \in U^{1}$. Similarly, $\mathbb{L}^{-1}$ maps every $u \in U^{1}$ to unique $v \in V^{1}$. Since $\mathbb{L}$ and $\mathbb{L}^{-1}$ are linear, $V^{1}$ and $U^{1}$ are isomorphic under the morphism $\mathbb{L}$.
\end{proof}
The basis vectors, $\{\varphi^{1}_{j} \times \theta^2 \}_{j \in \mathcal{I}^{1}} \cup \{\theta^1 \times \varphi^{2}_{j} \}_{j \in \mathcal{I}^{2}}$, for $H^{1} \times H^2$ induce a isomorphism from $H^{1} \times H^2$ to $\mathbb{R}^{|\mathcal{I}^{1}| + |\mathcal{I}^2|}$. Under this isomorphism, $\mathcal{N}(L)$ is isomorphic to 
\begin{align*}
\mathcal{N}=\Big\{\Big(\boldsymbol{\alpha^1}, \boldsymbol{\alpha^2} \Big) \in \mathbb{R}^{|\mathcal{I}^1| + |\mathcal{I}^2|}: \sum_{j \in \mathcal{I}^1}\alpha^1_j \varphi^{1}_{j} + \sum_{j \in \mathcal{I}^2}\alpha^2_j \varphi^{2}_{j}= \theta,
\boldsymbol{\alpha^i} = \Big(\alpha^{i}_{1}, \ldots, \alpha^{i}_{|\mathcal{I}^i|}\Big), i=1,2\Big\}.
\end{align*}
\begin{proposition} \label{Proposition 21}
There exists two sets of $m$ elements each, $\{\bar{x}^{i}_{j}\}^{m}_{j=1}$, $i=1,2$ such that (i) $\{\bar{x}^{1}_{j}\}^{m}_{j=1} \cap \{\bar{x}^{2}_{j}\}^{m}_{j=1} = \emptyset$; (ii) $\{K(\cdot,\bar{x}^{1}_{j})\}^{m}_{j=1}$  and $\{K(\cdot,\bar{x}^{2}_{j})\}^{m}_{j=1}$, each form a basis for $H$. Thus, for any function $f \in H$, $\exists! \{\alpha^{i}_{j}\}^{m}_{j=1}$ such that $f(\cdot) = \sum^{m}_{j=1} \alpha^{1}_{j}K(\cdot,\bar{x}^{1}_{j}) = \sum^{m}_{j=1}\alpha^{2}_{j}K(\cdot,\bar{x}^{2}_{j})$.
\end{proposition}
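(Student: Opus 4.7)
The plan is to prove existence by combining a structural fact about finite-dimensional RKHSs with a perturbation argument exploiting the ``no isolated points'' assumption on $\mathcal{X}$.

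First I would establish that some $m$-tuple in $\mathcal{X}^m$ yields linearly independent kernel sections. Standard RKHS theory gives density of $\mathrm{span}\{K(\cdot, x) : x \in \mathcal{X}\}$ in $H$: if $f \in H$ were orthogonal to every kernel section, the reproducing property $f(x) = \langle f, K(\cdot, x)\rangle_H = 0$ would force $f \equiv 0$. Since $\dim H = m$ (via the isomorphism $L_{\mathcal{M}} : \mathcal{M} \to H$ established above), this dense subspace must coincide with $H$, so some $m$-tuple of points yields a basis of $H$.

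Next I would convert linear independence into a determinant condition. Define $D : \mathcal{X}^m \to \mathbb{R}$ by
\begin{align*}
D(x_1, \ldots, x_m) = \det[K(x_i, x_j)]_{i,j=1}^{m}.
\end{align*}
By the Gram determinant criterion, $D(x_1,\ldots,x_m) \neq 0$ if and only if $\{K(\cdot, x_j)\}_{j=1}^m$ is linearly independent. Hence the set $U = \{D \neq 0\}$ is open in $\mathcal{X}^m$ (by continuity of $K$) and, by the previous paragraph, nonempty. Fix any $(\bar{x}^1_1, \ldots, \bar{x}^1_m) \in U$; its coordinates are automatically pairwise distinct, and $\{K(\cdot, \bar{x}^1_j)\}_{j=1}^m$ is a basis of $H$.

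For the second tuple, let $E = \bigcup_{i,j=1}^{m} \{(x_1, \ldots, x_m) \in \mathcal{X}^m : x_i = \bar{x}^1_j\}$. Each of these $m^2$ closed ``hyperplanes'' is nowhere dense in $\mathcal{X}^m$: in any product-open neighborhood of a point, the $i$-th coordinate can be perturbed away from $\bar{x}^1_j$ because $\mathcal{X}$ has no isolated points, so every point of $\mathcal{X}$ is a limit point. Consequently $U \setminus E$ is a nonempty open set, and any $(\bar{x}^2_1, \ldots, \bar{x}^2_m) \in U \setminus E$ yields the desired second basis of kernel sections disjoint from the first. The main obstacle is precisely this last step --- ensuring non-vanishing of $D$ can coexist with the disjointness requirement --- and it is exactly the ``closed, connected, no isolated points'' hypothesis on $\mathcal{X}$ that makes the open set $U \setminus E$ nonempty.
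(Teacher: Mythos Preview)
Your argument is correct and complete. It takes a genuinely different route from the paper's proof. The paper works through the explicit feature-space isomorphisms: it identifies $H$ with the subspace $\hat{\mathcal{M}}=\mathrm{span}\{\boldsymbol{\varphi}(x):x\in\mathcal{X}\}\subset\mathbb{R}^{|\mathcal{I}^1|+|\mathcal{I}^2|}$, chooses the first basis by an iterative complement construction in $\hat{\mathcal{M}}$, and then shows that deleting the finitely many points $\{\bar{x}^1_j\}$ does not shrink the span by a continuity/limit argument on the feature maps (i.e.\ $\tilde{\mathcal{N}}=\mathcal{N}$). Your approach bypasses the feature-space machinery entirely, using the Gram determinant to reduce linear independence to non-vanishing of a continuous function and then a nowhere-dense argument to secure disjointness. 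Your route is shorter and more self-contained; the paper's route, while longer, makes the correspondence $\boldsymbol{\varphi}(\bar{x}^i_k)\leftrightarrow K(\cdot,\bar{x}^i_k)$ through $L_{\mathcal{M}}$ explicit, which is reused later (e.g.\ in Proposition~\ref{Proposition 24}). One small remark: of the three hypotheses ``closed, connected, no isolated points'' you invoke at the end, only the last is actually used in your argument (and, essentially, in the paper's as well).
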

\begin{proof}
Let $\mathbb{R}^{|\mathcal{I}^1| + |\mathcal{I}^2|} = \bar{\mathcal{M}} \oplus \mathcal{N}$, where,  $\bar{\mathcal{M}} = \mathcal{N}^{\perp}$. From Proposition \ref{Proposition 20}, under the isomorphism induced by the basis vectors of $H^{1} \times H^2$, $\mathcal{M}$ and $\bar{\mathcal{M}}$ are isomorphic. Let, $\boldsymbol{\varphi}(x) = [\varphi^{1}_{1}(x), \ldots, \varphi^{1}_{\mathcal{I}^1}(x),\;  \varphi^{2}_{1}(x), \ldots, \varphi^{2}_{\mathcal{I}^1}(x)] \in  \mathbb{R}^{|\mathcal{I}^1| + |\mathcal{I}^2|}, \;x \in \mathcal{X}$. Let $\hat{\mathcal{M}}$ be the span of $\{\boldsymbol{\varphi}(x)\}_{x \in \mathcal{X}}$. From the definition of $\mathcal{N}$, it follows that $\mathcal{N} = \hat{\mathcal{M}}^{\perp}$. Since $\mathcal{N}$ is a closed subspace, $\mathcal{N} = \mathcal{N}^{\perp ^{\perp}} = \bar{\mathcal{M}}^{\perp}$. Thus, $\hat{\mathcal{M}}^{\perp} = \bar{\mathcal{M}}^{\perp}$ which implies that $\hat{\mathcal{M}}^{\perp^{\perp}} = \bar{\mathcal{M}}^{\perp^{\perp}}$. Since $\hat{\mathcal{M}}$ and $\bar{\mathcal{M}}$ are finite dimensional subspaces, they are closed, which implies that $\hat{\mathcal{M}} = \hat{\mathcal{M}}^{\perp^{\perp}} = \bar{\mathcal{M}}^{\perp^{\perp}} =  \bar{\mathcal{M}}$. Hence, $\hat{\mathcal{M}} = \bar{\mathcal{M}}$. Since $H$ is isomorphic to $\mathcal{M}$, it is isomorphic to $\bar{\mathcal{M}}$ and hence to $\hat{\mathcal{M}}$. We choose a basis for $\hat{\mathcal{M}}$ as follows. First, we choose $\bar{x}^1_{1}$ arbitrarily to obtain the first basis vector $\boldsymbol{\varphi}(\bar{x}^1_{1})$. Let $\hat{M} = \hat{M}^{1}_{1} \oplus \text{Span}\Big(\boldsymbol{\varphi}(\bar{x}^1_{1})\Big)$. $\boldsymbol{\varphi}(\bar{x}^1_{2})$ is chosen from $\hat{M}^{1}_{1}$. This process is repeatedly iteratively where $\boldsymbol{\varphi}(\bar{x}^1_{j})$ is chosen from $\hat{\mathcal{M}}^{1}_{j-1}$ with $\hat{\mathcal{M}} = \hat{\mathcal{M}}^{1}_{j-1} \oplus \text{Span}\Big(\{\boldsymbol{\varphi}(\bar{x}^1_{k})\}^{k=j-1}_{k=1}\Big)$ for $j=2, \ldots, m$. Thus, $\{\boldsymbol{\varphi}(\bar{x}^1_{j})\}^{m}_{j=1}$ spans $\hat{\mathcal{M}}$. Using the isomorphism induced by the basis vectors of $H^{1} \times H^2$ from $\hat{\mathcal{M}}$ to $\mathcal{M}$, we note that each $\boldsymbol{\varphi}(\bar{x}^1_{k})$ gets mapped to $\Big( \sum_{j \in  \mathcal{I}^{1}}\varphi^1_{j}(\bar{x}^1_{k})(\varphi^1_{j}(\cdot) \times \theta^{2}) + \sum_{j \in \mathcal{I}^{2}} \varphi^2_{j}(\bar{x}^1_{k})( \theta^{1} \times \varphi^2_{j}(\cdot)) \Big) = \Big( \sum_{j \in  \mathcal{I}^{1}}\varphi^1_{j}(\bar{x}^1_{k})\varphi^1_{j}(\cdot), \sum_{j \in \mathcal{I}^{2}} \varphi^2_{j}(\bar{x}^1_{k})\varphi^2_{j}(\cdot) \Big)$. Invoking the isomorphism $L_{\mathcal{M}}$ from $\mathcal{M}$ to $H$, each $\boldsymbol{\varphi}(\bar{x}^1_{k})$ gets mapped to $\sum_{j \in  \mathcal{I}^{1}}\varphi^1_{j}(\bar{x}^1_{k})\varphi^1_{j}(\cdot) + \sum_{j \in \mathcal{I}^{2}} \varphi^2_{j}(\bar{x}^1_{k})\varphi^2_{j}(\cdot) = K(\cdot,\bar{x}^{1}_{k})$. Thus, $\{ K(\cdot,\bar{x}^{1}_{j})\}^{m}_{j=1}$ spans $H$. Let $\tilde{\mathcal{M}}$ be the span of $\{\boldsymbol{\varphi}(x)\}_{x \in \mathcal{X} \sim \{\bar{x}^{1}_{j}\}^{m}_{j=1}}$ and $\tilde{\mathcal{N}}$ be defined as 
\begin{align*}
\tilde{\mathcal{N}}=\Big\{\Big(\boldsymbol{\alpha^1}, \boldsymbol{\alpha^2} \Big) \in \mathbb{R}^{|\mathcal{I}^1| + |\mathcal{I}^2|}: \sum_{j \in \mathcal{I}^1}\alpha^1_j \varphi^{1}_{j}(x) + \sum_{j \in \mathcal{I}^2}\alpha^2_j \varphi^{2}_{j}(x) &= 0, \forall x \in  \mathcal{X} \sim \{\bar{x}^{1}_{j}\}^{m}_{j=1},\\
&\boldsymbol{\alpha^i} = \Big(\alpha^{i}_{1}, \ldots, \alpha^{i}_{|\mathcal{I}^i|}\Big)\Big\},  i=1,2.
\end{align*}
Clearly, $\mathcal{N} \subset \tilde{\mathcal{N}}$. Suppose $\mathcal{X}$ is closed, connected subset of $\mathbb{R}^d$ without isolated points. Let $\{x_{n}\} \subset \mathcal{X} \sim \{\bar{x}^{1}_{j}\}^{m}_{j=1}$ be a sequence such that it converges to one of the $\bar{x}^{1}_{j}$. Suppose $\Big(\boldsymbol{\alpha^1}, \boldsymbol{\alpha^2} \Big)  \in \tilde{\mathcal{N}}$. Then, 
\begin{align*}
\sum_{j \in \mathcal{I}^1}\alpha^1_j \varphi^{1}_{j}(x_n) + \sum_{j \in \mathcal{I}^2}\alpha^2_j \varphi^{2}_{j}(x_n) = 0,  \forall n \implies \underset{n \to \infty}\lim \sum_{j \in \mathcal{I}^1}\alpha^1_j \varphi^{1}_{j}(x_n) + \sum_{j \in \mathcal{I}^2}\alpha^2_j \varphi^{2}_{j}(x_n) = 0
\end{align*}
By continuity of the feature maps the above implies, $\sum_{j \in \mathcal{I}^1}\alpha^1_j \varphi^{1}_{j}(\bar{x}^{1}_{j}) + \sum_{j \in \mathcal{I}^2}\alpha^2_j \varphi^{2}_{j}(\bar{x}^{1}_{j}) = 0$. Since the same argument can be presented for all  $\{ \bar{x}^{1}_{j}\}^{m}_{j=1}$, this implies that $\Big(\boldsymbol{\alpha^1}, \boldsymbol{\alpha^2} \Big)  \in \mathcal{N}$. Thus, $\mathcal{N} = \tilde{\mathcal{N}}$ and $\hat{\mathcal{M}} =\tilde{\mathcal{M}}$. Since there is no loss in dimensionality, $\{\boldsymbol{\varphi}(\bar{x}^{2}_{j})\}^{m}_{j=1}$ can be chosen using exactly the same procedure described to choose $\{\boldsymbol{\varphi}(\bar{x}^{1}_{j})\}^{m}_{j=1}$, however using $\tilde{\mathcal{M}}$ in the place of $\hat{\mathcal{M}}$ leading to the construction of $\{K(\cdot,\bar{x}^{2}_{j})\}^{m}_{j=1}$ which spans $H$.
\end{proof}
We recall the following results for continuity and completeness of the paper. 
\begin{lemma}\label{Lemma 22}
Given the RKHS, $(H, \langle \cdot,\cdot \rangle_{H})$, with kernel $K(\cdot,\cdot)$ and the kernels $K^{i}(\cdot,\cdot),\;i=1,2$, such that $K(x,y) = K^{1}(x,y) + K^{2}(x,y)$, we define operators, $\bar{L}^{i}: H \to H$, as
\begin{align*}
\bar{L}^{i}(f)(x) =\langle f(\cdot), K^{i}(\cdot,x) \rangle_{H}, \text{ for}, i=1,2.
\end{align*}
Then, $\bar{L}^{i}$ is linear, symmetric, positive and bounded, $|| \bar{L}^{i} ||\leq 1$.
\end{lemma}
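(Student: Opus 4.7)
The plan is to verify each property by reducing to the spanning set $\{K(\cdot,y):y\in\mathcal{X}\}$ of $H$, which spans $H$ by Proposition \ref{Proposition 21}. The key preliminary identity I would establish is
\[
\bar{L}^i\bigl(K(\cdot,y)\bigr) = K^i(\cdot,y),
\]
obtained by evaluating at any $x\in\mathcal{X}$ and applying the reproducing property of $K$ in $H$, using that $K^i(\cdot,x)\in H^i\subset H$:
\[
\bar{L}^i\bigl(K(\cdot,y)\bigr)(x) = \langle K(\cdot,y),K^i(\cdot,x)\rangle_H = K^i(y,x) = K^i(x,y).
\]
This shows both that $\bar{L}^i$ actually maps $H$ into $H$ (after extending linearly to the span) and yields, as a bonus, the decomposition $\bar{L}^1+\bar{L}^2=I_H$, since $K^1(\cdot,y)+K^2(\cdot,y)=K(\cdot,y)$ on the spanning set. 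Linearity of $\bar{L}^i$ itself is immediate from linearity of the inner product in the first slot.

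For symmetry, I would check the bilinear form on pairs of spanning vectors:
\[
\langle \bar{L}^i K(\cdot,y),K(\cdot,z)\rangle_H = \langle K^i(\cdot,y),K(\cdot,z)\rangle_H = K^i(z,y),
\]
which is symmetric in $y,z$ by symmetry of the kernel $K^i$; bilinearity then extends $\langle \bar{L}^i f,g\rangle_H = \langle f,\bar{L}^i g\rangle_H$ to all of $H$. For positivity, writing an arbitrary $f=\sum_j\alpha_jK(\cdot,y_j)\in H$ and using the preliminary identity gives
\[
\langle \bar{L}^i f,f\rangle_H = \sum_{j,k}\alpha_j\alpha_k\,\langle K^i(\cdot,y_j),K(\cdot,y_k)\rangle_H = \sum_{j,k}\alpha_j\alpha_k\, K^i(y_k,y_j)\ge 0,
\]
since the Gram matrix of the reproducing kernel $K^i$ is positive semidefinite.

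Finally, for the norm bound, I would combine positivity with $\bar{L}^1+\bar{L}^2=I_H$: since $\bar{L}^{3-i}\ge 0$ and $I-\bar{L}^i=\bar{L}^{3-i}$, we have $\bar{L}^i\le I$ in the operator order, and for a positive self-adjoint operator this is equivalent to $\|\bar{L}^i\|\le 1$. The main obstacle is notational rather than technical: one must carefully distinguish $\langle\cdot,\cdot\rangle_{H^i}$ from $\langle\cdot,\cdot\rangle_H$, because $K^i(\cdot,x)$ lives in both $H^i$ and $H$ but the defining formula for $\bar{L}^i$ uses only the $H$-inner product. Once that bookkeeping is respected, the reproducing property of $K$ in $H$ does essentially all the work.
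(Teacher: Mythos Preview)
The paper does not actually supply a proof of Lemma~\ref{Lemma 22}: it is stated in the Appendix as a result ``recalled'' from the companion manuscript \cite{raghavan2024distributed}, with no argument given here. So there is nothing in the present paper to compare your proof against.

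Your argument itself is correct. The key identity $\bar{L}^{i}\bigl(K(\cdot,y)\bigr)=K^{i}(\cdot,y)$ follows exactly as you say from the reproducing property of $K$ in $H$ applied to $K^{i}(\cdot,x)\in H^{i}\subset H$, and together with linearity it immediately gives the pleasant decomposition $\bar{L}^{1}+\bar{L}^{2}=I_{H}$. Symmetry and positivity then reduce to the symmetry and positive semidefiniteness of the kernel $K^{i}$, checked on the spanning set $\{K(\cdot,y)\}$; since $H$ is finite dimensional in this paper (and Proposition~\ref{Proposition 21} even furnishes a finite basis of this form), extending by bilinearity from the spanning set to all of $H$ is unproblematic. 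The norm bound via $0\le \bar{L}^{i}\le I_{H}$ is the cleanest possible route: for a positive self-adjoint operator $T$ one has $\|T\|=\sup_{\|f\|=1}\langle Tf,f\rangle$, so $T\le I$ gives $\|T\|\le 1$ directly.

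One small remark on presentation: when you write ``after extending linearly to the span'', note that $\bar{L}^{i}$ is already defined on all of $H$ by the formula, and linearity is automatic from the first slot of the inner product; what the spanning-set computation actually buys you is that the \emph{range} lies in $H$ (indeed in $H^{i}$). That is a cosmetic point, not a gap.
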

\begin{theorem}\label{Theorem 23}
The linear space  $\bar{H}^{i} = \{g : g = \sqrt{\bar{L}^i}(f), f\in H \}$ is a RKHS with kernel $K^i$.  $\sqrt{\bar{L}^i}(\cdot)$ establishes an isometric isomorphism between $\mathcal{N}\big(\sqrt{\bar{L}^i}\big)^{\perp}$ and $\bar{H}^{i}$, and the norm, $||f||_{\bar{H}^{i}} = ||g||_{H}$,  where $f = \sqrt{\bar{L}^i} g,  g\in \mathcal{N}\big(\sqrt{\bar{L}^i}\big)^{\perp}$. The downloading operator from the fusion space $H$ to  agent $i$'s knowledge space, $H^i$, is $\sqrt{\bar{L}^{i}} \circ \Pi_{\mathcal{N}\big(\sqrt{\bar{L}^i}\big)^{\perp}}$. The downloading operator is linear and bounded.
\end{theorem}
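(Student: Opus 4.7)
The plan is to decompose the proof into three stages: establishing the Hilbert space structure on $\bar{H}^i$ via the isometric isomorphism, verifying that $K^i$ is the reproducing kernel, and finally deducing the properties of the downloading operator.

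First, I would invoke Lemma \ref{Lemma 22} to note that $\bar{L}^i$ is bounded, positive and self-adjoint on $H$, so its positive square root $\sqrt{\bar{L}^i}$ exists, is bounded and self-adjoint, and satisfies $\mathcal{N}(\sqrt{\bar{L}^i}) = \mathcal{N}(\bar{L}^i)$. The restriction $\sqrt{\bar{L}^i}\big\vert_{\mathcal{N}(\sqrt{\bar{L}^i})^\perp}$ is then injective and surjects onto $\bar{H}^i$ by definition, so setting $\|\sqrt{\bar{L}^i}(g)\|_{\bar{H}^i} := \|g\|_H$ for $g \in \mathcal{N}(\sqrt{\bar{L}^i})^\perp$ (with the polarized inner product) gives a Hilbert space structure on $\bar{H}^i$ and makes the restriction an isometric isomorphism. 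Completeness transfers automatically because $\mathcal{N}(\sqrt{\bar{L}^i})^\perp$ is a closed subspace of the Hilbert space $H$.

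The heart of the argument is identifying the reproducing kernel. The key computation uses the reproducing property of $K$ in $H$:
\begin{align*}
\bar{L}^i(K(\cdot, y))(x) = \langle K(\cdot, y), K^i(\cdot, x) \rangle_H = K^i(y,x) = K^i(x,y),
\end{align*}
hence $\bar{L}^i(K(\cdot,y)) = K^i(\cdot,y)$, which gives $K^i(\cdot,y) = \sqrt{\bar{L}^i}\bigl(\sqrt{\bar{L}^i}(K(\cdot,y))\bigr)$ and so $K^i(\cdot,y) \in \bar{H}^i$. Setting $g_y := \Pi_{\mathcal{N}(\sqrt{\bar{L}^i})^\perp}\bigl(\sqrt{\bar{L}^i}(K(\cdot,y))\bigr)$, this is the unique preimage of $K^i(\cdot,y)$ under $\sqrt{\bar{L}^i}\big\vert_{\mathcal{N}(\sqrt{\bar{L}^i})^\perp}$. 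To check the reproducing property, I would chain together the isometry, orthogonality of $g$ to $\mathcal{N}(\sqrt{\bar{L}^i})$, and self-adjointness of $\sqrt{\bar{L}^i}$: for $f = \sqrt{\bar{L}^i}(g) \in \bar{H}^i$ with $g \in \mathcal{N}(\sqrt{\bar{L}^i})^\perp$,
\begin{align*}
\langle f, K^i(\cdot,x) \rangle_{\bar{H}^i} = \langle g, g_x \rangle_H = \langle g, \sqrt{\bar{L}^i}(K(\cdot,x)) \rangle_H = \langle \sqrt{\bar{L}^i}(g), K(\cdot,x) \rangle_H = f(x).
\end{align*}

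Finally, because a kernel uniquely determines its RKHS, $\bar{H}^i = H^i$ with matching norms, and the downloading operator $\sqrt{\bar{L}^i} \circ \Pi_{\mathcal{N}(\sqrt{\bar{L}^i})^\perp} : H \to H^i$ is a composition of a bounded orthogonal projection (norm $1$) with the bounded operator $\sqrt{\bar{L}^i}$ (norm at most $1$ by Lemma \ref{Lemma 22}), hence linear and bounded with norm at most $1$. The main obstacle I anticipate is the bookkeeping around the projection: $\sqrt{\bar{L}^i}(K(\cdot,x))$ does not generally lie in $\mathcal{N}(\sqrt{\bar{L}^i})^\perp$, so substituting it for $g_x$ inside inner products requires consistent use of the fact that any component in $\mathcal{N}(\sqrt{\bar{L}^i})$ is annihilated against $g \in \mathcal{N}(\sqrt{\bar{L}^i})^\perp$; handling this carefully is what makes the middle equality in the reproducing-property chain work out.
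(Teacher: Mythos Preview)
The paper does not actually prove Theorem~\ref{Theorem 23}: it is only stated in the Appendix, with the authors noting that ``the Theorems and the associated corollaries invoked in the construction of the knowledge spaces have been proved and discussed in detail in our previous work, \cite{raghavan2024distributed}, which is under review. Hence we only state the results in the Appendix.'' There is therefore nothing in this paper to compare your argument against.

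That said, your proposal is correct and is the standard route to this result (it is essentially the argument one finds in Aronszajn's original construction of the RKHS associated to a positive operator). One remark on the obstacle you anticipate: it is not actually an obstacle. Since $\sqrt{\bar{L}^i}$ is self-adjoint, its range is automatically contained in $\mathcal{N}(\sqrt{\bar{L}^i})^\perp$ (for any bounded self-adjoint $A$ one has $\overline{\mathrm{Ran}(A)} = \mathcal{N}(A)^\perp$, and in the finite-dimensional setting of this paper the range is already closed). Hence $\sqrt{\bar{L}^i}(K(\cdot,x))$ already lies in $\mathcal{N}(\sqrt{\bar{L}^i})^\perp$, so $g_x = \sqrt{\bar{L}^i}(K(\cdot,x))$ without any projection, and the middle equality in your reproducing-property chain holds trivially. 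Your projection-based argument is still valid, just more cautious than necessary.
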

\begin{proposition}\label{Proposition 24}
The uploaded and downloaded function at Agent $i$ can be expressed uniquely as a linear combination of $\{K^{i}(\cdot, \bar{x}^{i}_j)\}^{m}_{j=1}$.
\end{proposition}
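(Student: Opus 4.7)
The plan is to induct on the iteration index $n$, initializing $\bar{f}^{i}_{0}$ as any function of the form $\sum_{j=1}^{m}\bar{\alpha}^{i}_{0,j}K^{i}(\cdot,\bar{x}^{i}_{j})$ (e.g.\ the zero function). The upload step is then essentially immediate: if $\bar{f}^{i}_{n-1}$ already has this form, Proposition \ref{Proposition 2} gives the closed-form solution $f^{i}_{n}=\boldsymbol{\alpha^{1,*T}_{n}}\mathbf{\bar{K}^{i}}(\cdot)$, which is by construction a linear combination of $\{K^{i}(\cdot,\bar{x}^{i}_{j})\}_{j=1}^{m}$, preserving the invariant.

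The downloaded function $\bar{f}^{i}_{n}=\sqrt{\bar{L}^{i}}\circ \Pi_{\mathcal{N}(\sqrt{\bar{L}^{i}})^{\perp}}(f_{n})$ is the delicate part. By Theorem \ref{Theorem 23}, $\bar{f}^{i}_{n}$ lies in $\bar{H}^{i}$, which is an RKHS with reproducing kernel $K^{i}$; by the Moore--Aronszajn uniqueness of the RKHS associated with a positive semidefinite kernel, $\bar{H}^{i}=H^{i}$ as Hilbert spaces of functions on $\mathcal{X}$, hence $\bar{f}^{i}_{n}\in H^{i}$. It therefore suffices to show that $\{K^{i}(\cdot,\bar{x}^{i}_{j})\}_{j=1}^{m}$ spans $H^{i}$. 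This is the main obstacle: the points $\{\bar{x}^{i}_{j}\}_{j=1}^{m}$ were designed in Proposition \ref{Proposition 21} so that $\{K(\cdot,\bar{x}^{i}_{j})\}_{j=1}^{m}$ spans the larger fusion space $H$ via the sum kernel $K$, and one must transfer the spanning property to the individual knowledge space $H^{i}$ with its own kernel $K^{i}$.

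I handle this via the evaluation operator $T^{i}:H^{i}\to\mathbb{R}^{m}$, $T^{i}(f)=[f(\bar{x}^{i}_{1}),\ldots,f(\bar{x}^{i}_{m})]^{T}$, whose Hilbert-space adjoint satisfies $T^{i*}(e_{j})=K^{i}(\cdot,\bar{x}^{i}_{j})$ by the reproducing property in $H^{i}$. Then $\mathrm{range}(T^{i*})=\mathrm{null}(T^{i})^{\perp}$, so the desired spanning is equivalent to injectivity of $T^{i}$. To prove this injectivity, suppose $f=\sum_{k\in\mathcal{I}^{i}}\beta_{k}\varphi^{i}_{k}$ vanishes at every $\bar{x}^{i}_{j}$; embed $\boldsymbol{\beta}$ into $\tilde{\boldsymbol{\beta}}\in\mathbb{R}^{|\mathcal{I}^{1}|+|\mathcal{I}^{2}|}$ by padding zeros in the slots belonging to the other agent, so that $\tilde{\boldsymbol{\beta}}^{T}\boldsymbol{\varphi}(\bar{x}^{i}_{j})=f(\bar{x}^{i}_{j})=0$ for every $j$. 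Hence $\tilde{\boldsymbol{\beta}}\perp\hat{\mathcal{M}}$, and Proposition \ref{Proposition 21} gives $\hat{\mathcal{M}}^{\perp}=\mathcal{N}$, so $\tilde{\boldsymbol{\beta}}\in\mathcal{N}$; the definition of $\mathcal{N}$ together with the standing linear-independence assumption on the features $\{\varphi^{i}_{k}\}_{k\in\mathcal{I}^{i}}$ then forces $\boldsymbol{\beta}=0$. Therefore $T^{i}$ is injective, $\{K^{i}(\cdot,\bar{x}^{i}_{j})\}_{j=1}^{m}$ spans $H^{i}$, and $\bar{f}^{i}_{n}$ admits the claimed expansion.

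Uniqueness of the coefficients holds whenever the Gram matrix $\mathbf{K^{i}}$ is invertible (equivalently, the spanning set is a basis and $m=\dim H^{i}$); in the overcomplete regime $m>\dim H^{i}$, where $\{K^{i}(\cdot,\bar{x}^{i}_{j})\}_{j=1}^{m}$ is merely a spanning set, ``uniquely'' should be read as referring to the specific coefficients the algorithm produces, namely $\boldsymbol{\alpha^{1,*}_{n}}$ from Proposition \ref{Proposition 2} for the upload and the coordinates obtained by coordinate computation after the download operator is applied to the $f_{n}$ from Proposition \ref{Proposition 5}.
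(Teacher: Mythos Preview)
Your argument is correct but follows a genuinely different route from the paper. The paper proves that $\{K^{i}(\cdot,\bar{x}^{i}_{j})\}_{j=1}^{m}$ spans $H^{i}$ via the spectral decomposition of the symmetric operator $\bar{L}^{i}$: writing each eigenvector $\bar{\varphi}^{i}_{k}$ of $\bar{L}^{i}$ as $\sum_{j}a^{i}_{k,j}K(\cdot,\bar{x}^{i}_{j})$ (Proposition~\ref{Proposition 21}) and applying $\bar{L}^{i}$, which sends $K(\cdot,x)$ to $K^{i}(\cdot,x)$, one gets $\lambda^{i}_{k}\bar{\varphi}^{i}_{k}=\sum_{j}a^{i}_{k,j}K^{i}(\cdot,\bar{x}^{i}_{j})$; since the eigenvectors with $\lambda^{i}_{k}\neq 0$ span $H^{i}$ (Theorem~\ref{Theorem 23}), spanning follows. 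You instead obtain spanning from the duality $\mathrm{range}(T^{i*})=\mathrm{null}(T^{i})^{\perp}$ for the evaluation operator, reducing to injectivity of $T^{i}$ and handling that by embedding into $\mathbb{R}^{|\mathcal{I}^{1}|+|\mathcal{I}^{2}|}$ and invoking $\hat{\mathcal{M}}^{\perp}=\mathcal{N}$. Your route is cleaner as a pure existence argument and avoids any spectral machinery; the paper's route, however, is constructive and delivers the explicit download coefficients $\bar{\alpha}^{i}_{n,j}=\sum_{k:\lambda^{i}_{k}\neq 0}b^{i}_{n,k}a^{i}_{k,j}/\sqrt{\lambda^{i}_{k}}$, which is precisely what the algorithm needs in practice and what the paper uses immediately afterward. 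Your caveat on uniqueness is well taken: since $m=\dim H\geq\dim H^{i}=|\mathcal{I}^{i}|$ with strict inequality in the generic case, the family $\{K^{i}(\cdot,\bar{x}^{i}_{j})\}_{j=1}^{m}$ is typically overcomplete and the coefficients are not unique in the literal sense; the paper is silent on this point.
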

\begin{proof}
Since $\bar{L}^{i}$ is symmetric, from the spectral theorem, it follows that (i) the eigenvectors of $\bar{L}^{i}$, $\{\bar{\varphi}^{i}_j\}^{m}_{j=1}$  are an orthonormal basis for $H$; (ii) the eigenvalues of $\bar{L}^{i}$, $\{\lambda^{i}_{j}\}^{m}_{j=1}$, are real. The square root of operator $\bar{L}^{i}$, is defined as $\sqrt{\bar{L}^i}\big(\bar{\varphi}^{i}_j\big) = \sqrt{\lambda^{i}_{j}}\bar{\varphi}^{i}_j$. Any $f \in H$, specifically the fused function at stage $n$, $f_{n}$ is first expressed using the eigen vectors of $\bar{L}^{i}$ as $f_{n} = \sum^{m}_{k=1}b^{i}_{n,k}\bar{\varphi}^{i}_k$. Invoking Proposition \ref{Proposition 21}, $\bar{\varphi}^{i}_k = \sum^{m}_{j=1}a^{i}_{k,j}K(\cdot, \bar{x}^{i}_{j})$. This implies that, $\bar{L}^{i}\big(\bar{\varphi}^{i}_k\big) = \sum^{m}_{j=1}a^{i}_{k,j}\bar{L}^{i}\big(K(\cdot, \bar{x}^{i}_{j})\big) = \sum^{m}_{j=1}a^{i}_{k,j}K^{i}(\cdot, \bar{x}^{i}_{j})$. Since $\bar{L}^{i}(\bar{\varphi}^{i}_k) = \lambda^{i}_{k}\bar{\varphi}^{i}_k$, it follows that $\bar{\varphi}^{i}_k = \frac{1}{\lambda^i_k}\sum^{m}_{j=1}a^{i}_{k,j}K^{i}(\cdot, \bar{x}^{i}_{j})$, $\lambda^{i}_{k} \neq 0$. From Theorem \ref{Theorem 23}, we note that, $\{\bar{\varphi}^{i}_j\}_{\lambda^{i}_{j} \neq 0}$ spans $H^{i}$. Hence, any vector $f^{i} \in H^i$, $f^{i} =  \underset{k: \lambda^{i}_{j} \neq 0}\sum c^{i}_{k}\sum^{m}_{j=1}a^{i}_{k,j}K^{i}(\cdot, \bar{x}^{i}_{j})$, which implies that $\{K^{i}(\cdot, \bar{x}^{i}_{j})\}^m_{j=1}$ spans $H^{i}$. Thus, the uploaded function can be expressed uniquely as a linear combination of  $\{K^{i}(\cdot, \bar{x}^{i}_j)\}^{m}_{j=1}$.\\
For $\lambda^{i}_{k} \neq 0$, $\sqrt{\bar{L}^i}\big(\bar{\varphi}^{i}_k\big) = \frac{1}{\sqrt{\lambda^{i}_{k}}}\bar{L}^{i}(\bar{\varphi}^{i}_k)  =  \frac{1}{\sqrt{\lambda^{i}_{k}}} \Big(  \sum^{m}_{j=1}a^{i}_{k,j}K^{i}(\cdot, \bar{x}^{i}_{j}) \Big)$. From Theorem \ref{Theorem 23}, the function downloaded on to the knowledge space of the agent $i$ at stage $n$, $\bar{f}^{i}_{n}$, is 
\begin{align*}
\sqrt{\bar{L}^{i}} \circ \Pi_{\mathcal{N}\big(\sqrt{\bar{L}^i}\big)^{\perp}} (f_{n}) = \underset{k: \lambda^{i}_{k} \neq 0} \sum b^{i}_{n,k} \sqrt{\bar{L}^i}\big(\bar{\varphi}^{i}_k\big)  &= \underset{k: \lambda^{i}_{k} \neq 0} \sum \frac{b^{i}_{n,k}}{\sqrt{\lambda^{i}_{k}}} \Big(  \sum^{m}_{j=1}a^{i}_{k,j}K^{i}(\cdot, \bar{x}^{i}_{j}) \Big)  \\
&= \sum^{m}_{j=1} \Big(  \underset{k: \lambda^{i}_{k} \neq 0} \sum   \frac{b^{i}_{n,k} a^{i}_{k,j}}{\sqrt{\lambda^{i}_{k}}}\Big) K^{i}(\cdot, \bar{x}^{i}_{j})
\end{align*}
Thus, downloading the fused function is equivalent to the fusion center transmitting the vector $\Big\{\underset{k: \lambda^{i}_{k} \neq 0} \sum   \frac{b^{i}_{n,k} a^{i}_{k,j}}{\sqrt{\lambda^{i}_{k}}}\Big\}^{m}_{j=1}$ to agent $i$. 
\end{proof}
To obtain the closed form expression for the downloaded function, we note that it is crucial to express the fused function in terms of the eigenvectors of $\bar{L}^{i}$. If not, if $f_{n}$ is expressed directly using the basis vectors $\{K(\cdot,\bar{x}^{i}_{j})\}^{m}_{j=1}$, then computation of $\sqrt{\bar{L^{i}}} \big(K(\cdot, \bar{x}^{i}_{j})\big)$ is not straight forward and $\sqrt{\bar{L}^{i}}(f) = \frac{1}{\sqrt{\lambda}}\bar{L}^{i}(f)$ if only if $f$ is an eigenvector of $\bar{L}^{i}$. 
\acks{Research supported by the Swedish Research Council (VR), Swedish Foundation for Strategic Research (SSF),  and the Knut and Alice Wallenberg Foundation.}
\bibliography{biblio}
\end{document}